
%
%
%
%
%
\begin{filecontents*}{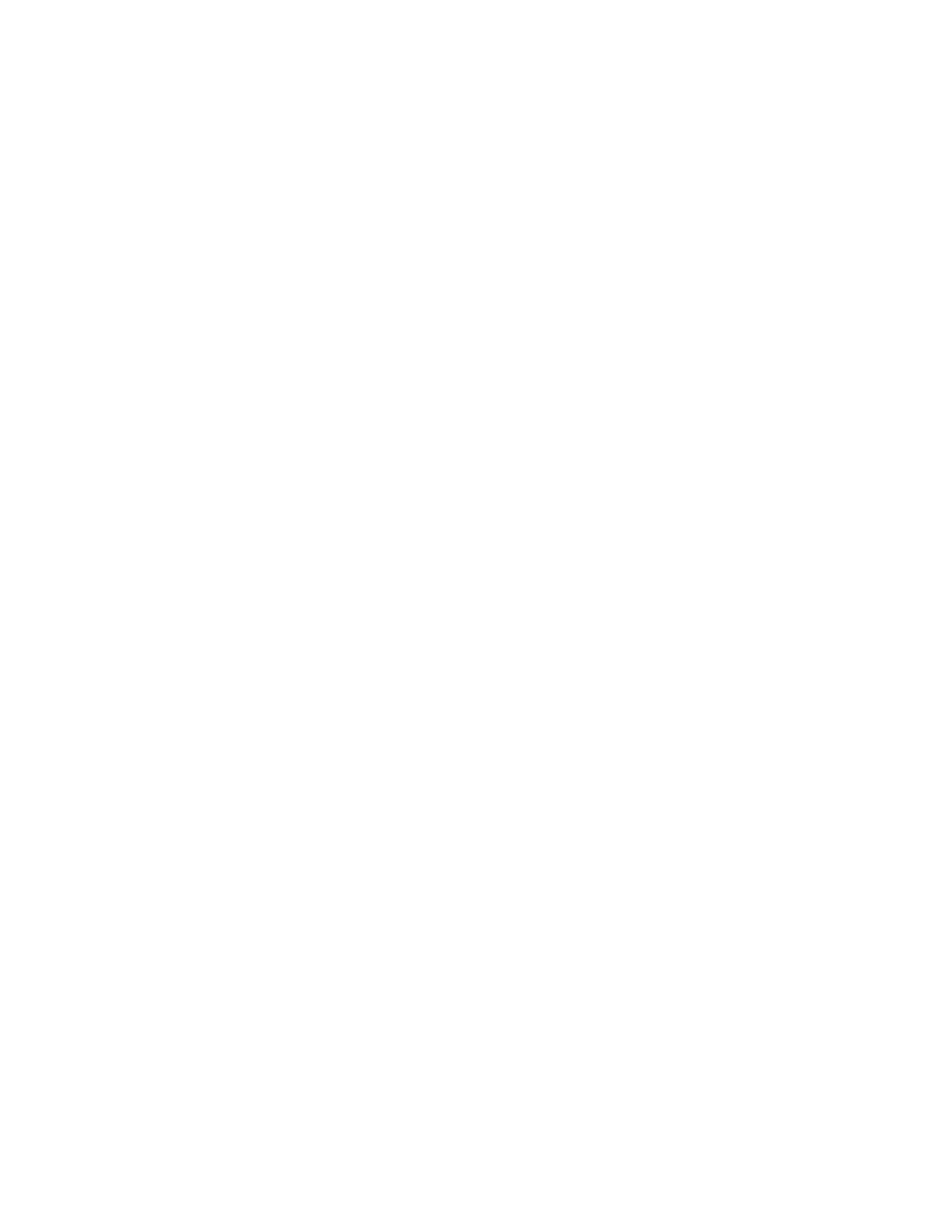}
%
%
\documentclass[smallextended]{svjour3}       
\smartqed  
\usepackage{graphicx}
%
%
%
%
%

\usepackage{amsmath}
\usepackage[linesnumbered]{algorithm2e}
\usepackage{color}

\newcommand{\old}[1]{{}}

\def\eps{\varepsilon}
\def\floor#1{\left\lfloor#1\right\rfloor}
\def\ceil#1{\left\lceil#1\right\rceil}

\begin{document}

\title{Geometric Hitting Set for Segments of Few Orientations}


\author{
S\'andor P. Fekete \and 
Kan Huang \and \\
Joseph S. B. Mitchell \and
Ojas Parekh \and
Cynthia A. Phillips 
}

\authorrunning{S.~P.~Fekete, K.~Huang, J.~S.~B.~Mitchell, O.~Parekh and C.~A.~Phillips}  

\institute{
S.~P.~Fekete \at
TU Braunschweig, Braunschweig, Germany \\
\email{s.fekete@tu-bs.de}
\and
K. Huang \at
Stony Brook University, Stony Brook, NY, USA \\
\email{khuang@ams.stonybrook.edu}
\and 
J.~S.~B.~Mitchell \at
Stony Brook University, Stony Brook, NY, USA \\
\email{jsbm@ams.stonybrook.edu}
\and
O.~Parekh \at 
Sandia National Labs, Albuquerque, NM, USA \\
\email{odparek@sandia.gov}
\and 
C.~A.~Phillips \at
Sandia National Labs, Albuquerque, NM, USA \\
\email{caphill@sandia.gov}
}

\date{}

\maketitle

\begin{abstract}
We study several natural instances of the geometric hitting set
problem for input consisting of sets of line segments (and rays,
lines) having a small number of distinct slopes.  These problems
model path monitoring (e.g., on road
networks) using the fewest sensors (the ``hitting points'').
We give approximation algorithms for cases including
(i) lines of 3
slopes in the plane, (ii) vertical lines and horizontal segments,
(iii) pairs of horizontal/vertical segments.  We give
hardness and hardness of approximation results for these problems.  We
prove that the hitting set problem for vertical lines and horizontal
rays is polynomially solvable.
\keywords{set cover \and hitting set \and approximation algorithms}
\end{abstract}

\section{Introduction} \label{sec:intro}

A fundamental problem in combinatorial optimization is the {\em set cover problem},
in which we are given a collection, ${\cal C}$,
of subsets of a set $U$, of elements, and our goal is to find a minimum-cardinality subset of
${\cal C}$ whose union covers $U$.
The set cover problem is NP-hard and has an $O(\log
n)$-approximation algorithm, which is 
best possible in the worst case
(unless $P=NP$, ~\cite{dinur2014analytical}).  
Equivalently, set cover can be     
cast as a {\em hitting set problem}: given a collection, ${\cal C}$,
of subsets of set $U$, find a smallest cardinality set $H\subseteq U$
such that every set in ${\cal C}$ contains at least one element of $H$.
Numerous special instances of set cover/hitting set have been studied.
Our focus in this paper is on geometric instances that arise in
covering (hitting) sets of (possibly overlapping) line segments using
the fewest points (``hit points'').  A closely related problem is
the ``Guarding a Set of Segments'' (GSS)
problem~\cite{brimkov2013approximability,brimkov2011guarding,brimkov2012approximation,joshi2014approximation}, in which
the segments may cross arbitrarily, but do not
overlap.  Since this problem is strongly
NP-complete~\cite{brimkov2011guarding} in general, our focus is on
special cases, primarily those in which the segments come from a small
number of orientations (e.g., horizontal, vertical).  We provide several new
results on hardness and approximation algorithms.

We are motivated by the path monitoring problem: 
given a set of trajectories, each a path of line segments in the plane,
place the fewest sensors (points) to observe (hit) all trajectories.
To gain theoretical insight into this challenging problem, we examine
cleaner, but progressively harder, versions of hitting
trajectory/line-like objects with points.  If the trajectories are
on a Manhattan road network, the paths are (possibly
overlapping) horizontal/vertical segments.  
Alternatively, one wishes to place the fewest vendors or
service stations in a road network to service a set of customer trajectories.

\paragraph{Our results.} 
We give complexity and approximation results for several geometric hitting set problems on inputs $S$ of line
``segments'' of special classes, mostly of fixed orientations.  The
segments are allowed to overlap arbitrarily. We consider various cases
of ``segments'' that may be bounded (line segments), semi-infinite
(rays), or unbounded in both directions (lines).  Our results
are:

(1) Hitting lines of 3 slopes in the plane is NP-hard (greedy is optimal for 2 slopes).
For set cover with set size at most 3,
standard analysis of the greedy
algorithm gives an approximation factor of $H(3)=1+(1/2)+(1/3)=(11/6)$, and there is a 4/3-approximation based on semi-local optimization~\cite{duh1997approximation}.  We prove that the greedy algorithm in this special geometric case is a ($7/5$)-approximation.

(2) Hitting vertical lines and horizontal rays is polytime solvable.

(3) Hitting vertical lines and horizontal (even unit-length) segments
is NP-hard.
Our proof shows hitting horizontal and
vertical unit-length segments is also NP-hard.
 We prove APX-hardness for hitting horizontal and vertical segments.

(4) Hitting vertical lines and horizontal segments has a
$(5/3)$-approximation algorithm.  (This problem has a straightforward 2-approximation.)

(5) Hitting pairs of horizontal/vertical segments has a
4-approximation.  Hitting pairs having one vertical and one
horizontal segment has a ($10/3$)-approx\-imation.  These results are based
on LP-rounding.  More generally, hitting sets
of $k$ segments from $r$ orientations has a
$(k \cdot r)$-approximation algorithm.

(6) We give a linear-time combinatorial 3-approximation algorithm
for hitting triangle-free
sets of (non-overlapping) segments.  Recently Joshi and Narayanaswamy~\cite{joshi2014approximation} gave
a 3-approx\-imation for this version of GSS 
using linear programming.

\paragraph{Related Work.}
There is a wealth of related work on geometric set cover and hitting
set problems; we do not attempt here to give an exhaustive survey.
The {\em point line cover} (PLC) problem (see
\cite{heednacram2010np,kratsch2014point}) asks for a smallest set of
lines to cover a given set of points; it is equivalent, via point-line
duality, to the hitting problem for a set of lines.  The PLC (and thus
the hitting problem for lines) was shown to be
NP-hard~\cite{megiddo1982complexity}; in fact, it is
APX-hard~\cite{DBLP:conf/cccg/BrodenHN01} and Max-SNP
Hard~\cite{kumar2000hardness}.  The problem has an $O(\log
OPT)$-approximation (e.g., greedy -- see
\cite{kovaleva2006approximation}); in fact, the greedy algorithm for
PLC has worst-case performance ratio $\Omega(\log
n)$~\cite{DBLP:journals/comgeo/DumitrescuJ15}.  Afshani et
al.~\cite{DBLP:conf/compgeom/AfshaniBDN16} have studied exact and
parameterized algorithms, giving an $O^*(2^n)$ time algorithm that
uses polynomial space, and an $O^*((Ck/\log k)^{(d-1)k})$ time
algorithm to hit $n$ lines with the minimum number of points or at
most $k$ points.  (Here, $O^*(\cdot)$ indicates that polynomial
factors of $n$ are hidden.)

Hassin and Megiddo~\cite{hassin1991approximation} considered
hitting geometric objects with the fewest
lines having a small number of distinct slopes.  They observed that,
even for covering with axis-parallel lines, the greedy algorithm has
an approximation ratio that grows logarithmically.  They gave
approximations for the problem of hitting horizontal/vertical segments
with the fewest axis-parallel lines (and, more generally, with lines
of a few slopes).
Gaur and Bhattacharya~\cite{gaur2007covering} consider covering points
with axis-parallel lines in $d$-dimensions. They give a
$(d-1)$-approximation based on rounding the corresponding linear
program (LP).  Many other stabbing problems (find a small
set of lines that stab a given set of objects) have been studied; see,
e.g.,
\cite{dom2009parameterized,even2008algorithms,gaur2000constant,giannopoulos2013fixed,kovaleva2006approximation,langerman2005covering}.

A recent paper~\cite{joshi2014approximation} gives a 3-approximation
for hitting sets of ``triangle-free''segments. Brimkov et
al.~\cite{brimkov2013approximability,brimkov2011guarding,brimkov2012approximation}
have studied the hitting set problem on line segments, including
various special cases; they refer to the problem as ``Guarding a Set
of Segments,'' or GSS. GSS is a special case of the ``art
gallery problem:'' place a small number of
``guards'' (e.g., points) so that every point within a geometric
domain is ``seen'' by at least one
guard~\cite{o-agta-87,urrutia2000art}.  Brimkov et
al.~\cite{brimkov2010experimental} provide experimental results for
three GSS heuristics, including two variants of ``greedy,''
showing that in practice the algorithms perform well and are often
optimal or very close to optimal.  They prove, however, that, in
theory, the methods do not provide worst-case constant-factor
approximation bounds.  For the special case that the segments are
``almost tree (1)'' (a connected graph is an {\em almost tree ($k$)}
if each biconnected component has at most $k$ edges not in a spanning
tree of the component), a $(2-\eps)$-approximation is
known~\cite{brimkov2013approximability}.

An important distinction between GSS and our problems
is that we allow {\em overlapping} (or partially overlapping) segments (rays, and lines),
while, in GSS, each line segment is maximal in the
input set of line segments (the union of two distinct input segments
is not a segment).  A special case of our problem is
{\em interval stabbing} on a line: Given a set of segments
(intervals), arbitrarily overlapping on a line, find a smallest
hitting set of points that hit all segments.  A simple sweep along
the line solves this problem optimally: when a segment ends, place
a point and remove all segments covered by that point.  

If no point lies within three or more objects, then the hitting
set problem is an edge cover problem in the
intersection graph of the objects.  In particular, if no three
segments pass through a common point, the problem can be solved
optimally in polynomial time.
(This implies that in an arrangement of ``random'' segments, the GSS
problem is almost surely polynomially solvable; see
\cite{brimkov2013approximability}.)

Hitting axis-aligned rectangles is related to
hitting horizontal and vertical segments.
Aronov, Ezra, and Sharir~\cite{aronov2010small} provide an $O(\log\log
OPT)$-approximation for hitting set for axis-aligned rectangles (and
axis-aligned boxes in 3D), by proving a bound of
$O(\eps^{-1} \log\log (\eps^{-1}))$ on the $\eps$-net size
of the corresponding range space.
The connection between hitting sets and
$\eps$-nets~\cite{bg-aoscf-95,clarkson1993algorithms,cv-iaagsc-07,even2005hitting}
implies a $c$-approximation for hitting set if one can compute an
$\eps$-net of size $c/\eps$; recent major advances~\cite{alon2012non,pach2013tight} on lower bounds on
$\eps$-nets imply that associated range spaces (rectangles and points,
lines and points, points and rectangles) have $\eps$-nets of size
superlinear in $1/\eps$.
Remarkably, improved $(1+\eps)$-approximation algorithms
(i.e., PTASs) for certain geometric hitting set and set cover
problems are possible with simple local search. For example, Mustafa
and Ray~\cite{mustafa2010improved} give a local search
PTAS for computing a smallest subset, of a given set of disks, that
covers a given set of points.  Hochbaum and Maas~\cite{hm-ascpp-85}
used grid shifting to obtain a much earlier PTAS for the
minimum unit disk cover problem when disks can be placed
anywhere in the plane, not restricted to a discrete input set.

\section{Hitting Segments}
\label{sec:seg}
Suppose $S$ is a set of $n$ line segments in the plane.
If all segments are horizontal, then we can compute an optimal hitting
set by independently solving the interval stabbing problem
along each of the horizontal lines determined by the input.
The time required is $O(n\log n)$, used to sort the segment endpoints along their containing line(s).

If the segments are of two different orientations (slopes), then the
problem becomes significantly harder.  
By applying an affine transformation to the segments $S$ if necessary, we can, without loss of generality,
assume the segments are horizontal and vertical.
We show the problem is hard even if the axis-parallel segments are all the
same length. This result (Corollary~\ref{cor:unit_npc}) is
a consequence of an even stronger result, Theorem~\ref{th:rays},
which we establish in Section~\ref{sec:mix}.

By solving optimally
each of the two (or $k$) orientations, and using the union of the hitting
points for both (or all $k$), we obtain:

\begin{theorem}
For a set $S$ of $n$ line segments having $k$ different orientations (slopes) in the plane,
we can compute, in time $O(n\log n)$, a $k$-approximation for the optimal hitting set.
\end{theorem}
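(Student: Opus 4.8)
The plan is to decompose the instance according to slope and reduce to the single-orientation case that the excerpt has already shown to be polynomially solvable. First I would partition $S$ into subsets $S_1,\dots,S_k$, where $S_i$ consists of exactly those segments sharing the $i$-th slope. For each class $S_i$, applying an affine (in fact, a rotation) transformation as discussed above sends all of its members to horizontal segments without changing the combinatorial hitting structure; a point hits a transformed segment if and only if its preimage hits the original. Thus each $S_i$ reduces to the all-horizontal case, which can be solved exactly by independently running the interval-stabbing sweep along each supporting line. Let $H_i$ denote the resulting optimal hitting set for $S_i$, and output $H=\bigcup_{i=1}^{k}H_i$ (pulling points back through the inverse transformation). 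Since $H$ contains, for every class, a set of points hitting all of that class's segments, $H$ is a feasible hitting set for all of $S$.

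The approximation bound rests on a single observation: the global optimum, restricted to any one orientation class, is itself a feasible (though not necessarily optimal) hitting set for that class. Concretely, let $\mathrm{OPT}$ be an optimal hitting set for $S$. Because $\mathrm{OPT}$ hits every segment of $S$, it in particular hits every segment of $S_i$, so $\mathrm{OPT}$ is a feasible solution for the subproblem on $S_i$. Optimality of $H_i$ then gives $|H_i|\le|\mathrm{OPT}|$ for each $i$. Summing over the $k$ classes and using subadditivity of cardinality under union yields
\[
|H|\;\le\;\sum_{i=1}^{k}|H_i|\;\le\;\sum_{i=1}^{k}|\mathrm{OPT}|\;=\;k\,|\mathrm{OPT}|,
\]
which is exactly the claimed $k$-approximation guarantee.

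For the running time, I would argue that the $k$ single-slope solves share a single global sort. Grouping the $n$ segments by slope and sorting all endpoints takes $O(n\log n)$; thereafter each interval-stabbing sweep is linear in the number of segments (and endpoints) it processes, and these counts sum to $O(n)$ across all classes, so the total is $O(n\log n)$ independent of $k$. The only point requiring care — and the closest thing to an obstacle — is confirming that decomposing by orientation loses nothing in feasibility or optimality: taking a union can never miss a segment (each segment lives in exactly one class and is hit there), and the per-class lower bound $|H_i|\le|\mathrm{OPT}|$ needs the fact that distinct orientation classes do not interact, which holds precisely because hitting a class's segments depends only on that class. I expect this to be routine rather than delicate, so the proof should be short.
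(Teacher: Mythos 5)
Your proposal is correct and follows exactly the paper's approach: the paper also partitions the segments by orientation, solves each class optimally via the interval-stabbing sweep, takes the union, and implicitly relies on the same per-class bound $|H_i|\le|\mathrm{OPT}|$ to obtain the factor $k$. Your write-up simply makes explicit the feasibility and lower-bound steps that the paper leaves as a one-line remark.
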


\section{Hitting Lines}
\label{sec:lin}
When  $S$ is a set of $n$ lines in the
plane, greedy gives an $O(\log OPT)$ approximation
factor; any
approximation factor better than logarithmic would be quite
interesting.  (See \cite{DBLP:journals/comgeo/DumitrescuJ15,kratsch2014point}.)
If the lines have only 2 slopes, then an optimal algorithm is given by 
the greedy selection of hitting points: Add to the hitting set
(initially empty) any point at the intersection of two unhit lines; if
no such point exists, and there are still unhit lines, then add to the
hitting set a point on an unhit line.  
(For, say, $n_h$ horizontal lines and $n_v\geq n_h$ horizontal lines,
the greedy algorithm uses $n_v$ hitting points, first selecting $n_h$
points that each hit two previously unhit lines (one horizontal, one
vertical), in any order, then selecting $n_v-n_h$ additional hitting
points, in any order, each hitting a single unhit vertical line.)

\subsection{Hardness of Hitting Lines of 3 Slopes in 2D}

We prove that the hitting set problem is NP-hard when the set $S$ of
input lines have more than two slopes.  In particular, we show below that
the problem, {\sc 3-Slope-Line-Cover} (3SLC), of computing a
minimum-cardinality hitting set for a set $S$ of lines having three
distinct slopes is NP-hard.  (The corresponding decision problem is NP-complete.)

\begin{theorem}
\label{h3ls}
The problem 3SLC is NP-complete.
\end{theorem}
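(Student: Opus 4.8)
The plan is to prove both membership in NP (the easy part) and NP-hardness via a reduction from a tripartite matching problem. For membership, I would observe that some optimal hitting set is supported on the vertices of the line arrangement of $S$: any hit point can be slid along a line until it meets a second line, and a point that already hits a maximal subset of lines may be left at an intersection. Hence a candidate solution is a set of at most $|S|$ arrangement vertices, each with coordinates of polynomial bit-length, and verifying that every line is hit and that the cardinality meets the target takes polynomial time.

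For hardness I would first normalize the geometry. Since a linear transformation acts $3$-transitively on the set of directions, an affine map carries any three slopes to the horizontal, vertical, and slope-one (diagonal) directions while preserving all point--line incidences; so without loss of generality $S$ consists of lines $y=a$, $x=b$, and $y-x=d$. The crucial structural fact is then that the three lines $y=a$, $x=b$, $y-x=d$ are concurrent exactly when $d=a-b$: a single point can hit one line from each family only along this additive relation, and otherwise hits at most two lines. Consequently, any hitting set for a collection of $q$ horizontal, $q$ vertical, and $q$ diagonal lines has size at least $q$, with equality iff the lines partition into $q$ concurrent triples, i.e.\ iff the ``available triples'' $\{(a,b,d): d=a-b\}$ contain a perfect tripartite matching. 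I would therefore reduce \textsc{3-Dimensional Matching}: given parts $X,Y,Z$ and hyperedges $E$, introduce one horizontal line per element of $X$, one vertical line per element of $Y$, one diagonal line per element of $Z$, and choose intercepts $a_x,b_y,c_z$ so that $a_x-b_y=c_z$ holds precisely for $(x,y,z)\in E$.

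The forward direction is then immediate: a perfect matching of size $q$ yields $q$ concurrency points, hence a hitting set of size exactly $q$. The backward direction rests on the counting bound above together with a normalization (exchange) argument: from any size-$q$ hitting set I would argue that every chosen point must hit three input lines, hence lies at one of the intended concurrency points, and that these $q$ points realize $q$ disjoint hyperedges, recovering a perfect matching.

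The hard part will be the geometric realizability of a prescribed hyperedge set under the rigid relation $d=a-b$. Choosing the $a_x$ and $b_y$ generically makes all differences $a_x-b_y$ distinct, which forces each $c_z$ into at most one triple and so realizes only hypergraphs in which every $Z$-element has degree one --- far too weak. To obtain higher-degree elements one must deliberately engineer coincidences $a_x-b_y=a_{x'}-b_{y'}$ exactly along the desired edges, that is, control the additive energy of $(A,B)$, while simultaneously ensuring that no \emph{unintended} concurrency arises, since a spurious triple point would create extra savings and break the backward direction. I expect the genuine technical content of the proof to be precisely this embedding --- either designing the intercept values directly, or restricting to a structured but still-NP-hard subfamily of \textsc{3-Dimensional Matching} whose required difference pattern is attainable --- after which the counting and exchange arguments close the reduction.
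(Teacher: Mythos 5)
Your NP-membership argument and your counting/exchange framework are sound, but the reduction has a genuine gap, and it sits exactly where you flagged it: the embedding step is not merely ``technical content'' to be filled in later --- for general \textsc{3-Dimensional Matching} the required intercepts need not exist at all. The relation $a_x - b_y = c_z$ is a rigid additive constraint, and an arbitrary edge set $E$ is generally not realizable as $\{(x,y,z) : a_x - b_y = c_z\}$. Concretely, if two hyperedges $(x,y,z)$ and $(x',y,z)$ share the pair $(y,z)$, then $a_x - b_y = c_z = a_{x'} - b_y$ forces $a_x = a_{x'}$, i.e., the horizontal lines for $x$ and $x'$ coincide and the instance collapses. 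More generally, whenever the constraint vector of a non-edge lies in the linear span of the edge-constraint vectors, the spurious concurrency is forced no matter how the intercepts are chosen, which destroys the backward direction of your reduction. Restricting attention to 3DM instances free of such dependencies would then require a separate proof that this subfamily remains NP-hard, which is precisely what is missing. A clean repair exists: reduce instead from \textsc{Numerical 3-Dimensional Matching} (strongly NP-hard), in which the admissible triples are by definition $\{(a,b,c) : a+b+c = t\}$; taking horizontal lines $y = a_i$, vertical lines $x = -b_j$, and diagonal lines $y - x = t - c_k$ makes concurrency coincide exactly with the numerical constraint, and strong NP-hardness keeps all coordinates of polynomial size. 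Your counting and exchange arguments then close the proof as you describe.

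For comparison, the paper avoids this realizability problem entirely by taking a different route: it reduces from 3SAT in the dual formulation (covering points lying on three vertical lines with the fewest lines), building explicit variable gadgets (two interleaved independent families of 3-lines encoding true/false) and clause gadgets, and then arguing via a counting argument about the total number of covering lines. Crucially, in that construction unintended concurrencies are avoided because each new gadget point may be placed anywhere in a continuum on its vertical line, so only a discrete set of bad placements must be dodged --- exactly the placement freedom that your rigid additive encoding lacks. Your approach (once routed through the numerical matching problem) is arguably more direct and yields instances with a transparent algebraic structure, whereas the paper's gadget construction trades that elegance for the flexibility that makes correctness easy to verify.
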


\begin{proof}
For convenience, we recast the 3SLC problem
into its equivalent dual formulation: Find a minimum-cardinality set
of non-vertical lines to cover a set $P$ of points (duals to the set
$S$ of lines), which are known to lie on three vertical lines.  Here,
we are using the notion of ``point-line'' duality, in which a point
$p=(a,b)$ in the ``primal plane'' has a corresponding (non-vertical)
line, $p^*: y=ax-b$, in the ``dual plane,'' and a non-vertical line
$L: y=mx+b$ in the ``primal plane'' has a corresponding point,
$L^*=(m,-b)$, in the ``dual plane.''  Point-line duality is a
one-to-one mapping between points and non-vertical lines, preserving
incidence and order; any statement about points and lines is mapped,
via duality, to an equivalent statement about lines and points.  Since
the lines of $S$ in our 3SLC instance have three slopes, their dual
points $P$ have $x$-coordinates among three possible values and thus lie
on three vertical lines.  (See \cite{bcko} for background and
applications of point-line duality in computational geometry.)
Our reduction is from 3SAT.  Let $n$ and $m$ denote the numbers of
variables and clauses respectively.  From an instance of 3SAT we
create an instance, $P$, of the (dual formulation) of a 3SLC problem,
as follows.  The points $P$ are distributed on three vertical lines,
denoted $l_1$, $l_2$ and $l_3$, from left to right.

We use the following terminology.  If a line $l$ covers $i$ points, we say that $l$ is an {\em
  $i$-line}. Let $P_l$ denote the set of points of $P$ that are
  covered by line $l$.  If $P_{l_1} \cap P_{l_2}
  = \emptyset$, then we say that lines $l_1$ and $l_2$ are {\em
  independent}.  A set $L$ of lines is {\em independent} if the lines
  are pairwise independent; i.e., $l_i\in L$ and $l_j\in L$ are
  independent for every $l_i\neq l_j$.

A variable gadget for a variable $u$ has $m$ points ($a[1],
a[2],\ldots,a[m]$) on line $l_1$, $m$ points ($b[1],
b[2], \ldots,b[m]$) on line $l_2$ and $2m$ points ($u[1],
u[2], \ldots, u[m]$, $\bar u[1], \bar u[2], \ldots, \bar u[m]$),
corresponding to the variables and their negations, on line $l_3$.
The points are placed so that through each point $b[i]$ on line $l_2$
there are exactly two 3-lines, a ``red'' line passing through $b[i]$
and $u[i]$ and one of the points $a[j]$ on line $l_1$, and a ``blue''
line passing through $b[i]$ and $\bar u[i]$ and one of the points
$a[j]$ on line $l_1$.  Figure~\ref{fig:vg} shows a variable gadget for
variable $u$ in a $4$-clause ($m=4$) instance. Points on the two lines
$l_1$ and $l_2$ can be hit by either of two sets of independent lines,
the set of red lines or the set of blue lines. These represent the
``True'' or ``False'' setting of the variable~$u$ respectively.  We
add the variable gadgets one by one onto the three lines $l_1$, $l_2$,
and $l_3$ so that all 3-lines are within variable gadgets (there are
no other triples of of colinear points, with one per vertical line).

The clause gadgets link variable gadgets.  The $i$th clause has a
single point $c[i]$ on line $l_1$. Point $c[i]$ is connected by three
``green'' line segments to the negations of the literals in its clause on line
$l_3$. Figure~\ref{fig:cg} shows the gadget for clause $c[i]
= \bar{u} \vee v \vee \bar{w}$.  We then include three additional
points in the clause gadget, namely the three points, $d[i,1]$,
$d[i,2]$, and $d[i,3]$, where these three line segments incident on
$c[i]$ cross line $l_2$.
We pick the locations of the points $c[i]$ so that the addition of
these four new points ($c[i], d[i,1], d[i,2], d[i,3]$) create no new
3-lines other than the ``green'' lines associated with the clause
gadget.  (This is easily done, since the creation of a new 3-line
would require that $c[i]$ be placed at one of a discrete set of
possible locations on $l_1$; since there is a continuum of possible
locations on $l_1$, the discrete locations can be avoided.)

To complete the construction, we place $nm + m$ points on line $l_1$
such that these new points are not on any $3$-line.  Thus, by
construction, each $3$-line is in a variable gadget or a clause
gadget.  There are $2nm + 2m$ points on line $l_1$, $nm + 3m$ points
on line $l_2$, and $2nm$ points on line $l_3$ for a total of $5nm +
5m$ points on all three lines.

We now argue that the 3SAT formula is satisfiable if and only if the
corresponding (dual formulation) 3SLC instance can be covered by
$2nm+2m$ non-vertical lines.

If the 3SAT formula is satisfiable, there is an independent set of
3-lines in the variable and clause gadgets of size $nm+m$.
Specfically, there are $nm$ independent $3$-lines corresponding to the
truth assignment for the $n$ variables. Since the formula is
satisfiable, for each clause, the $3$-line corresponding to the
negation of one correctly set variable is independent of the variable
truth assignment points, and can be part of an independent set.  This
gives the other $m$ members of the independent set.  There are then
$nm+m$ points on $l_1$, $2m$ points on $l_2$ and $nm-m$ points on
$l_3$ not part of this maximum independent set.  Each of the $nm + m$
remaining points on line $l_1$ can be paired with one of the remaining
$nm + m$ points on either line $l_2$ or $l_3$.  Thus, all points are
covered with $(nm + m)$ $3$-lines and $(nm + m)$ $2$-lines for a total
of $2nm + 2m$ lines.

If the instance can be covered with $2nm + 2m$ lines, then, by
construction, $nm + m$ of these lines must be $2$-lines involving the
points on line $l_1$ that are not in any clause or variable gadgets.
These $2$-lines can cover $2nm + 2m$ points, leaving $3nm + 3m$ total
points to be covered by the remaining $nm + m$ lines. Thus, the
remaining lines must all be independent $3$-lines.  $nm$ of these
correspond to a truth assignment for the variables.  The remaining $m$
must come from clause gadgets.  Thus, each clause gadget has a
$3$-line compatible with the truth assignment and the 3SAT instance is
satisfiable.
\qed
\end{proof}

\begin{figure}[htpb]
\centering
\begin{minipage}[htb]{0.45\linewidth}
  \centering
  \includegraphics[scale=0.5]{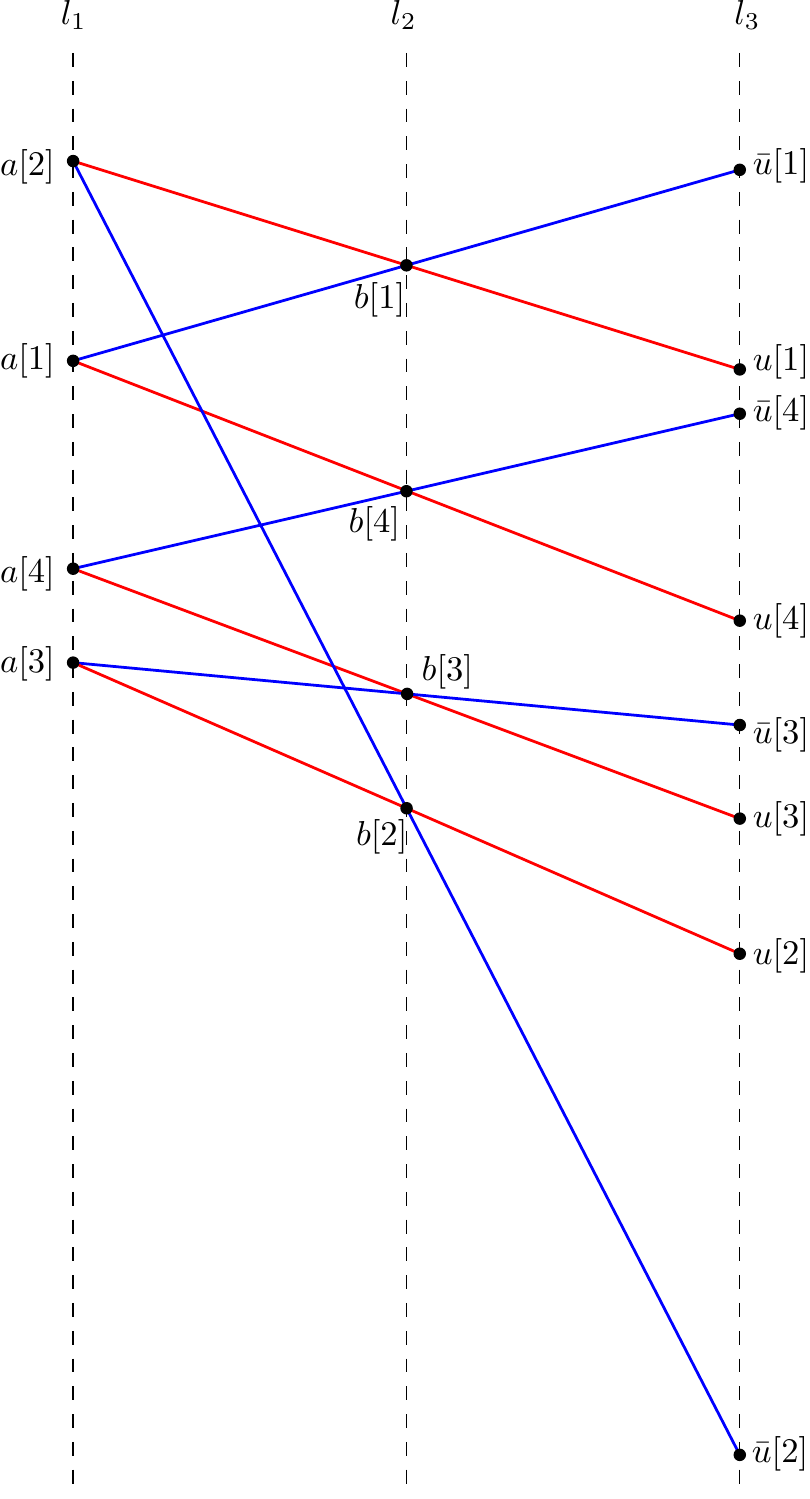}
\caption{A variable gadget.}
\label{fig:vg}
\end{minipage}
\quad
\begin{minipage}[htb]{0.45\linewidth}
  \centering
  \includegraphics[scale=0.6]{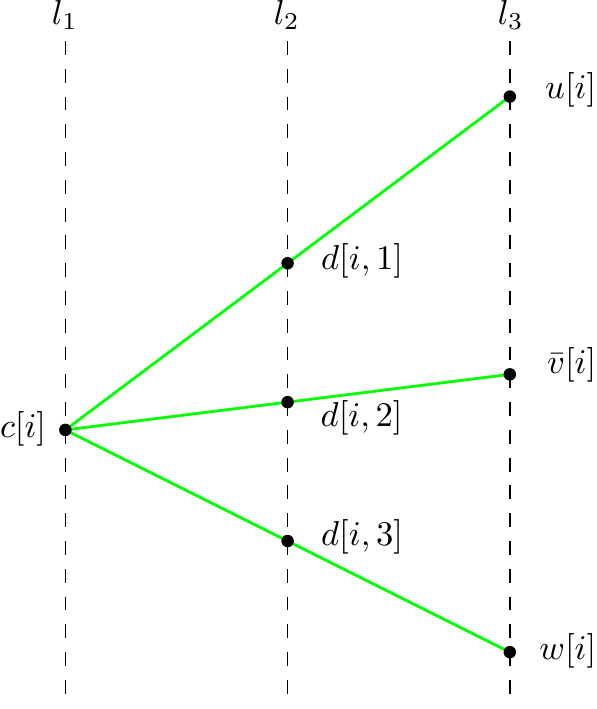}
\caption{A clause gadget.}
\label{fig:cg}
\end{minipage}

\end{figure}

\subsection{Analysis of the Greedy Hitting Set Algorithm for Lines of 3 Slopes in 2D}

If no point lies in more than $k$ sets, the greedy algorithm's
approximation factor is $H(k) = \sum_{i=1}^k
(1/i)$~\cite{chvatal1979greedy}. 
This property holds for lines of $3$ slopes with $k=3$, giving a greedy
approximation factor $H(3)=11/6$.  
We give a new analysis, exploiting the special geometric
structure of the hitting set problem for lines of 3 slopes, to obtain
an approximation factor of $7/5$.

Let $x$, $y$ and $z$ be the number of lines in each of the three
slopes in the plane. Without loss of generality, we assume that
$x\geq y \geq z>0$.

We call a point where at least two lines meet a vertex.  A vertex is a
3-intersection if three lines, one from each orientation, meet at that
point.  Otherwise it is a 2-intersection.  If there are no
3-intersections, then all vertices are 2-intersections.  We claim then
that the greedy algorithm is optimal, with the following specification
of how to break ties: As long as there is a vertex hitting two unhit
lines, pick any one of them that hits the two most populous sets
(slopes) of unhit lines, remove the newly hit lines, and repeat.
Since, in our notation, we assume that $x\geq y \geq z>0$, this means
that, with the selection of the next vertex, $x$ and $y$ each go down
by exactly one; we then update the labels $x$, $y$, and $z$ (since it
could be that the population, $z$, of the third slope category is now
greater than one or both of the updated populations, $x-1$ and $y-1$),
and repeat the process.  Once there are lines of at most one slope
(i.e., $x\geq 0$, $y=z=0$), the greedy algorithm is forced to place one
hit point along each of the $x$ lines.
Let $OPT_2(x,y,z)$ denote the minimum number of hit points for a
3-slope instance with no 3-intersections, and $x\geq y\geq z$ lines of
each of the three distinct slopes.

\begin{lemma}
\label{lem:2-intersections-only}
The above greedy algorithm yields an optimal hitting set when applied
to an input set of lines of 3 slopes no three of which pass through a
common point.  Further, the optimal number of hit points is given by
$$
OPT_2(x,y,z) = \left\{ \begin{array}{ll}
x & \mbox{  if $x \geq y+z$}; \\
x + \ceil{\frac{y+z-x}{2}} = \ceil{\frac{x+y+z}{2}} & \mbox{  if $x < y+z$}.
\end{array}
\right.
$$
\end{lemma}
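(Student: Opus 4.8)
The plan is to recast this as a minimum edge-cover problem on a complete tripartite graph and then verify that the stated greedy procedure realizes that optimum. First I would extract the two structural consequences of the no-3-intersection hypothesis: any two lines of distinct slopes cross in exactly one point, and no such crossing lies on a third line. Hence every hit point covers either a single line (placed generically on it) or precisely the two lines crossing at some 2-intersection, while two lines of the same slope are parallel and can never share a hit point. Modeling each line as a vertex and each 2-intersection as an edge, the admissible 2-point covers are exactly the edges of the complete tripartite graph $G = K_{x,y,z}$, whose three parts are the slope classes, and the 1-points are ``singleton covers.'' A minimum hitting set is therefore a minimum edge cover of $G$, since singleton covers are only ever needed for vertices that remain unmatched.

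Next I would invoke Gallai's identity: for a graph on $N = x+y+z$ vertices with no isolated vertex, the minimum edge cover has size $N - \nu(G)$, where $\nu(G)$ is the maximum matching. Because $z>0$, all three parts are nonempty, so $G$ has no isolated vertex and it remains to compute $\nu(K_{x,y,z})$. A short argument gives $\nu = y+z$ when $x \ge y+z$ (a maximum matching saturates the two smaller classes into the large class, which has room precisely because $x \ge y+z$) and $\nu = \floor{\frac{x+y+z}{2}}$ when $x < y+z$ (a near-perfect matching exists). Substituting into $N - \nu$ returns exactly the two claimed values $x$ and $\ceil{\frac{x+y+z}{2}}$; this also shows the two branches agree on the boundary $x = y+z$, where each equals $x$.

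It then remains to prove that the described greedy algorithm attains this value, which I would do by induction on $N$ following the recursion ``delete one line from each of the two most populous classes, relabel so $x \ge y \ge z$, and repeat until one class survives, then place one point per surviving line.'' Writing $f(x,y,z)$ for the target formula, the base case $y=z=0$ yields $x$ points directly, and for the inductive step I would check that one greedy move lowers $f$ by exactly one: when $x \ge y+z$ the large class stays largest, or the instance lands on the boundary $x=y+z$, so the value drops from $x$ to $x-1$; when $x < y+z$ the reduced triple stays in the same regime, or again reaches the boundary, so $\ceil{\frac{N}{2}}$ drops by one. Since greedy's count equals $f = N - \nu(G)$, which equals the minimum edge cover and hence the optimal hitting set, greedy is optimal. (Equivalently, one obtains optimality elementarily from two lower bounds: when $x\ge y+z$ the $x$ mutually parallel lines of the largest class each need their own covering point, forcing $\ge x$ points; when $x<y+z$ each point covers at most two lines, forcing $\ge \ceil{\frac{N}{2}}$.)

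The main obstacle I anticipate is the relabeling bookkeeping in the inductive step rather than the conceptual reduction, which is routine: after decrementing the top two classes, the untouched class may become the largest, so I must confirm in each sub-case that the sorted reduced triple falls into the correct regime and that the boundary case $x=y+z$ is treated uniformly by both formula branches. Making that case analysis airtight is where the real care lies, and the agreement of the two branches at $x=y+z$ is exactly what keeps the transitions consistent and guarantees greedy never places singleton points prematurely.
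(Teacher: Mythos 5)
Your proposal is correct, and it takes a genuinely different route from the paper's. The paper's proof is elementary and self-contained: it establishes the two lower bounds $OPT_2 \ge x$ (the $x$ parallel lines of the largest class need distinct hit points) and $OPT_2 \ge \ceil{(x+y+z)/2}$ (no point hits three lines), then shows greedy meets whichever bound is larger by exhibiting the hitting set explicitly and tracking the sorted population vector through three ``stages,'' arguing it terminates at $(1,0,0)$ or $(0,0,0)$ when $x<y+z$. You instead identify the problem as minimum edge cover of the complete tripartite graph $K_{x,y,z}$ (singleton points being replaceable by incident edges when no vertex is isolated), apply Gallai's identity (minimum edge cover $= N - \nu$), and compute $\nu(K_{x,y,z}) = \min\left(y+z, \floor{\frac{N}{2}}\right)$; greedy optimality then follows by induction, each greedy step decreasing the target formula by exactly one. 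Your route buys generality and reusability: the value $\max\left(n_1, \ceil{\frac{N}{2}}\right)$ falls out for any number of slopes (complete multipartite graphs) with no extra work. The cost is importing two standard facts that your sketch asserts rather than proves -- Gallai's identity and the existence of a near-perfect matching in $K_{x,y,z}$ when $x < y+z$ (provable, e.g., by the circular pairing argument) -- whereas the paper needs neither, since its explicit constructions double as the upper bound. One concrete correction to your inductive step: the reduced triple does not always ``stay in the same regime or reach the boundary'' -- from $(1,1,1)$ greedy produces $(1,0,0)$, which lies strictly inside the first regime -- but the induction still closes because the two branches of the formula agree on the slightly wider range $x \le y+z+1$, not just at $x = y+z$. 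This is exactly the bookkeeping you flagged, and stating the agreement on that wider range is what makes the case analysis airtight.
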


\begin{proof}
First, note that an optimal hitting set must use at least $x$ hit
points, since each of the $x$ lines in the most populous slope class
must be hit by a distinct hit point.  Also, since no point hits more
than two lines, we know that an optimal hitting set must use at least
$\ceil{(x+y+z)/2}$ hit points.

If $x\geq y+z$, then $x$ hit points suffice and are optimal: simply
place one hit point on each of the $x$ lines of the first slope class:
for the first $y+z$ (of the $x$) lines in the class, place the hit
points at $y+z$ vertices where the lines of the first slope class
cross lines of the second and third slope classes, thereby hitting all
$y+z$ such lines; once there are only lines of the first slope class
left, place exactly one hit point on each of the remaining $x-(y+z)$
lines of the first slope class.  This results in an optimal hitting
set with $x$ hit points.  The greedy algorithm will produce an optimal
such set, since it will continue to place hit points at crossing
points of the first and second slope class, until the population of
the second class drops to one below that of the third (at which point
the class labels swap), and then continues (with the populations of
the second and third slope classes alternating in which one is
``$y$''), always able to hit a new line of the first slope class,
together with a new line of one of the other slope classes.  The
greedy algorithm continues in this way, placing hit points at
2-intersections, until the second and third slope classes are empty,
and the remaining $x-(y+z)$ hit points must all go on unhit lines of
the first slope class.

If $x\geq y+z$, then $\ceil{(x+y+z)/2}$ hit points suffice and are
optimal: one can always place a hit point at a 2-intersection where
two unhit lines cross, and we claim that this is, in fact, what the
greedy algorithm does.  We argue that at the first moment when there
are no 2-intersections that hit two unhit lines, the population vector
$(x,y,z)$ must be either (1,0,0) or (0,0,0).  To see this, consider
the three stages of the greedy algorithm:
\begin{description}
\item[(i)] 
$x$ and $y$ each go down by 1, $z$ is unchanged; this stage continues
until the population $y$ drops to the value $z-1$, causing the class
labels to swap, since now there are fewer lines in the second slope
class than in the third slope class; stage (ii) takes over.
\item[(ii)] 
$x$ and $y$ each go down by 1, $z$ is unchanged, but now, since $y$
and $z$ are within 1 of each other, the class labels swap back and
forth between the second and third slope class; this stage continues
until the population $x$ drops to within 1 of each of the other two
smaller populations, $y$ and $z$.
\item[(iii)] 
$x$ and $y$ each go down by 1, $z$ is unchanged, but now, since all
three populations $x$, $y$, and $z$, are within 1 of each other, class
labels swap around and the populations stay within 1 of each other,
until finally we reach population vector (1,0,0) or (0,0,0), and there
are no 2-intersection points of unhit lines.
\end{description}
It is easy to check that when $x\geq y+z$, $\ceil{(x+y+z)/2}=x + \ceil{(y+z-x)/2}$.
\qed
\end{proof}

When there are 3-intersections, the greedy algorithm iteratively
places hit points at 3-intersections (in any order), removing the 3
hits lines, until there are no more 3-intersections.  The remaining
instance, with only 2-intersections, now can now be solved optimally
as described above.

\begin{theorem}
The greedy algorithm yields a $\frac{7}{5}$-approximation for hitting lines of 3 slopes in 2D.
\label{theorem:3line_greedy}
\end{theorem}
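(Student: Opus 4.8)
The plan is to compare the greedy algorithm's output to the optimum $OPT$ by decomposing the behavior into two phases: a 3-intersection phase and a 2-intersection phase. Let $t$ denote the number of 3-intersection points greedily selected (each removing one line from each of the three slope classes), and let the residual instance after this phase have population vector $(x', y', z')$, to which Lemma~\ref{lem:2-intersections-only} applies exactly. The greedy cost is then $GREEDY = t + OPT_2(x', y', z')$. The key difficulty is that the optimum $OPT$ may use a very different set of 3-intersections than greedy does, so the first step is to establish a clean lower bound on $OPT$ that is valid regardless of how the optimum chooses its hit points.

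First I would record two structural lower bounds. Since every hit point covers at most three lines, one from each slope, we have $OPT \geq \lceil (x+y+z)/3 \rceil$; and since each of the $x$ lines in the most populous class needs a distinct hit point, $OPT \geq x$. I would then observe that the greedy cost $t + OPT_2(x',y',z')$ can be bounded in terms of the total number of lines $x+y+z$ and the parameter $t$: in the worst case $OPT_2(x',y',z')$ contributes roughly $\lceil (x'+y'+z')/2 \rceil = \lceil (x+y+z-3t)/2\rceil$, so that $GREEDY \approx t + (x+y+z-3t)/2 = (x+y+z)/2 - t/2$. The main obstacle is that $t$ (the number of 3-intersections greedy uses) is itself not controlled by the optimum, and greedy's greedy use of 3-intersections is exactly what can hurt the ratio; so the heart of the argument is to show that \emph{every} 3-intersection greedy consumes is in some sense "forced enough" that the optimum cannot save too much by avoiding it.

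The technical core, which I expect to be the hardest part, is a careful case analysis driven by the two regimes of $OPT_2$. When the residual instance is balanced ($x' < y' + z'$) the greedy cost is governed by the $\lceil (x+y+z)/2 \rceil$-type bound, and comparing this against the two lower bounds $OPT \geq x$ and $OPT \geq \lceil(x+y+z)/3\rceil$ should pin the ratio. When the residual is unbalanced ($x' \geq y'+z'$) the cost is dominated by $x'$, and here the bound $OPT \geq x$ together with how many of the $x$ lines were removed as 3-intersections must be combined. I would set up the worst-case configuration by treating $GREEDY/OPT$ as a function of the integer parameters, relax to a continuous optimization over the ratios of $x,y,z$ and $t$, and verify that the supremum of $GREEDY/OPT$ over all feasible regimes is exactly $7/5$; concretely, I expect the tight instance to occur at a specific balanced configuration where both lower bounds on $OPT$ are simultaneously loose by a controlled amount.

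Finally, I would confirm tightness by exhibiting a small family of instances on which greedy is forced (under its tie-breaking rule of favoring the two most populous classes) to pay $7/5$ times the optimum, using a configuration in which a single badly chosen 3-intersection cascades into a residual instance whose optimal 2-intersection cover is larger than what an optimum avoiding that 3-intersection would achieve. This matching lower-bound instance both certifies that the analysis is tight and rules out any cheaper bound, completing the proof that greedy is a $\tfrac{7}{5}$-approximation.
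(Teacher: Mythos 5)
Your setup matches the paper's in outline (two phases, greedy cost $= t + OPT_2$ of the residual, and the observation that the crux is controlling how much greedy's 3-intersection choices can cost relative to the optimum), but the proposal is missing the one idea that actually makes the bound work, and without it your plan fails quantitatively. The lower bounds you propose for $OPT$, namely $OPT \geq x$ and $OPT \geq \lceil (x+y+z)/3 \rceil$, are too weak: they place no constraint relating your parameter $t$ to the structure of the instance, and in your ``continuous relaxation over $x,y,z$ and $t$'' the supremum of $GREEDY/OPT$ is not $7/5$. Concretely, take $x=y=z$ with the whole instance coverable by $x$ independent 3-intersections, so $OPT = x$ and both of your lower bounds are tight; your framework allows $t$ to be arbitrarily small, giving $GREEDY \approx t + \frac{3(x-t)}{2} = \frac{3x-t}{2}$, whose ratio to $OPT$ tends to $3/2$ as $t \to 0$. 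So the case analysis you describe cannot terminate at $7/5$ unless you first prove a lower bound on $t$.

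The paper supplies exactly this missing ingredient. Define the graph $G$ whose vertices are the 3-intersections of the input, with an edge between two 3-intersections that lie on a common input line. Greedy's selected 3-intersections are pairwise line-disjoint and greedy stops only when no 3-intersection of unhit lines remains, so they form a \emph{maximal} independent set in $G$; meanwhile the optimum can exploit at most a \emph{maximum} independent set, of size $K$, and the paper writes $OPT = K + OPT_2(x-K,y-K,z-K)$. Since each of the $3N_3$ lines hit by greedy's 3-intersections contains at most one vertex of a maximum independent set, $K \leq 3N_3$, i.e., $t = N_3 \geq K/3$. This inequality is precisely what rules out the bad regime above (it caps the ratio at $4/3$ in the balanced case $x=y=z=K$), and it is the quantitative source of the constant: the worst case in the paper's analysis is $K=5$, $N_3=2$, where $1 + \lceil (K-N_3)/2 \rceil / K = 7/5$. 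Your instinct that each greedy 3-intersection must be ``forced enough'' is pointing at this fact, but the proposal never identifies the mechanism (maximal versus maximum independent set, with the factor-3 loss), and no amount of case analysis on $OPT_2$ regimes can substitute for it. Also note that the theorem only requires the upper bound; your final step of exhibiting tight instances is not needed (and the paper does not provide one).
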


\begin{proof}
Consider the graph $G$ whose vertices are the 3-intersections of the
input set of lines, with an edge of $G$ between two vertices if and
only if the corresponding 3-intersections lie on a common input line.  Let
$K$ denote the cardinality of a maximum independent set, $I_{max}$, in
$G$.  Let $N_3$ be the number of 3-intersections that the greedy
algorithm selects.  These $N_3$ vertices correspond to a maximal independent set in $G$, so $K \ge N_3$.
We have, $K\leq 3N_3$, since (by independence) there is at most one vertex of
$I_{max}$ along each of the $3N_3$ lines that the greedy algorithm hits with 3-intersections.

The optimal solution has $OPT$ hit points, with
\begin{align}
   OPT = K + OPT_2(x-K, y-K, z-K)
   \label{}
\end{align}
The greedy solution yields a set of $N_{greedy}$ hit points, with
\begin{align}
   N_{greedy} = N_3 + OPT_2(x-N_3, y-N_3, z-N_3).
   \label{}
\end{align}

First suppose $x=K$, which means that $x=y=z=K$. Then, 
$OPT=K$ and $N_{greedy}= N_3 + OPT_2(K-N_3, K-N_3, K-N_3) = K + \ceil{\frac{K-N_3}{2}}$.
We have three cases, depending on the value of $K$, mod 3:
\begin{enumerate}
    \item $K=3l$, where $l$ is an integer; thus, 
      \begin{align}
        \frac{N_{greedy}}{OPT}\leq 1 + \ceil{\frac{3l-l}{2}}/3l = \frac{4}{3}.
        \label{}
      \end{align}
    \item $K=3l+1$; thus, 
      \begin{align}
        \frac{N_{greedy}}{OPT}\leq 1 + \ceil{\frac{3l+1-(l+1)}{2}}/(3l+1) < \frac{4}{3}.
        \label{}
      \end{align}
    \item $K=3l+2$; thus, 
      \begin{align}
        \frac{N_{greedy}}{OPT} \leq 1 + \ceil{\frac{3l+2 - (l+1)}{2}}/(3l+2) \leq \frac{7}{5}.
        \label{}
      \end{align}
  \end{enumerate}

Now, suppose $x \geq K+1$. Then, we have the following three cases:
  \begin{enumerate}
    \item If $x \geq y + z - N_3$, then $x-N_3 \geq y-N_3 + z-N_3$, and $x-K \geq y-K + z-K$. Thus, 
      \begin{align*}
        OPT = K + x - K = x, \\
        N_{greedy} = N_3 + x - N_3 = x.\\
        \label{}
      \end{align*}

    \item If $y+z-K \leq x < y+z-N_3$, then 
      \begin{align}
        OPT &= x, \nonumber\\
        N_{greedy} &= x + \ceil{\frac{y+z-x-N_3}{2}},\label{Greedy-3}\\
        \frac{N_{greedy}}{OPT} &= 1 + \ceil{\frac{y+z-x-N_3}{2}}/x \leq 1 + \ceil{\frac{K-N_3}{2}}/(K+1) \leq \frac{4}{3}.\nonumber 
      \end{align}
The detailed analysis is similar to the case in which $x=K$. 

   \item If $x< y+z - K$, then 
     \begin{align}
       OPT &= x+ \ceil{\frac{y+z-x-K}{2}} \geq K + 1 + 1 = K+2, \label{OPT-3}\\
       N_{greedy} - OPT & \leq \frac{y+z-x-N_3}{2} + 1 - \frac{y+z-x-K}{2} \leq 1+ \frac{K}{3},\nonumber \\
       \frac{N_{greedy}}{OPT} & = 1 + \frac{N_{greedy}- OPT}{OPT} \leq 1+ \frac{1+K/3}{K+2}.\nonumber
       \label{}
     \end{align}
Thus, 
     \begin{itemize}
       \item when $K=1$, $OPT= N_{greedy}$; 
       \item when $K = 2, 3, 4$, using~(\ref{Greedy-3}) and the first equality in~(\ref{OPT-3}), we have $N_{greedy} - OPT = 1$, so we have
         \begin{align}
           \frac{N_{greedy}}{OPT} \leq 1 + \frac{1}{4} = 1.25;
           \label{}
         \end{align}
       \item when $K\geq 5$, 
         \begin{align*}
           \frac{N_{greedy}}{OPT}\leq 1 + \frac{8}{21} \approx 1.381.
         \end{align*}
     \end{itemize}
  \end{enumerate}
This completes the proof.
\qed
\end{proof}

\subsection{Axis-Parallel Lines in 3D}

While in 2D the hitting set problem for axis-parallel lines is easily
solved, in 3D we prove 
that the corresponding
hitting set problem is NP-hard, using a reduction from 3SAT.

\begin{theorem}
Hitting set for axis-parallel lines in 3D is NP-complete.
\end{theorem}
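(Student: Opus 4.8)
The plan is to reduce from 3SAT, reusing the three-element set-cover structure that drove the planar three-slope hardness of Theorem~\ref{h3ls}. The pivotal observation is that a point $(a,b,c)$ can lie on at most three input lines, at most one parallel to each coordinate axis, so every hit point covers at most three of the lines to be hit; moreover a point covers three lines exactly when the $x$-parallel line through $(b,c)$, the $y$-parallel line through $(a,c)$, and the $z$-parallel line through $(a,b)$ are all present. This concurrency condition is purely combinatorial (it depends only on whether the relevant coordinate pairs appear), which both simplifies the bookkeeping and makes spurious triple intersections avoidable by giving distinct gadgets disjoint coordinate values. Membership in NP is immediate: an optimal hitting set may be assumed to place points only at pairwise line intersections (or, for a line hit in isolation, anywhere on it), and there are only $O(n^2)$ such candidate points, so a hitting set of size $k$ is a polynomial certificate.

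For the variable gadget I would realize the combinatorial $3$-cube $Q_3$ with axis-parallel lines. Choosing coordinate values $a_1,a_2$, $b_1,b_2$, $c_1,c_2$ and including the twelve lines indexed by the edge-directions of the cube --- the four $x$-parallel lines through the pairs $(b_j,c_k)$, the four $y$-parallel lines through $(a_i,c_k)$, and the four $z$-parallel lines through $(a_i,b_j)$ --- makes each of the eight points $(a_i,b_j,c_k)$ a genuine triple intersection and puts the three lines incident to a cube vertex in bijection with the three edges at that vertex. Hitting all twelve lines is then exactly the problem of covering all edges of $Q_3$ by vertices. Since $Q_3$ is bipartite and $3$-regular, K\"onig's theorem gives minimum vertex cover $4$, and a short case check shows that the \emph{only} two minimum vertex covers are the two bipartition classes. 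These two size-$4$ covers encode the True and False settings of the variable, and any cover attaining the local optimum is forced onto one of them.

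The clause gadget links the variable gadgets. Mirroring the $m$-fold structure of the planar construction, I would give each variable one cell per clause in which it occurs, arranged (by an alternation/parity argument across the cells) so that the gadget still has exactly two optimal covers and so that each cell exposes a dedicated \emph{literal line} to its clause. For a clause $C=\ell_1\vee\ell_2\vee\ell_3$ the three literal lines are positioned so that the clause's lines can be absorbed into the global cover at no extra cost precisely when some $\ell_j$ is set true (its literal line already being hit by the satisfying variable cover), and cost one additional hit point otherwise. Setting the target $T$ equal to the sum of the variable-gadget optima plus the base clause cost, the instance is hittable with $T$ points if and only if the formula is satisfiable: the forward direction reads a satisfying assignment off the chosen covers, and the reverse direction uses the rigidity of the variable optima to recover a consistent truth assignment.

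The main obstacle is the clause gadget together with the reverse direction. Because a point lies on at most three axis-parallel lines, no interface point can simultaneously carry its three cube edges \emph{and} an extra clause line, so literal lines cannot simply be routed through saturated cube vertices; instead each cell must be designed so that a clause line coincides with one of the three lines already used by the satisfying cover, which constrains the coordinate layout. The delicate part is proving the reverse implication: I must show that any hitting set of size $T$ decomposes into the intended per-gadget covers, i.e., that no cross-gadget point can undercut the four-per-cell bound. The combinatorial concurrency condition helps here, since with disjoint coordinate values the only triple intersections are the intended ones; the remaining work is to make the counting argument airtight so that globally cheaper but logically inconsistent covers are ruled out.
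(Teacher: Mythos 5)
Your high-level strategy matches the paper's: reduce from 3SAT, build variable gadgets with exactly two optimal hitting sets encoding True/False, and position clause $z$-lines so they are hit for free exactly when the assignment satisfies the clause. Your $Q_3$ analysis is also correct as far as it goes: hitting the twelve cube lines is vertex cover on $Q_3$, the minimum is $4$, and the only minimum covers are the two bipartition classes. The fatal problem is the one you flag yourself and then wave away. In the cube gadget \emph{every} point of every optimal cover is a 3-intersection of gadget lines, one per axis direction, so it is saturated: no clause $z$-line distinct from the gadget's own $z$-lines can pass through any cover point. Your proposed repair --- letting the clause line \emph{coincide} with one of the gadget's lines --- destroys the reduction: a gadget line is, by definition of a hitting set, hit by every feasible solution, in particular by \emph{both} truth-value covers, so such a clause line is hit unconditionally, imposes no constraint on the assignment, and the target $T$ becomes achievable whether or not the formula is satisfiable. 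The same saturation argument rules out adding any auxiliary ``literal line'' to the cube that is hit by one cover but not the other: to be hit for free it must pass through a cover point, and every cover point already carries a $z$-line of the gadget.

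The paper's construction avoids exactly this trap. Its variable gadget is a closed \emph{loop} (an even cycle) of axis-parallel lines whose literal vertices are 2-intersections --- an $x$-line crossing a $y$-line --- so the $z$-direction is left free there; a clause $z$-line through such a vertex is then hit precisely when that vertex belongs to the chosen alternating cover, i.e., precisely when the corresponding literal is true. The even length of the loop supplies the same two-covers rigidity you get from the bipartition of $Q_3$, and detours adjust parity and avoid unwanted intersections (the paper's Property 1 also ensures a clause line meets only its one designated vertex). To salvage your write-up you would have to redesign the gadget so that its cover points are unsaturated in the $z$-direction --- at which point you have essentially re-derived the paper's loop gadget --- and then still carry out the reverse-direction counting argument you defer; as written, the proof is incomplete at its crux.
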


\begin{figure}[htpb]
\centering
\begin{minipage}[b]{0.45\linewidth}
  \centering
  \includegraphics[scale=0.4]{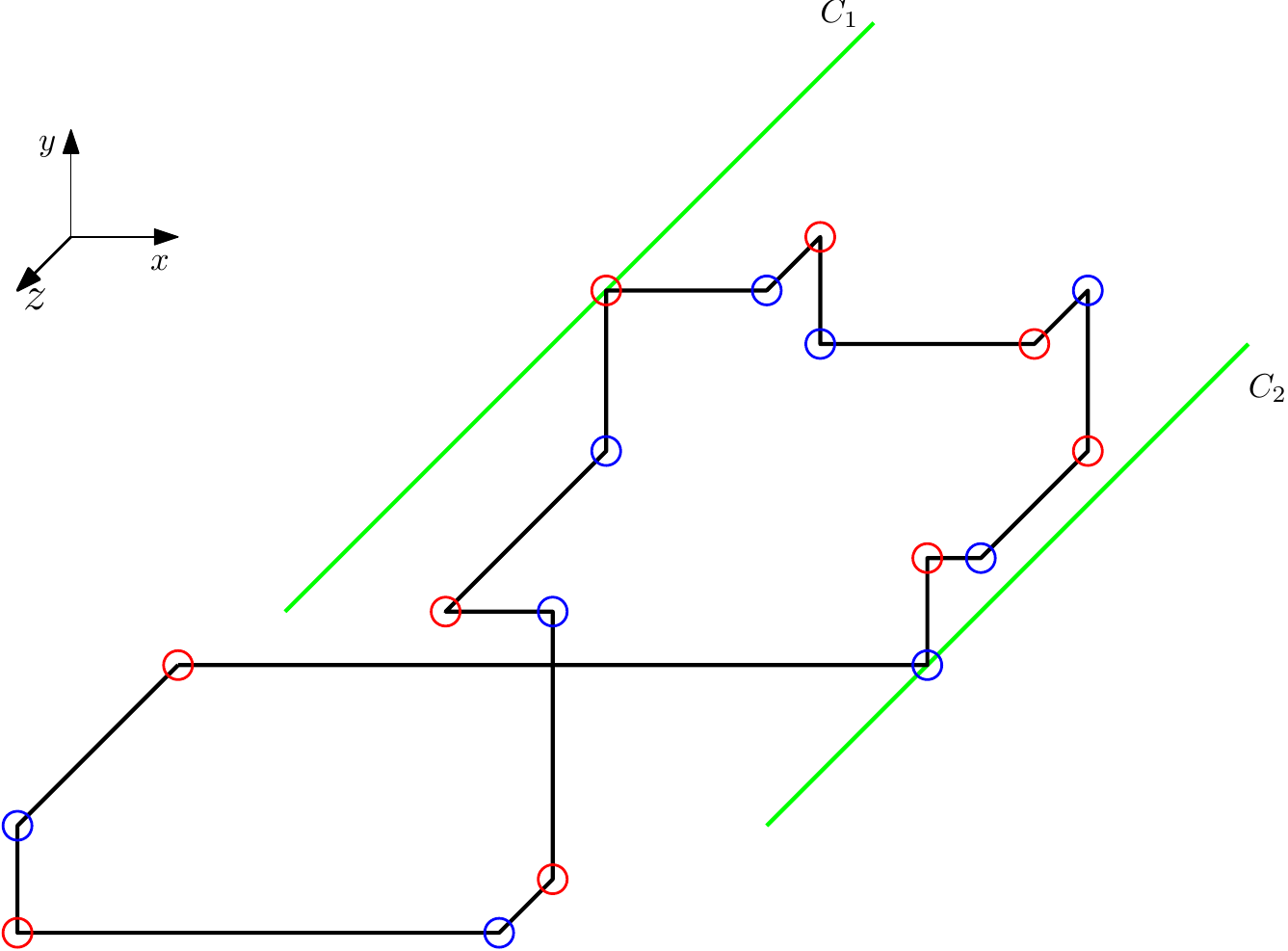}
\caption{An example of a variable loop.}
\label{fig:3d-loop}
\end{minipage}
\quad
\begin{minipage}[b]{0.45\linewidth}
  \centering
  \includegraphics[scale=0.5]{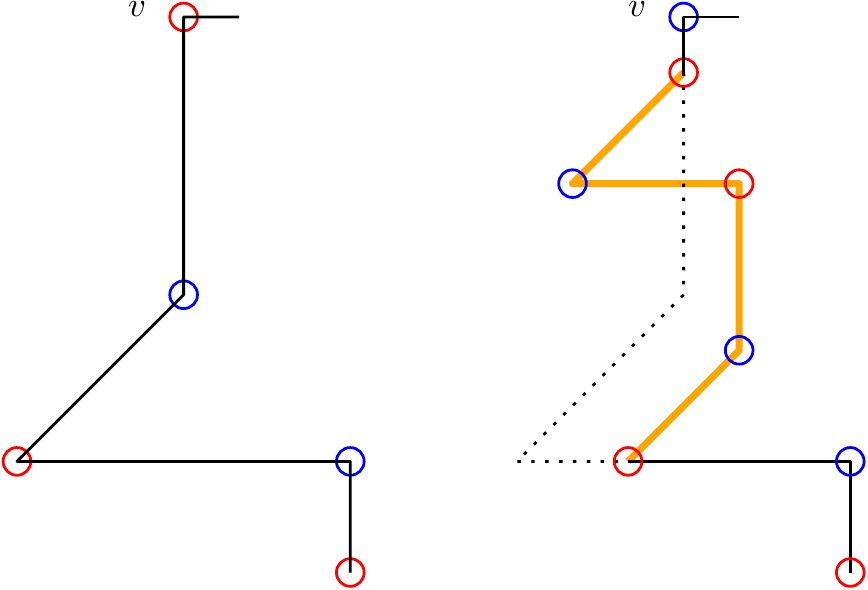}
\caption{The insertion of the orange detour changes the color of $v$ from red to blue.}
\label{fig:adjust-loop}
\end{minipage}

\end{figure}

\begin{proof}
We give a reduction from 3SAT. We say that a line is a $d$-line if it
is parallel to the $d$-axis; we say that a plane is an $ab$-plane if
it is parallel to the plane spanned by the $a$-axis and the
$b$-axis. A clause is represented by a $z$-line. A variable is
represented by a loop of axis-parallel lines with the following
properties:
\begin{enumerate}
\item No two $x$-lines lie in the same $xz$-plane.
(This ensures that when a clause $z$-line meets a vertex of the loop where an $x$-line meets a $y$-line, it does not also meet another such vertex.)
\item There are an even number of edges in each loop.
\item Lines from two different loops do not intersect.
\item A loop intersects a clause $z$-line if and only if the variable
represented by that loop is in the clause represented by the
$z$-line. The intersection point represents a literal in the corresponding
clause.
\item There are two optimal hitting sets for a loop -- the set of odd
vertices, and the set of even vertices. All positive (resp., negative)
literals should be in the same hitting set.
\end{enumerate}

Figure~\ref{fig:3d-loop} shows a portion of an instance in which
clause $C_1$ includes literal $x_1$ and clause $C_2$ includes
literal~$\overline{x_1}$.

In three dimensions, it is not hard to take detours to avoid unwanted
intersections. The number of vertices in a loop can be adjusted by
inserting a detour, as shown in
Figure~\ref{fig:adjust-loop}. Finally, we argue that all clause
$z$-lines can be hit for free by the optimal hitting sets of variable
loops if and only if there is a satisfying truth assignment for the
corresponding 3SAT instance.
\qed
\end{proof}

\section{Hitting Rays and Lines}
\label{sec:ray}
Hitting rays is ``harder'' than hitting lines,
since any instance of hitting lines has a corresponding equivalent
instance as a hitting rays problem (place the apices of the rays
far enough away that they are effectively lines).
A ray has a unique line, its {\em containing line}, that is a superset of the ray.
Two rays having the same containing line are {\em collinear}.
While two lines that are collinear are identical, two rays that are collinear fall
into two groups according to the direction they point along the containing line, $\ell$. Because of nesting, we need
keep only one of the rays pointing in each of the two directions along~$\ell$.
For example, among left-pointing rays, we keep only the one contained in all other
left-pointing rays, i.e., the one with the left-most apex.

We show that the special case with horizontal rays and vertical lines
(abbreviated HRVL) is exactly solvable in polynomial time:
\begin{theorem}\label{thm:rays-and-lines}
The hitting set problem for vertical lines and horizontal rays can be
solved in $O(nT)$ time, where $n$ is the number of entities and $T$ is
the time for computing a maximum matching in a bipartite graph with $n$ nodes.
\end{theorem}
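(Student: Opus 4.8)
The plan is to reduce the problem to a single bipartite matching by first isolating the local structure of hit points and then proving an exact min--max identity that a matching can evaluate.

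\textbf{Normalization and local structure.} Using the collinear reduction noted just above the theorem, I would first assume that along each horizontal level $y$ there is at most one left-pointing ray, covering $x\le L(y)$, and at most one right-pointing ray, covering $x\ge R(y)$. The key local fact is that a single hit point $(x,y)$ meets at most one vertical line (the lines have distinct abscissas) and, among the rays, only those at its own level; hence a point covers at most three input objects, namely one vertical line together with the left and right ray of level $y$, and it meets all three precisely when $R(y)\le x\le L(y)$ and $x$ is the abscissa of a line lying in that overlap. I would record ``coverable by a common point'' as a graph $C$ on the objects: vertical lines are pairwise nonadjacent; a line is adjacent to a ray iff the line's abscissa lies on the ray; and the two rays of a level are adjacent iff they overlap. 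Because each level has at most one ray of each direction, the ray--ray edges of $C$ form a matching.

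\textbf{Min--max and the matching engine.} The heart of the theorem is the identity that the minimum hitting set equals the maximum packing, i.e.\ the largest set of objects no two of which share a common hit point (a maximum independent set in $C$). The direction $\ge$ is immediate, since distinct packing elements force distinct points; the substance is $\le$, which I would obtain by proving the integrality of the natural covering/packing linear programs. Computationally, I would read the optimum off a matching: a family of hit points that only ever doubles up corresponds to covering the objects by co-coverable pairs and singletons, so by the Gallai relation its minimum size is (number of objects) minus a maximum matching of $C$, with isolated objects (such as a ray meeting no line and overlapping no partner) simply receiving private points. A maximum matching would then yield both an optimal point set---matched pairs become shared points and everything else gets a private point---and a certifying packing.

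\textbf{The obstacle and the running time.} The difficulty is exactly the triangles of $C$, where one point meets a line and \emph{both} rays of an overlapping level: such a point saves two, whereas a matching edge saves only one, and this case is genuinely forced (e.g.\ a lone vertical line sitting in the overlap of a paired left/right ray, whose optimum is a single point). I expect most of the work to lie in absorbing these triples while staying within bipartite matching. My plan is to augment the graph with, for each level whose two rays overlap, an auxiliary ``merged'' node that a line in the overlap may cover in place of the two individual ray nodes, wired so that every integral matching of the resulting bipartite graph corresponds to a canonical optimal family of points---overlapping levels served by a single shared point, all remaining objects paired or private---and conversely. Proving this correspondence is precisely the integrality claim above and is where the argument must be most careful; in particular one must verify that committing a line to a merged node never blocks a globally cheaper pairing. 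Granting it, the algorithm normalizes in $O(n\log n)$ time, builds the augmented bipartite graph on $O(n)$ vertices, computes one maximum matching, and returns (number of objects) minus its size; the linear-size reductions and bookkeeping account for the stated $O(nT)$ bound, with $T$ the cost of bipartite matching.
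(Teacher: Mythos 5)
Your reduction to a single bipartite matching fails, and it fails exactly at the structure the paper's proof spends most of its effort on: what the paper calls \emph{double-hit} segments. Consider one level carrying a left ray $a$ covering $x\le 10$ and a right ray $b$ covering $x\ge 5$ (overlap $[5,10]$), and two vertical lines $\ell_1$ at $x=2$ and $\ell_2$ at $x=12$. The optimum is $2$: the point $(2,1)$ hits $\ell_1$ and $a$, and $(12,1)$ hits $\ell_2$ and $b$; that is, the two rays of an overlapping pair are hit by two \emph{different} points on two \emph{different} lines, outside the overlap. Your merged node for this level is adjacent to no line (neither abscissa lies in $[5,10]$), so your augmented bipartite graph has an empty matching and your formula returns $3$. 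Keeping the individual ray nodes alongside the merged node does not repair this: either rays whose merged node is matched wrongly receive private points under the Gallai accounting, or one must forbid matching a ray whenever its merged node is matched --- a consistency constraint that is no longer expressible as a matching. This is precisely why the paper does not compute one matching: it first proves (Lemma~\ref{lem:hrvl_oo}) that \emph{some} maximum matching of the lines--segments graph extends to an optimal solution, then spends $O(n)$ matching computations (the criticality tests inside the bidirectional sweep) deciding \emph{which} lines to commit as 3-hitters so that the leftover lines are positioned to serve as l-hitters and r-hitters, and finally realizes the double-hits by an edge-cover computation; that is the source of the $O(nT)$ bound.

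Moreover, the identity you call the heart of the theorem --- minimum hitting set equals maximum packing (maximum independent set in your graph $C$) --- is false, so the integrality claim you plan to prove cannot be proved. Add to the instance above a second level with a single left ray $c$ covering $x\le 15$. Then $C$ is an induced $5$-cycle $\ell_1\,a\,b\,\ell_2\,c$: the edges are $\ell_1 a$, $ab$, $b\ell_2$, $\ell_2 c$, $c\ell_1$ and no others. No point hits three of the five objects (that would require a line with abscissa in $[5,10]$), so the minimum hitting set is $\lceil 5/2\rceil=3$, while the maximum independent set of a $5$-cycle is $2$ (and the natural LP optimum is $5/2$). So $C$ is neither bipartite (it contains triangles whenever a line lies in an overlap) nor perfect, and K\H{o}nig--Gallai duality certifies nothing; your Gallai step only bounds covers that never use triple points. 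The min--max the paper actually establishes is $OPT = h+v-(v_1+v_2)$, where the maximum is over \emph{independent} families of 3-hit and double-hit segments; coordinating these two kinds of savings is the real content of the theorem, and it is absent from your proposal.
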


 We begin with a high-level overview of the algorithm. A point can
cover at most 3 objects: a vertical line, a left-facing ray, and a
right-facing ray.  This requires the two rays to intersect in a
segment, and the vertical line to intersect this segment. We call
points at such intersections {\em 3-hitters}.  We can compute the
maximum possible number of 3-hitters, with no two sharing a line or a
segment, via maximum matching in a bipartite graph, where edges
represent intersections between vertical lines and horizontal
segments. We prove there exists an optimal solution with this maximum
number of 3-hitters.  The algorithm performs a sweep inward from the
left and right, finding a suitable set of 3-hitters, ensuring the
remaining lines have the best possible chance to share a point with
the remaining rays.  Once everything that is 3-hit is removed, the
remaining objects intersect in at most pairs.  So we can finish the
hitting by solving an edge cover problem.   
The {\em edge cover} problem for a graph $G=(V,E)$ is to compute a
minimum-cardinality set $E^*\subseteq E$ of edges such that every
vertex in $V$ is incident to at least one edge of $E^*$; the problem
is solved in polynomial time, using maximum cardinality matching,
followed by a greedy algorithm.  Our edge cover instance is a graph with a vertex
for each object and an edge between each pair of intersecting objects.

We now give additional algorithmic and proof details.
We call a horizontal ray to the left (resp., right) an {\em l-ray}
(resp., {\em r-ray}).  In this section, all lines are vertical.  If two
collinear rays are disjoint, we shift one ray slightly up or down, so no two
disjoint rays are collinear.  These rays cannot be covered by a single point, so 
this does not fundamentally alter the optimal solution.  We also assume
that no collinear rays have the same orientation, since in this case one ray is
a subset of the others, which are redundant. 

If a ray is not collinear with any other ray, we add a ray to pair with it. For
example, if an r-ray intersects no l-ray, we add an intersecting l-ray
whose right endpoint is to the right of all vertical input
lines. This additional ray will not change the optimal solution. If an l-ray
and r-ray intersect, their intersection is a segment.  Since each ray intersects
exactly one other ray, we represent each such pair of rays by their segment. 

Let $H$ denote the set of segments and $V$ denote the set of lines, and let $h$ and $v$ denote their cardinalities
respectively. A naive feasible solution is to use $v$ points to cover the lines and
$h$ points to cover the segments independently.   
The only way to improve upon the naive solution is to find points that hit both a line and one or two rays.
The points that hit the lines can help ``hit'' segments in two possible ways:

\begin{enumerate}
\item[(1)] The point on a line may be placed on a segment. We call the
    corresponding line a {\em 3-hitter} and say that the segment is {\em 3-hit} by the
    line.
\item[(2)] Points on lines may hit each ray outside its intersecting segment. 
    This requires two points on two distinct lines. We call the left (resp., right) line an {\em l-hitter} (resp., {\em r-hitter}). We say
    the segment is {\em double-hit} by those two lines.
\end{enumerate}

These are the only ways to improve over the naive hitting set.  To see this, suppose a vertical line is an r-hitter, that is, shares a point with a right-pointing ray
outside the shared segment with its left-facing ray. Suppose no vertical line shares a point with the corresponding left-facing ray.  Then that ray requires a separate
point.  This is 
equivalent to putting a point on the segment and hitting the vertical line separately (two points to hit one segment and one line).

When tallying these improvements for any feasible solution, we allow at most one point on a vertical line to be involved in any 3-hitter or double-hitter.  This is because once a point on a vertical line is selected, the line is covered, and additional points no longer help cover the line.
More precisely, we say a set of $v_1$ 3-hit segments and $v_2$ double-hit segments are {\em independent} if the union of the relevant lines (3-hitters, r-hitters, and l-hitters) has cardinality
$v_1 + 2v_2$.
That is, no vertical line is involved in 3-hitting or double-hitting more than one segment of an independent set.

Consider an instance $I = H \cup V$ and a feasible solution $S$.  
Let $T$ be some maximal independent set of 3-hit and double-hit segments with respect to $S$.  Suppose there are $v_1$ 3-hit segments and $v_2$ double-hit segments in $T$.  Then, $|S| \geq h+v-v_1-v_2$, and there is a feasible solution with precisely $h+v-v_1-v_2$ points.  To see this, first remove from $I$ the segments in $T$ and their corresponding 3-hitters and double-hitters. Let $I'$ refer to the resulting instance, whose size is $|I| - 2v_1 - 3v_2$ due to the independence of $T$.  Likewise let $S'$ refer to points in $S$ that intersect an object in $I'$; since $S$ is a feasible solution for $I$, $S'$ is a feasible solution for $I'$.  The instance $I'$ cannot contain any 3-hit or double-hit segments, since such a segment would be independent from those in $T$, contradicting the maximality of $T$.  We observed above that (1) and (2) are the only ways to improve upon a naive hitting set.  Hence a naive solution is optimal for $I'$, and $|S'| \geq |I'|$.  We have that $|S\setminus S'| = v_1 + 2v_2$, yielding $|S| \geq |I'| + v_1 + 2v_2 = h+v-v_1-v_2$.  The inequality becomes tight if we replace $S'$ with a naive solution for $I'$.  In particular if $S$ is an optimal solution for $I$, then the inequality must be tight.  Thus, we can think of an optimal solution as maximizing $v_1 + v_2$, and our goal is to maximize the number of independent 3-hit and double-hit segments.

Given an instance of HRVL,  we can
calculate the maximum number of 3-hitters. We construct the bipartite
graph $G$ in which one set of nodes is the lines and the other set of nodes
is the segments; there is an edge between two nodes if and only if the line and
the segment they represent intersect. We refer to $G$ as the {\em lines-segments graph}.  Maximum matching 
is solvable in polynomial time. A matching in the graph
represents a set of independent intersections in the corresponding HRVL. 
That is, a set of $M$ edges in a matching corresponds to a way to hit $M$
segments and $M$ lines with $M$ points.  These are hittings of type (1).
The following lemma shows that hitting points of type (1) are preferred over hitting
points of type (2).

\begin{lemma}
For any instance of HRVL, there is a maximum matching between lines and segments that can be
augmented to be an optimal solution.
  \label{lem:hrvl_oo}
\end{lemma}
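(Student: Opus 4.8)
The plan is to prove the lemma by an exchange argument that starts from an arbitrary optimal solution and pushes it toward a maximum matching without ever losing optimality. Recall from the discussion preceding the lemma that every (reduced) optimal solution uses exactly $h+v-(v_1+v_2)$ points, so minimizing points is the same as maximizing $v_1+v_2$, the number of independent 3-hit and double-hit segments. I would fix an optimal solution $S$ and, among all optimal solutions, choose one whose set of 3-hitters is largest; let $M$ be the corresponding matching in the lines-segments graph $G$. It then suffices to show that this particular $M$ is a maximum matching: $S$ itself is then exactly $M$ augmented by its own double-hit and naive points, i.e.\ a maximum matching that augments to an optimum. (Note the statement must be existential: it is easy to exhibit instances in which \emph{some} maximum matchings 3-hit the ``wrong'' lines and cannot be completed to an optimum, which is precisely why I build $M$ from an optimal solution rather than picking an arbitrary one.)

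Suppose, for contradiction, that $M$ is not maximum. Then $G$ has an $M$-augmenting path $P$, whose endpoints are an unmatched line $\ell$ and an unmatched segment $s$. Flipping $P$ produces a matching with one more edge, and the key structural observation is that the only objects whose 3-hit status changes are the two endpoints: every internal segment of $P$ stays 3-hit and every internal line stays a 3-hitter, merely re-paired. Hence only two things can affect the rest of $S$. First, $s$ becomes 3-hit: if $s$ was double-hit it releases its l-hitter and r-hitter, and if $s$ was naive it releases its private point; either way this only frees resources. Second, $\ell$ becomes a 3-hitter: if $\ell$ was unused, nothing is lost, but if $\ell$ served as the l-hitter (resp.\ r-hitter) of some double-hit segment $s'$, then $s'$ is momentarily orphaned and must be re-served.

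The heart of the proof is re-serving $s'$ while keeping $v_1+v_2$ unchanged and $v_1$ strictly larger, which would contradict the maximality of the 3-hitter set in $S$. Here I would exploit the positional structure of hitters: an l-hitter of a segment is simply any vertical line lying to its left (it automatically meets the left-facing ray), and symmetrically for r-hitters. Thus if the lines released by $s$ include one on the same side of $s'$ as the lost hitter $\ell$, that line directly re-serves $s'$; and if a released line crosses $s'$, it can instead 3-hit $s'$, turning it from double-hit to 3-hit and increasing $v_1$ even more. The degenerate case $s'=s$ is automatically clean, since $s$ is now 3-hit and needs no hitters.

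I expect the main obstacle to be the situation in which the lines released by $s$ lie on the \emph{wrong} side of $s'$, so that the obvious replacement fails. I would overcome this by not committing to an arbitrary augmenting path: among the available augmenting paths one selects the unmatched-line endpoint $\ell$ so that its hitter role is on the side toward which $s$ releases lines; equivalently, one phrases the l-/r-hitter assignment for the double-hit segments as a bipartite flow (left-slots matchable to lines on the left, right-slots to lines on the right) and invokes a Hall-type/alternating argument on this secondary structure to guarantee that a consistent reassignment of the released lines exists. Establishing that such a good choice always exists — i.e.\ that prioritizing 3-hitters never strictly reduces the number of double-hittable segments — is the crux. Once it is in hand, the exchange strictly increases the number of 3-hitters within an optimal solution, the desired contradiction follows, $M$ is a maximum matching, and $S=M\cup(\text{double-hits}\cup\text{naive points})$ witnesses the claim.
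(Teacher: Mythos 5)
Your overall strategy is the same as the paper's: fix an optimal solution maximizing the number of 3-hitters, suppose its induced matching is not maximum, take an augmenting path, observe that flipping it only changes the status of the two endpoints, and derive a contradiction by re-serving the double-hit segment $s'$ that is orphaned when the endpoint line $\ell$ becomes a 3-hitter. You also correctly dispose of the easy cases (endpoint segment $s$ not double-hit, $\ell$ not serving as a hitter, a released hitter of $s$ lying on the correct side of $s'$, a released line crossing $s'$). But the case you yourself flag as the crux --- all hitters released by $s$ lie on the \emph{wrong} side of $s'$ --- is exactly where the proof lives, and neither of your proposed remedies is carried out. ``Select the unmatched-line endpoint so that its hitter role is on the right side'' is not a construction: you give no argument that an augmenting path with such an endpoint exists; and the Hall-type/flow reformulation is only named, never proved. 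As it stands, the proposal is an accurate reduction of the lemma to its hardest case, not a proof of it.

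The paper closes this case with two ingredients you do not have. First, an \emph{extremal} choice of the augmenting path: among all augmenting paths take one with the fewest elements, and among those, one minimizing the horizontal distance between its last two lines (after a WLOG placing the intersection of $e_1$ and $l_1$ to the left of $l_n$). Second, a geometric/planarity observation: in the bad case, the released r-hitter $l_r$ of $e_1$ has $x$-coordinate strictly between that of $e_1 \cap l_1$ and that of $l_n$; since the path $e_1, l_1, \dots, e_n, l_n$ is a connected set in the plane and distinct vertical lines are parallel, $l_r$ must cross some segment $e_q$ of the path. Because $l_r$ is unmatched in $M$ (it is a hitter, not a 3-hitter), $\left\{ e_1, l_1, \dots, e_q, l_r \right\}$ is again an augmenting path, and it beats the chosen one in the extremal order --- contradiction. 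In other words, instead of globally reassigning hitters (your Hall-type idea), the paper converts the wrong-side obstruction into a strictly ``smaller'' augmenting path and kills it by minimality. Without this step, or a worked-out substitute for it, your argument does not go through.
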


\begin{proof}
The proof is by contradiction. Let $v_1^*$ be the largest $v_1$ for any
minimum hitting set. We assume that $v_1^*$ is less than
$m$, the cardinality of the maximum matching between lines
and segments. Thus, there is an augmenting path in the bipartite graph $G$;
an example of such a path is shown in green in Figure~\ref{fig:aug_path}.
Because the current solution is optimal,
any augmenting path cannot improve it. This allows
us to infer some properties of the first segment and the last line on the augmenting
path.  We consider the augmenting path $P$ with the shortest length (fewest elements) and the shortest
horizontal distance between the last two lines along the path.
Then by case analysis on path $P$, we argue there exists another augmenting
path that increases $v_1^*$ or violates a  minimality condition of $P$.

In more detail, an augmenting path in graph $G$ corresponds to a sequence of alternating
segments and lines in the HRVL instance: $\left\{ e_1, l_1, e_2, l_2, \dots, e_n, l_n \right\}$. In the current
solution line $l_{i-1}$ is matched with segment $e_i$.
In Figure~\ref{fig:aug_path} the green
path is an example of an augmenting path, where $n=3$ and blue dots correspond to current 3-hitter points. Because of the optimality of
the current solution, any augmenting path cannot improve it. 
The following two properties hold;
otherwise, after augmenting, the sum of $v_1$ and $v_2$ will stay the
same, but $v_1^*$ would be increased by 1:
\begin{itemize}
  \item $e_1$ is double-hit by other lines.
  \item $l_n$ is helping to double-hit another segment.
\end{itemize}

Without loss of generality, we assume the intersection of $e_1$ and $l_1$ is to the left of $l_n$.
Also assume that $n$, the number of lines in the augmenting path, is as small as possible.
Among augmenting paths with smallest $n$, we pick one with the shortest horizontal distance between 
lines $l_{n-1}$ and $l_n$.
We consider the following cases:
\begin{enumerate}
  \item If line $l_n$ is the l-hitter of some segment $e_t$, then the l-hitter of segment $e_1$
    can take its job. One can do the augmenting and assign the
    l-hitter of $e_1$ to l-hit $e_t$. Therefore the solution is
    still optimal and $v_1^*$ increases.
  \item Suppose line $l_n$ is an r-hitter for a segment $e_t$, and let $l_r$ be the r-hitter for segment $e_1$.  If line $l_r$ is to the
    right of line $l_n$, then line $l_r$ can take the job of line $l_n$.  That is, line $l_r$ can be the r-hitter for segment $e_t$.
    Again, it is now possible to increase $v_1^*$ while maintaining an optimal total number of points.
  \item If line $l_n$ is an r-hitter and the r-hitter of $e_1$ (called
    $l_r$) is to the left of $l_n$, we know that line $l_r$ will
    intersect a segment, $e_q$ (with $1\leq q\leq n$) of the augmenting path. This is because we assume line $l_n$ is to the right of
    the intersection of $e_1$ and $l_1$. In Figure~\ref{fig:aug_path}, line $l_4$ is a possible $l_r$.
    Thus, using $l_r$ gives
    a shorter augmenting path, $\left\{ e_1, l_1, \dots, e_q, l_r
    \right\}$. This new path either has strictly fewer lines or has a strictly shorter horizontal distance between the last lines.
    This contradicts the choice of the first augmenting path.
\end{enumerate}
\qed
\end{proof}

\begin{figure}[htpb]
\centering
\begin{minipage}[bt]{0.45\linewidth}
  \centering
  \includegraphics[scale=0.75]{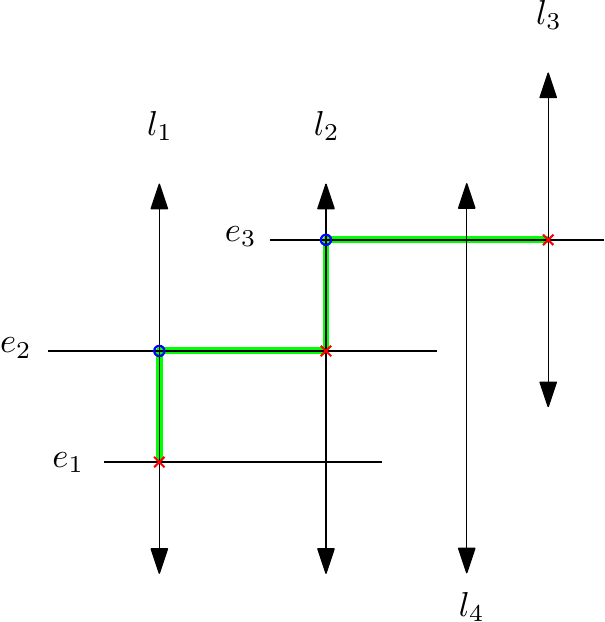}
\caption{A green augmenting path: the matching size increases by replacing blue circles with red crosses.}
\label{fig:aug_path}
\end{minipage}
\quad
\quad
\begin{minipage}[bt]{0.45\linewidth}
\includegraphics[width=2.2in]{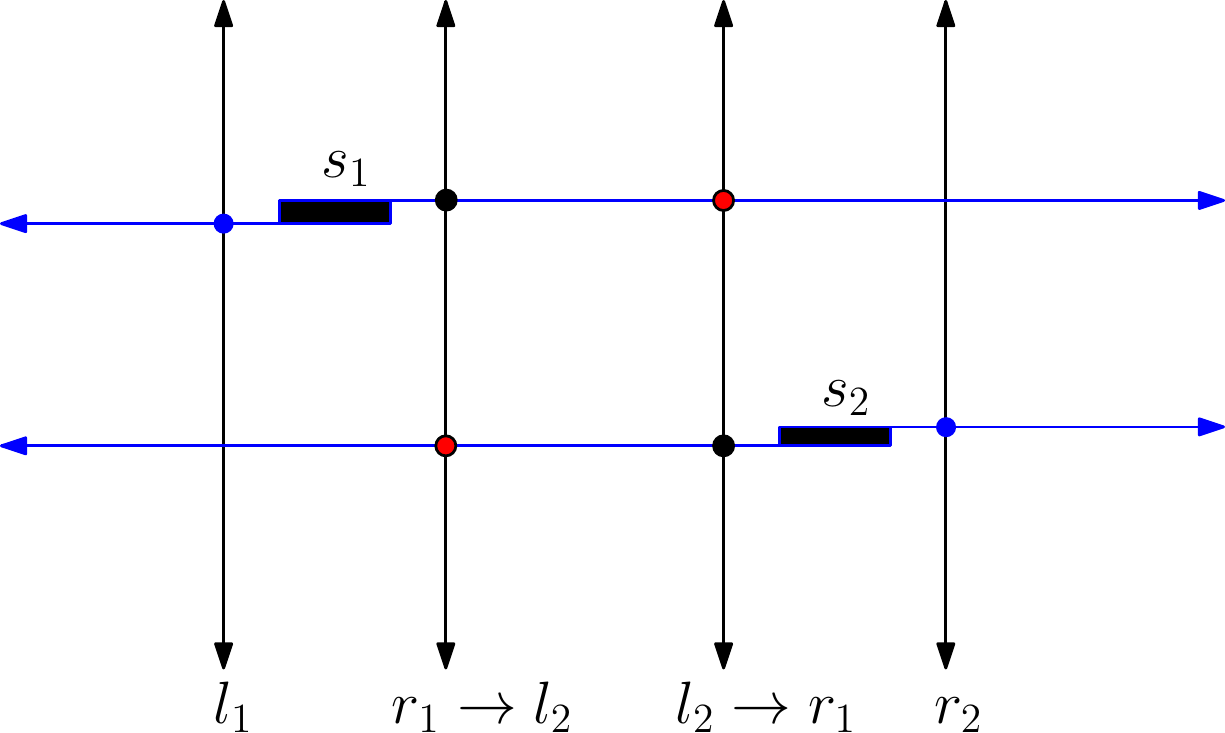}
\caption{Segments $s_1$ and $s_2$ are originally double hit using the blue and black points on their rays.
They can also be double hit using the blue 
 and red points on their rays.  The middle two lines can
swap their roles because it is always possible to move an r-hitter right or an l-hitter left.  
In general, the more a line is to the left, the more flexible it can be as an l-hitter and the more
a line is to the right, the more flexible it is as an r-hitter.}
\label{fig:one_side}
\end{minipage}
\end{figure}

The following lemma gives additional useful structure for at least one optimal solution:

\begin{lemma}
Given an optimal solution $\mathcal{S}$, there is an optimal solution
$\mathcal{S}'$ that has the same set of 3-hitters as $\mathcal{S}$, with
its l-hitters all left of its r-hitters.
  \label{lem:one side}
\end{lemma}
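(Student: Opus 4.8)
The plan is to show that any optimal solution $\mathcal{S}$ can be turned into the desired form by a finite sequence of local \emph{role swaps}, each preserving cardinality (hence optimality) and leaving the 3-hitters untouched. First I would fix the 3-hitters of $\mathcal{S}$ and restrict attention to the double-hit segments together with their dedicated hitter lines. By the independence established earlier, every double-hit segment $s$ owns two distinct vertical lines, an l-hitter strictly to the left of $s$ and an r-hitter strictly to the right of $s$, and all these lines are pairwise distinct across segments. Concretely, writing $x_L(s)$ and $x_R(s)$ for the $x$-coordinates of the left and right endpoints of $s$, an l-hitter of $s$ is a line with $x$-coordinate below $x_L(s)$ (it meets the left-facing ray but not the right-facing one), while an r-hitter is a line with $x$-coordinate above $x_R(s)$.

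Next I would define an \emph{inversion} to be a pair $(A,B)$ in which $A$ is an l-hitter, $B$ is an r-hitter, and $A$ lies to the right of $B$ (that is, $x_A > x_B$), and let $\Phi$ count the inversions. The assertion ``all l-hitters left of all r-hitters'' is precisely $\Phi = 0$. Suppose $\Phi>0$ and pick any inversion $(A,B)$, where $A$ is the l-hitter of some segment $s$ and $B$ is the r-hitter of some segment $t$. Here $s\neq t$, since the l- and r-hitters of a common segment automatically satisfy $x_A < x_L(s) \le x_R(s) < x_B$. The geometry then forces
\begin{align*}
 x_R(t) < x_B < x_A < x_L(s),
\end{align*}
so I can swap roles: reassign $A$ to be the r-hitter of $t$ (legal because $x_A > x_B > x_R(t)$) and $B$ to be the l-hitter of $s$ (legal because $x_B < x_A < x_L(s)$), moving the single point on $A$ to the height of $t$ and the single point on $B$ to the height of $s$. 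This keeps every line and every ray covered, changes no 3-hitter, and adds no point, so the resulting $\mathcal{S}'$ is again optimal.

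Finally I would verify that each swap strictly decreases $\Phi$. The swapped pair $(A,B)$ ceases to be an inversion, contributing $-1$. For any other l-hitter $Z$ its contribution changes from $[x_Z > x_B]$ to $[x_Z > x_A]$, which cannot increase since $x_A > x_B$; symmetrically, for any other r-hitter $Z$ the contribution changes from $[x_A > x_Z]$ to $[x_B > x_Z]$, again non-increasing. Hence $\Phi$ drops by at least one per swap, the process terminates after finitely many steps, and at termination $\Phi = 0$ yields the claimed solution $\mathcal{S}'$. The main obstacle is exactly this monotonicity check: a swap repairs the chosen inversion but could a priori create new ones, and the crux is the observation that the strict ordering $x_B < x_A$ makes every other pair's indicator non-increasing, so no new inversions appear. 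A secondary point needing care is confirming that shifting the two points to new heights never disturbs coverage of any object outside $\{s,t\}$, which follows from the independence of the double-hit structure.
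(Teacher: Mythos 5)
Your proof is correct and rests on the same key idea as the paper's own proof: an inverted pair (an l-hitter lying to the right of an r-hitter) can simply swap roles, since the geometry $x_R(t) < x_B < x_A < x_L(s)$ makes the exchange legal while preserving coverage, cardinality, and the set of 3-hitters. The only difference is bookkeeping --- the paper terminates the argument by an extremal choice (take the optimal solution whose leftmost r-hitter is as far right as possible, and derive a contradiction from one swap), whereas you count inversions and show each swap strictly decreases that potential; these are two standard, equivalent ways of closing the same local-improvement argument.
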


\begin{proof}
Let $d$ be the number of double-hit segments in an optimal solution.
We will show that for the $2d$ lines involved in double-hitting these segments, there exists a  solution $\mathcal{S}'$ where the first $d$, numbering from left to right, are
$l$-hitters.  Thus, the next (last) $d$ are r-hitters.

The proof is by contradiction.  Figure~\ref{fig:one_side} illustrates the following argument. Let $\mathcal{S}$ be an optimal solution with
the rightmost first r-hitter, $r_1$.  Assume, that there are strictly fewer than $d$ l-hitters to its left.
Thus, there is at least one l-litter to the right of $r_1$. Let $l_2$ be any such l-hitter.
Let $s_1$ be the segment r-hit by $r_1$.  Its l-hitter $l_1$ must be to the left of $r_1$, since for any given segment, its l-hitter is to the right
of the segment and its r-hitter is to the right.
Let $s_2$ be the segment l-hit by $l_2$.  Its r-hitter $r_2$ must be to the right of $l_2$.  In figure~\ref{fig:one_side}, for each segment, the black and blue
points on their segments intersect the l-hitters and r-hitters.  We can swap the roles of $r_1$ and $l_2$ while still double hitting both segments.  Instead of
using the blue points in ~\ref{fig:one_side}, we use the red points.  The black and red points still hit all four rays associated with segments $s_1$ and $s_2$.  However,
now the rightmost first r-hitter in the new solution has moved further right.  This contradicts our choice of solution $S$.
\qed
\end{proof}

Algorithm~\ref{algo:hrvl} below gives an optimal solution for HRVL.
The algorithm maximizes the number of 3-intersections and ``balances''
the remaining lines between the left and right sides as much as
possible. We test the ``criticality'' of a line $l$ by computing a
maximum cardinality matching in the lines-segments graph, with and
without the line $l$; if the matching cardinality drops when line $l$
is not part of the graph, then line $l$ is {\em critical}. In the
algorithm, we check the criticality of lines: given the previous
choices, if a critical line is not used as a 3-hitter, there is no way
to extend the previous choices to a maximum matching.  We add a
3-hitter to our solution if and only if the line involved is critical.
A 3-hitter is matched to the segment crossing it that ends first in
the current sweeping direction.  We argue below that this algorithm finds
a maximum set of 3-hitters.
Non-critical lines are counted as l-hitters when sweeping from the
left and as r-hitters when sweeping from the right.  The algorithm
removes each newly-discovered non-critical line from consideration as
a 3-hitter, and swaps the sweep direction (from left to right, or from
right to left).  This balances the number of presumed l-hitters and
r-hitters during the course of the the algorithm. When the sweep
encounters a segment $s$, this triggers testing of the first line (in
the sweep direction) that intersects $s$, if any.  Subsequent sweep
steps continue to process segment $s$ until it is matched as part of a
3-hitter, or it is removed, to be double hit at the end of the
algorithm.
A small illustrative example is shown in Figure~\ref{fig:algo1-example}.

\begin{algorithm}
  Input: set $L$ of vertical lines, set $S$ of horizontal segments (ray intersections)\;
  $H$ $\gets$ [0, 0] \hspace{7px} //$H$ counts 2-hitters at left and right sides\;
  $I_3 \gets \left\{  \right\}$ \hspace{7px}  //$I_3$ stores 3-intersections of the solution\;
  SD$\gets$0; \hspace{3px}  
  //SD stands for sweep direction. 0 is from left to right; 1 is reverse\; 
  $L_2 \gets \emptyset$; $S_2 \gets \emptyset$ // unmatched lines and segments

{\bf step A}: \eIf{there are any 3-intersection left} {
  sweep along the direction indicated by SD\;
  When a line and segment start at the same time, the segment is seen first.\\
  \eIf{the event is a line $l_1$}{
    // Do not consider the line as a part of a 3-hitter.\;
    // Send to the final edge-cover problem\;
    {\bf step B}: \;
    $L \gets L - \{l_1\}$ \;
    $L_2 \gets L_2 \cup \{l_1\}$ \;
    $H$[SD]++\;
    toggle SD;
  }{
    // the event is a segment $e_1$\;
    \eIf{$e_1$ crosses some line(s)}{
       $l \gets$ the line hitting $e_1$ that is closest along SD\;
    }
    {
      // Do not consider $e_1$ as a part of a 3-hitter.\;
      // Send to the final edge-cover problem\;
      $S \gets S - \{e_1\}$ \;
      $S_2 \gets S_2 \cup \{e_1\}$ \;
      go to step A\; 
      }
    \eIf{$l$ is critical}{
      //For example, if SD is 0, look at the right endpoints of segments \\
      //crossed by $l$. Pick the one with the leftmost right endpoint\\
      $e_2 \gets$ the segment crossing $l$ with the closest endpoint to $l$ along SD\;
      // put the intersection of $e_2$ and $l$ into $I_3$ \;
      $I_3 \gets I_3 \cup (l, e_2)$ \label{line:make-3-hitter} \;
      $S \gets S - \{e_2\}$ \label{segment-matched}\;
      $L \gets L - \{l\}$ \;
      go to step A\;
    }{
      go to step B\;
    }
  }
}{
  Solve the remaining problem $L \cup L_2$ and $S \cup S_2$ optimally using edge cover problem\;
}
\caption{Bidirectional sweeping algorithm for HRVL.}
\label{algo:hrvl}
\end{algorithm}

\begin{figure}[htpb]
  \begin{center}
    \includegraphics[width=0.5\textwidth]{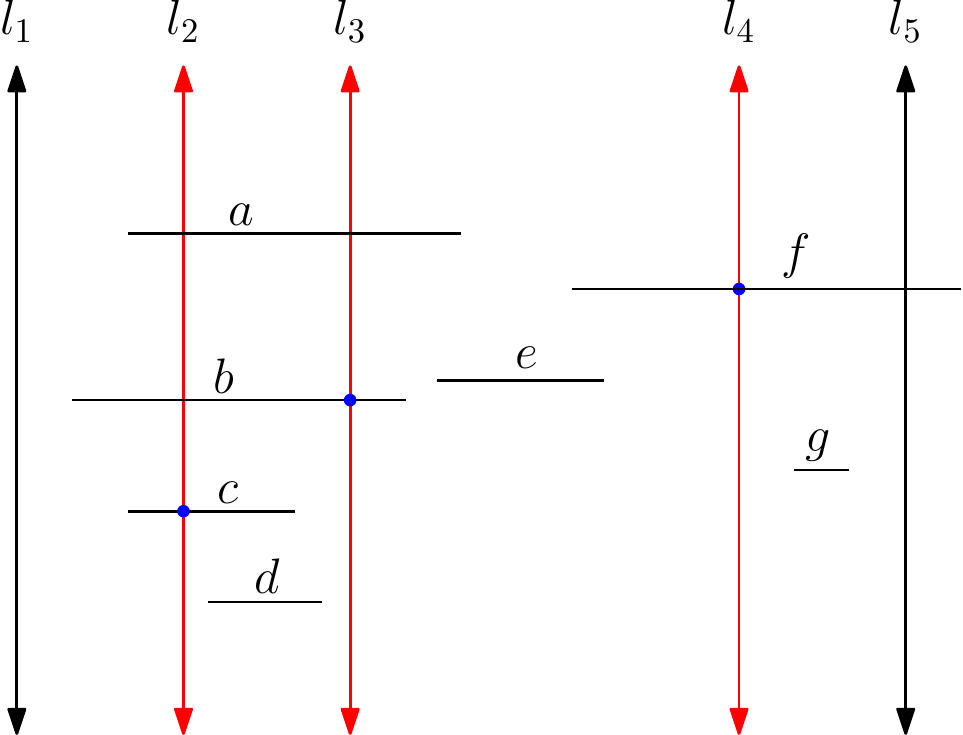}
  \end{center}
  \caption{Example of the bidirectional sweep Algorithm 1: We begin sweeping from the left. Line $l_1$ does not intersect a segment, so cannot be a 3-hitter.  We put it in set $L_2$ and switch to sweeping from the right.  We find segment $f$, which leads to an examination of line $l_5$.  Line $l_5$ is not critical, so we put it in $L_2$ and switch to a left sweep. We find segment $b$ which leads to an examination of line $l_2$. Line $l_2$ is critical.  It intersects segments $a$, $b$ and $c$. Since the right endpoint of segment $c$ is leftmost, we select the intersection of $c$ and $l_2$. We continue sweeping from the left, find segment $b$ again which leads to line $l_3$. Line $l_3$ is critical and is matched to segment $b$, the one that ends soonest in the left sweeping direction. Continuing a sweep from the left, we find segment $a$. Segment $a$ no longer crosses any lines, so it is added to set $S_2$.  Next we find segment $d$ and $e$ in order, both added to set $S_2$. Finally we find segment $f$, which leads to consideration of line $l_4$, which is critical.  We match line $l_4$ to segment $f$, since this is the first segment to end going left (and the only remaining segment intersecting $l_4$). At this point, we know there are no more 3-hitters, so the remainder of the problem is added to the sets $L_2$ and $S_2$. In this case, segment $g$ is added to set $S_2$. 
We then solve the remainder of the hitting set problem (lines $l_1$ and $l_5$ and segments $a$, $d$, $e$ and $g$) optimally as an edge cover problem. Any one of the segments is double hit by the pair of lines.  The remaining three segments are hit with a point each.}
  \label{fig:algo1-example}
\end{figure}

We argue the correctness of Algorithm~\ref{algo:hrvl}, beginning with the following lemma.

\begin{lemma}
Algorithm~\ref{algo:hrvl} selects a maximum-cardinality set of $3$-hitters.
  \label{lem:hrvl_3hit}
\end{lemma}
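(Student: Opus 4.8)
The plan is to show that the set $I_3$ of $3$-hitters produced by Algorithm~\ref{algo:hrvl} is a maximum matching in the lines-segments graph $G$. Recall that a set of $M$ independent $3$-hitters is exactly a matching of size $M$ in $G$ (each $3$-hitter pairs one line with one crossing segment, and independence forbids reusing a line or a segment), so a maximum-cardinality set of $3$-hitters has size $\nu(G)$, the maximum matching number. The structural feature I would exploit is that $G$ is a \emph{convex} bipartite graph: ordering the vertical lines by $x$-coordinate, the neighborhood of each horizontal segment $e=[a_e,b_e]$ is exactly the contiguous block of lines whose $x$-coordinate lies in $[a_e,b_e]$. Equivalently, a line crosses a segment precisely when that segment is ``active'' at the line's $x$-coordinate, a fact I also use to analyze the sweep.

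First I would set up the invariant
\[
|I_3| + \nu(G_{\mathrm{cur}}) = \nu(G),
\]
where $G_{\mathrm{cur}}$ is the lines-segments graph on the lines and segments not yet removed, and $I_3$ is the current partial matching. It holds initially, and the outer loop runs exactly while $G_{\mathrm{cur}}$ still has an edge (a remaining $3$-intersection), so at termination $\nu(G_{\mathrm{cur}})=0$ and $|I_3|=\nu(G)$, which is the claim. I would then check that the invariant survives each of the three removal operations. Setting aside a segment that crosses no remaining line deletes an isolated vertex, leaving $\nu$ unchanged. A line reached by a ``line event'' is set aside safely because, as noted, every crossing segment is active at that line's $x$-coordinate, so a line with no active crossing segment is isolated in $G_{\mathrm{cur}}$. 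Setting aside a \emph{non-critical} line leaves $\nu(G_{\mathrm{cur}})$ unchanged by the definition of criticality.

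The crux, and what I expect to be the main obstacle, is the remaining operation: matching a \emph{critical} line $l$ to the segment $e_2$ that crosses $l$ and closes earliest in the sweep direction, and deleting both. Here I must show $\nu(G_{\mathrm{cur}}-l-e_2)=\nu(G_{\mathrm{cur}})-1$; since $l$ is critical this is equivalent to the edge $(l,e_2)$ lying in some maximum matching of $G_{\mathrm{cur}}$. I would argue by exchange. Take a maximum matching $M$; criticality forces $M$ to saturate $l$, say $l\to e^*$, and the earliest-closing choice gives $b_{e_2}\le b_{e^*}$. If $e^*=e_2$ we are done; if $e_2$ is unmatched in $M$, rerouting $l\to e_2$ gives another maximum matching. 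Otherwise $e_2\to l^*$ in $M$ with $l^*\neq l$. When $x_{l^*}\ge x_l$, convexity closes it at once: from $a_{e^*}\le x_l\le x_{l^*}$ and $x_{l^*}\le b_{e_2}\le b_{e^*}$ we get $a_{e^*}\le x_{l^*}\le b_{e^*}$, so $l^*$ crosses $e^*$ and we may swap to $l\to e_2$, $l^*\to e^*$.

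The genuinely delicate case is $x_{l^*}<x_l$, where this swap need not apply. Here I would invoke the sweep feature that $l$ is the \emph{leftmost} line crossing the segment currently being processed, so $l^*$ cannot cross that segment and $e_2$ must extend strictly to its left; I would then argue, using the processing order together with convexity, that for a maximum matching $M$ this configuration either cannot arise or can be repaired by an alternating-path re-augmentation anchored at the earlier sweep step that first exposed $e_2$. Reconciling the ``earliest-closing'' rule with the ``leftmost-line'' rule across the bidirectional sweep is, I expect, the heart of the difficulty; the direction-toggling itself is irrelevant to this lemma (it serves the later balancing of $l$- and $r$-hitters), so both sweep directions can be handled symmetrically. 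Once this exchange is established, the invariant is preserved by every operation, and $|I_3|=\nu(G)$ follows at termination.
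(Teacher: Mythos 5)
Your overall framing is essentially the paper's argument: maintain the invariant that the chosen 3-hitters extend to a maximum matching of the current lines--segments graph, dispose of the easy removals (isolated segments, line events, non-critical lines), and reduce everything to showing that a critical line $l$ is matched to its chosen segment $e_2$ in \emph{some} maximum matching of the current graph, via an exchange/convexity swap. Your handling of the case where $e_2$'s matched partner $l^*$ lies ahead of $l$ in the sweep direction is exactly the paper's exchange step. But the case you yourself flag as ``genuinely delicate'' --- $l^*$ behind $l$ --- is left unresolved: ``either cannot arise or can be repaired by an alternating-path re-augmentation anchored at the earlier sweep step'' is not an argument, and it is precisely where the proof must do work. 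That is a genuine gap.

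The paper closes this case with a structural invariant you never state: at the moment a critical line $l$ is tested, the \emph{current} instance contains no lines at all behind $l$ in the sweep direction, so $l^*$ cannot exist there. The reason is twofold. First, every line behind the current sweep position has already been removed from the active instance --- its line event fired and sent it to $L_2$, or it was found non-critical during an earlier segment event, or it was matched as a 3-hitter. Second, no active line can lie between the sweep position and $l$: the triggering segment $e_1$ spans from the sweep position to at least $l$ (it crosses $l$), so any such line would cross $e_1$ and contradict the choice of $l$ as the first line crossing $e_1$ along the sweep direction. Hence $l^*$, being a line of the current graph that crosses $e_2$, is necessarily ahead of $l$, your first swap always applies, and the delicate case simply cannot arise. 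Note that this one-sidedness holds for whichever direction is active when $l$ is tested (lines set aside by the \emph{other} sweep are also gone from the active instance), which is why the same argument covers both directions; your dismissal of the toggling as irrelevant to this lemma is correct in spirit, but only because of this invariant --- which is also exactly what is needed to complete your proof.
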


\begin{proof}
When the algorithm chooses not to match a line $l$ to a segment as a
3-hitter, it has determined that this choice is correct:  line $l$
can be omitted from the set of 3-hitters because the remaining lines
and segments still have a maximum cardinality matching.  What remains
to be shown is that when the algorithm creates a 3-hitter in
Line~\ref{line:make-3-hitter}, the set of 3-hitters chosen so far plus a maximum
cardinality set of 3-hitters in the remaining problem is a maximum
cardinality set for the original problem.

Consider an instance of HRVL.  Assume, without loss of generality, that the first 3-hitter the
algorithm finds (line $l$) is while sweeping from the left.  Let $I$ be the
active instance when line $l$ is considered.  All lines to the left of
line $l$ have been added to set $L_2$, i.e., they have been
designated as possible l-hitters. Let $M$ be a maximum-cardinality
matching in the line-segment graph for instance $I$.  Because line $l$ is
critical, it is matched to some segment $s_m$ in the matching. Let $s$ be the
segment 3-hit by line $l$ (matched to line $l$) according to Algorithm~\ref{algo:hrvl}.
If $s = s_m$ then we are done. The $3$-hitter is part of $M$ and therefore is part
of a maximum-cardinality matching.  Otherwise, if segment $s$ is not part of matching $M$, then
matching $l$ to $s$ instead of $s_m$ gives a matching $M'$ of the same cardinality
as $M$. Finally, suppose that segment $s$ is matched to line $l_m \neq l$ in matching $M$.
For example, consider the instance in Figure~\ref{fig:algo1-example} ignoring all objects
except lines $l_2$ and $l_3$ and segments $a$ and $b$.  In this example, $l$ would be $l_2$,
$s$ would be $b$, $l_m$ would be $l_3$ and $s_m$ would be $a$.
Because there are no lines to the left of line $l$ in instance $I$, line $l_m$ is to the
right of line $l$. By the choice of segment $s$ as the segment intersecting $l$ with the leftmost
right endpoint, we have that segment $s_m$ extends at least as far to the right as $s$ does.  Because line $l_m$
intersects segment $s$, and segment $s_m$ extends at least as far to the right as segment $s$ does, then we know
that line $l_m$ and segment $s_m$ intersect.  Therefore, if we match line $l$ with segment $s$, in the remaining
problem, line $l_m$ can be matched with segment $s_m$.  Combined with the rest of matching $M$, this is a maximum
cardinality matching.

This completes the argument for the first 3-hitter.  The same argument holds for the rest of the 3-hitters, whether scanned
from the left or from the right.  All vertical lines ``behind'' the new 3-hitter in the scan direction have either been matched to
segments or have been removed from the problem as potential l-hitters or r-hitters.  Thus, the remaining problem at the time the new line
is considered contains lines on only one side (later in the scan direction).  This matches the conditions used above.
\qed
\end{proof}

We now argue that the left-right-balanced approach in Algorithm~\ref{algo:hrvl} leaves lines that
are excellent candidates as l-hitters and r-hitters.
Let $S$ be the solution given by
Algorithm~\ref{algo:hrvl}, and let $S'$ be an optimal solution with
the maximum cardinality set of 3-hitters. We know that $S$ and $S'$ have the same
number of 3-hitters. Let $D$ and $D'$ denote the sets of lines left behind
(not 3-hitters) in $S$ and $S'$, respectively. We order lines in $D$
and $D'$ from left to right. Let $k$ be $\floor{\frac{|D|}{2}}$. Thus,
there are at most $k$ pairs of double-hitters in $S$ and $S'$. Let
$lh_i$ (resp.,~$lh_i'$) be the $i$th line of $D$ (resp.,~$D'$).

Given a solution $P$ and a line $l$, let $E(l, P)$ denote the number
of segments on the left side of $l$ not hit by 3-hitters in $P$. A
line having more segments on its right side 
is more likely to be an
l-hitter.  We will show that line $lh_i$ is at least as capable of
being an l-hitter as is line~$lh_i'$; specifically, we show that
\begin{align}
  E(lh_i, S) \leq E(lh_i', S'),  \quad i=1,2,..,k.
\label{eqn:l-hitter}
\end{align}

Before proving inequality (\ref{eqn:l-hitter}), we argue that this
inequality, the equivalent inequality with respect to r-hitters and
previous arguments suffice to prove the correctness of
Algorithm~\ref{algo:hrvl}.  This also proves
Theorem~\ref{thm:rays-and-lines}. We argued that the optimal solution
maximizes the number of 3-hitters plus the number of double-hit
segments.  Lemma~\ref{lem:hrvl_oo} shows that it suffices to first
maximize the number of 3-hitters and then, subject to that constraint,
maximize the number of double-hit segments.  Lemma~\ref{lem:hrvl_3hit}
shows that Algorithm~\ref{algo:hrvl} first maximizes the number of
3-hitters.  Consider an optimal solution $S'$ with the maximum number
of 3-hitters.  Also, using Lemma~\ref{lem:one side}, assume that if
there are $d'$ segments double hit, then they are hit with the
leftmost remaining $d'$ lines and the rightmost remaining $d'$ lines.
We now argue that the final edge-cover computation in
Algorithm~\ref{algo:hrvl} finds as many double-hittings as solutions
$S'$ has.

Let $I'$ be the HRVL instance with all the solution $S'$ 3-hitters and
the segments they 3-hit removed.  Let I be the corresponding instance
after removing the lines and segments involved in 3-hitters in
solution $S$.  The sets of lines and segments left behind can be
different in the two instances. No lines cross any segments in either instance, since
otherwise the set of 3-hitters would not be maximum. Consider a segment $s$ in either problem.
It can be l-hit by any line to its left and it can be r-hit by any line to its right.
Thus, if there are $q$ lines to its left and $r$ lines to its right, there are $qr$ possible
ways it can be double hit.

As above, let $k$ be the maximum number of double-hitters (the floor of half the number of lines).
Inequality~(\ref{eqn:l-hitter}) says that, numbering from the left,  the $i$th line in instance $I$ has more segments
to its right than the $i$th line in instance $I'$ has for all $k$ of the leftmost lines. Thus, each of the first
$k$ lines in instance $I$ can l-hit at least as many segments as their counterparts in instance $I'$.
An argument similar to the proof of Inequality (\ref{eqn:l-hitter}) below shows that each of the rightmost $k$ lines
in instance $I$ can r-hit at least as many segments as its counterpart in instance $I'$.

Consider a double-hit segment in solution $S'$ for instance $I'$.  We
can represent the double-hitting as $(x,y,z)$ where $x$ is the {\em
index} of the l-hitter in the set of l-hitters (a number between $1$
and $k$), $y$ is the index of the segment numbered from the left, and
$z$ is the index of the r-hitter, numbered from the right (a number
between $1$ and $k$).  Let $T'$ be the set of all such triples
representing the double hitting in solution $S'$.  Then the same set
of triples is a feasible double-hitting for instance $I$.  The lines
and segments may be different, but the indices within the instances
are the same.  This is feasible because, in this index-based representation,
the set of feasible indices for the l-hitter for the $i$th segment in
instance $I$ is a superset of the set of feasible indices for the
$i$th segment in instance $I'$. Similarly the set of feasible indices for the
r-hitter in instance $I$ is a superset of the set of feasible r-hitters in
instance $I'$. 

Since the index-based solution for $S'$ is feasible in $S$, the final edge cover 
solution will give at least as many double hit segments in $S$ as there are in $S'$.

We now prove inequality (\ref{eqn:l-hitter}). We split the proof into one claim and two lemmas.

Because of the criticality test and the choice of intersecting segment
$e_2$, we have the following claim:

\begin{claim}
In $S$ if a 3-hitter is on the left side of an l-hitter, the segment hit by the 3-hitter will not
intersect that l-hitter.
\label{claim:no-intersect}
\end{claim}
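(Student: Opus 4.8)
The plan is to argue by contradiction. Suppose the solution $S$ produced by Algorithm~\ref{algo:hrvl} contains a 3-hitter $l_3$, matched to a segment $s$, together with an l-hitter $l_\ell$ such that $l_3$ lies strictly to the left of $l_\ell$, and yet $s$ intersects $l_\ell$. Since $s$ crosses both $l_3$ and $l_\ell$ with $l_3$ to the left of $l_\ell$, the right endpoint of $s$ must lie strictly to the right of $l_\ell$. The first step is to pin down when each line is processed. An l-hitter is a line that the algorithm declares non-critical during a left-to-right sweep, and at that instant it sits at the current left frontier: every line to its left has already been removed, either as a 3-hitter or as a previously declared l-hitter (no r-hitter can lie to its left, since the right frontier is still to the right of $l_\ell$). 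Hence $l_3$, lying to the left of $l_\ell$, was matched to $s$ strictly earlier, during a left sweep, so by the rule for $e_2$ the segment $s$ was chosen as the segment crossing $l_3$ with the \emph{leftmost right endpoint} among the segments then active.

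The geometric core is the consequence of this greedy choice. If the right endpoint of $s$ is to the right of $l_\ell$, then, because $s$ has the leftmost right endpoint among all segments crossing $l_3$, every segment crossing $l_3$ at that moment has its right endpoint to the right of $l_\ell$; as each such segment also extends left of $l_3$, it must cross $l_\ell$ as well. I would then run a matching-exchange argument in the active instance $I_3$ at the moment $l_3$ is matched. Line $l_3$ is critical in $I_3$ (that is why the algorithm matches it), so every maximum matching of the lines-segments graph of $I_3$ covers $l_3$, say by a segment $t$; but $t$ crosses $l_3$, hence $t$ crosses $l_\ell$ too. If some maximum matching left $l_\ell$ uncovered, replacing the edge $(l_3,t)$ by $(l_\ell,t)$ would give a maximum matching avoiding $l_3$, contradicting the criticality of $l_3$. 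Thus $l_\ell$ is itself critical in $I_3$.

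The main obstacle is bridging from ``$l_\ell$ is critical in $I_3$'' to a contradiction with ``$l_\ell$ is declared non-critical when it is finally processed.'' Between these two events the active instance shrinks: later 3-hitters, coming from both the left and the right sweeps, may delete some of the segments crossing $l_\ell$. I would close this gap using that $l_\ell$, like $l_3$, is the leftmost active line at the instant it is tested, so its non-criticality means precisely that all active segments crossing $l_\ell$ can be simultaneously matched to lines strictly to its right. The segments crossing $l_\ell$ that were deleted in the interim are matched to the intervening 3-hitters (none of which is $l_3$), and the surviving ones are matched to the right; since every segment crossing $l_3$ crosses $l_\ell$, this exhibits a way to cover all segments crossing $l_3$ without using $l_3$, i.e.\ a maximum matching of $I_3$ avoiding $l_3$, contradicting its criticality. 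I expect the delicate part to be verifying that this ``rightward'' matching at the time $l_\ell$ is processed really lifts back to a maximum matching of the earlier instance $I_3$; here I would lean on Lemma~\ref{lem:hrvl_3hit}, which guarantees that the algorithm's partial set of 3-hitters extends to a maximum matching at every stage, so no coverage is lost in the lift. The symmetric statement for r-hitters follows by reflecting the entire argument left-to-right.
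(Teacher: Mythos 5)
Your setup is sound: the contradiction framework, the observation that $l_3$ was matched during a left sweep while all lines to its left were gone, the geometric consequence of the leftmost-right-endpoint rule (every active segment crossing $l_3$ also crosses $l_\ell$), and the exchange argument showing $l_\ell$ is critical in $I_3$ are all correct. The gap is in the last step, where the contradiction is supposed to appear, and it is twofold. First, your lifted matching never covers $s$ itself: the segments crossing $l_\ell$ that were deleted in the interim are matched to interim 3-hitters, \emph{except} $s$, which was deleted because it was matched to $l_3$ --- the very line you must avoid. The missing move, which is exactly the crux of the paper's proof, is to add the single edge $(l_\ell, s)$; this edge exists precisely by the contradiction hypothesis, and it is available because nothing else in your construction touches $l_\ell$ or $s$. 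Second, and more fundamentally, the inference ``this exhibits a way to cover all segments crossing $l_3$ without using $l_3$, i.e.\ a maximum matching of $I_3$ avoiding $l_3$'' is false: criticality is a statement about matching \emph{cardinality}, not about covering the neighborhood of $l_3$. Concretely, take two vertical lines $l_3$ (left) and $A$ (right), a segment $t_1$ crossing both, and a segment $t_2$ crossing only $A$: the matching $\{(A,t_1)\}$ covers every segment crossing $l_3$ and avoids $l_3$, yet $l_3$ is critical (the maximum matching has size $2$ with $l_3$ and size $1$ without it). So neighborhood coverage alone yields no contradiction; indeed, your construction as described has size one less than the maximum matching of $I_3$, which is perfectly consistent with $l_3$ being critical.

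What is needed, and what the paper does, is explicit cardinality bookkeeping. Let $t$ be the maximum matching size of $I_3$ and let $j$ be the number of 3-hitter pairs removed between the matching of $l_3$ and the test of $l_\ell$. By the accounting behind Lemma~\ref{lem:hrvl_3hit} (each matched critical line drops the maximum matching size by exactly one, while removing non-critical lines and segments crossing no lines leaves it unchanged), the active instance when $l_\ell$ is tested has maximum matching size $t-j-1$, and non-criticality of $l_\ell$ gives a matching $M'$ of that size avoiding $l_\ell$. Then $M'$, together with the $j$ interim pairs and the edge $(l_\ell, s)$, is a matching in $I_3$ of size exactly $t$ that avoids $l_3$, contradicting the criticality of $l_3$. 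This is essentially the paper's proof: it takes $l_\ell$ to be the \emph{closest} l-hitter to the right of $l_3$ (so the intervening lines are all 3-hitters, keeping the count clean), and builds exactly this augmented matching. Once you argue this way, your geometric detour --- the leftmost-right-endpoint rule and the criticality of $l_\ell$ in $I_3$ --- becomes unnecessary; only the criticality test, the interim pairs, and the hypothesized edge $(l_\ell,s)$ are used.
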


\begin{proof}
The proof is by contradiction.  Let $l$ be a 3-hitter, matched to
segment $e$.  Let $l_h$ be the closest l-hitter to the right of line
$l$ and suppose $l_h$ intersects segment $e$.  There can be $j \ge 0$
lines between $l$ and $l_h$ which must all be 3-hitters.  Let this set
of 3-hitters with their matched segments be $(l_1, e_1), (l_2,
e_2), \ldots, (l_j, e_j)$.  Let $t$ be the size of the maximum
matching in $G$ at the time line $l$ is tested for criticality.  That
is, the maximum matching size drops to $t-1$ when line $l$ is removed.
When line $l_h$ is tested for criticality, the size of the maximum
matching is $t-j-1$ whether $l_h$ is included or not.  Let $M$ be a
maximum matching when $l_h$ is not included.  We can augment $M$ to a
matching of size $t$ that does not include line $l$: add the $j$
pairings from the 3-hitters between $l$ and $l_h$ and then add $(l_h,
e)$.  Segment $e$ is not part of matching $M$ since $e$ is removed
from the set of active segments in line~\ref{segment-matched} of
Algorithm~\ref{algo:hrvl}.  This contradicts the criticality of line~$l$.
\qed
\end{proof}

\begin{lemma}
  $lh_i$ cannot be on the right side of $lh_i'$,  $i=1,2,..,k$.
  \label{lem:left}
\end{lemma}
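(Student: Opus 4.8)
The plan is to argue by contradiction, first recasting the positional claim $lh_i\le lh_i'$ as a counting statement about left prefixes, and then to exploit the criticality test that drives Algorithm~\ref{algo:hrvl}.

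I would begin with the reduction. Write $D$ and $D'$ for the sorted left-behind lines of $S$ and $S'$ (of common cardinality, since by Lemma~\ref{lem:hrvl_3hit} the two solutions use the same, maximum, number of 3-hitters). Then $lh_i\le lh_i'$ for every $i\le k$ is equivalent to the prefix-domination condition
\[
\bigl|\,D\cap(-\infty,x]\,\bigr|\;\ge\;\bigl|\,D'\cap(-\infty,x]\,\bigr|\qquad\text{for every threshold }x\le lh_k',
\]
because $lh_i\le lh_i'$ says exactly that $D$ already contains at least $i$ lines at or to the left of $lh_i'$, which is the displayed inequality at $x=lh_i'$. So it suffices to show that, in every left prefix, the algorithm leaves behind at least as many lines as the comparison solution $S'$; equivalently, the algorithm 3-hits (saturates) the fewest possible lines in each left prefix.

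Now suppose this fails at some minimal $i\le k$, so that inside $P=(-\infty,lh_i']$ the algorithm uses strictly more 3-hitters than $S'$ (it leaves behind the $i-1$ lines $lh_1,\dots,lh_{i-1}$ of $P$, whereas $S'$ leaves behind $lh_1',\dots,lh_i'$, at least $i$). By counting there is a line $q\le lh_i'$ that $S$ makes a 3-hitter but that $S'$ leaves behind. Because $S$ 3-hits $q$, the criticality test certifies that, in the active (residual) instance $I_q$ present when $q$ was examined, every maximum matching of the lines-segments graph saturates $q$. I would reach a contradiction by exhibiting a maximum matching of $I_q$ that avoids $q$: start from the matching underlying $S'$ (in which $q$ is free) and transport it into $I_q$. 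The alternating-path analysis of the symmetric difference of the two matchings yields a path from $q$ to a line left behind by $S$, and Claim~\ref{claim:no-intersect} together with the leftmost-right-endpoint matching rule are exactly what guarantee that the prefix segments $S'$ relies on have not already been consumed by the 3-hitters the algorithm committed earlier; hence $S'$'s matching restricts to a maximum matching of $I_q$ missing $q$, contradicting the criticality of $q$.

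The step I expect to be the main obstacle is this transport argument, since the residual instances of $S$ and $S'$ differ: the two solutions remove different lines and, crucially, different segments as they build their 3-hitters, so a matching feasible for $S'$ need not a priori survive in $I_q$. Reconciling this is where the bidirectional bookkeeping of Algorithm~\ref{algo:hrvl} must be used: I would invoke the balancing invariant (the counters $H[\mathrm{left}]$ and $H[\mathrm{right}]$ stay within one of each other, so for indices up to $k$ the relevant left-behind lines are genuinely found by left sweeps and processed left-to-right) together with Claim~\ref{claim:no-intersect} to show that no prefix segment needed by $S'$ is prematurely matched away by the algorithm. A cleaner but equivalent way to organize the same content is the transversal-matroid view: matchable line-sets are the independent sets of a transversal matroid, $D$ and $D'$ are co-bases, the criticality rule is precisely the matroid-greedy inclusion test, and the prefix-domination we want is the Gale (componentwise) optimality of the leftmost co-basis. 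In that language the obstacle becomes showing that the alternating left/right sweep still realizes the leftmost co-basis on its first $k$ elements, which again reduces to verifying that right-sweep removals, being non-critical, never force a left line into the matching within the first $k$.
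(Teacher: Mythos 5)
Your reduction to prefix domination and your identification of the symmetric-difference alternating path are sound, and they run parallel to the paper's own ``tracing'' argument; but the step you yourself flag as ``the main obstacle'' is not a technicality to be discharged later --- it is the entire content of the lemma, and the specific plan you sketch for it breaks. Your contradiction is supposed to come from exhibiting a maximum matching of the residual instance $I_q$ that misses $q$. However, the mere existence of a line $q$ that $S$ 3-hits and $S'$ leaves behind is perfectly consistent with the lemma being true: take two lines $l_a < l_b$ crossing one common segment; Algorithm~\ref{algo:hrvl} declares $l_a$ non-critical and matches $l_b$, while an equally optimal $S'$ matches $l_a$ and leaves $q=l_b$ behind, yet $lh_1 = l_a < lh_1' = l_b$. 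So any argument that contradicts criticality using only the existence of such a $q$ proves too much. In that example the transport fails for a structural reason your sketch never confronts: the $M_{S'}$-line $l_a$ was removed from the active instance \emph{before} $q$ was examined, so $M_{S'}$ does not restrict to a maximum matching of $I_q$, and indeed $q$ genuinely \emph{is} critical in $I_q$. In general, the alternating path from $q$ ends at a line left behind by $S$, and whenever that terminal line (or any intermediate $M_{S'}$-line) was processed before $q$ --- declared non-critical and moved to $L_2$, or matched and deleted --- the path exits $I_q$ and your contradiction evaporates. Showing that the counting disparity $lh_i > lh_i'$ forces the path to stay inside $I_q$ is precisely the hard part, and neither Claim~\ref{claim:no-intersect} alone nor the ``balancing invariant'' (which the paper never establishes in the form you need) delivers it.

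It is worth seeing how the paper's proof of Lemma~\ref{lem:left} sidesteps residual instances altogether. It runs the trace on the \emph{full} instance, starting at $lh_i'$: maximality of the number of 3-hitters in $S'$ (an augmenting-path argument, not per-line criticality) forces each traced segment to be 3-hit in $S'$, and Claim~\ref{claim:no-intersect} confines every traced line to the left of $lh_i$, so in the base case every traced line is a 3-hitter of $S$ and the trace can never terminate --- contradiction by finiteness. In the inductive step the trace may legitimately end at a left-behind line $lh_j$ with $j<i$; the paper then swaps the two matchings along the trace inside $S'$, which keeps the number of 3-hitters fixed, moves an l-hitter strictly left, and permits a fresh trace, with finiteness again closing the argument. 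This induction-plus-modification step is exactly the mechanism that handles the case that defeats your transport, and your proposal has no counterpart to it. The proposal therefore shares the right ingredients with the paper but has a genuine gap at its center.
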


\begin{proof}
The proof is by induction.
We prove the base case by contradiction. Figure~\ref{fig:trace}
illustrates the following argument.  When $i=1$, we assume that $lh_1$
is on the right side of $lh_1'$.  This means that in $S$, line $lh_1'$
is a 3-hitter. Suppose $e_1$ is the corresponding segment hit by
$lh_1'$ in $S$. We know in $S'$ line $lh_1'$ does not hit $e_1$ (since
line $lh_1'$ is not a 3-hitter in $S'$). Thus, $e_1$ must be hit by a
different 3-hitter in $S'$, say $l_3$ (otherwise, $lh'$ can 3-hit
$e_1$ to increase the number of 3-hitters, contradicting the choice of
$S'$). Again in $S$, $l_3$ hits $e_2$, which means in $S'$, $e_2$ must
be hit by another line $l_4$. Claim~\ref{claim:no-intersect}
guarantees that all of the lines $l_i$ and the segments $e_j$ involved in the
tracing process are on the left side of $lh_1$. This tracing will stop
eventually, because there are only a finite number of lines to the
left of line $lh_1$. This gives a contradiction.

Now we assume that $i$ is the smallest integer such that $lh_i$ is to
the right of $lh_i'$.  We again start tracing from $lh_i'$.  The
tracing process can only end at $lh_j$ ($j<i$); otherwise, a
contradiction exists, as in the base case. Let the tracing sequence be
${lh_i', e_1, l_3, e_2, l_4, \dots, lh_j}$. Since $lh_j'$ is on
the left side of $lh_i'$, so is $lh_j$. In $S'$, we replace 3-hitters
of $S'$ in the tracing sequence by 3-hitters of $S$ in the
sequence. Now, in $S'$, $lh_j$ becomes an l-hitter instead of
$lh_i'$. This modified $S'$ has improved with the replacement of an
l-hitter with a ``more capable'' (left) l-hitter, while keeping the
number of 3-hitters the same. Now, we start a new trace with the new $S'$.

In summary, if the tracing ends with an l-hitter in $S$, we improve
$S'$ and resume tracing.  This process must end in a contradiction because
there are only a finite number of lines, and each revised $S'$ moves an l-hitter
strictly left.
\qed
\end{proof}

\begin{figure}[htpb]
  \begin{center}
    \includegraphics[width=0.9\textwidth]{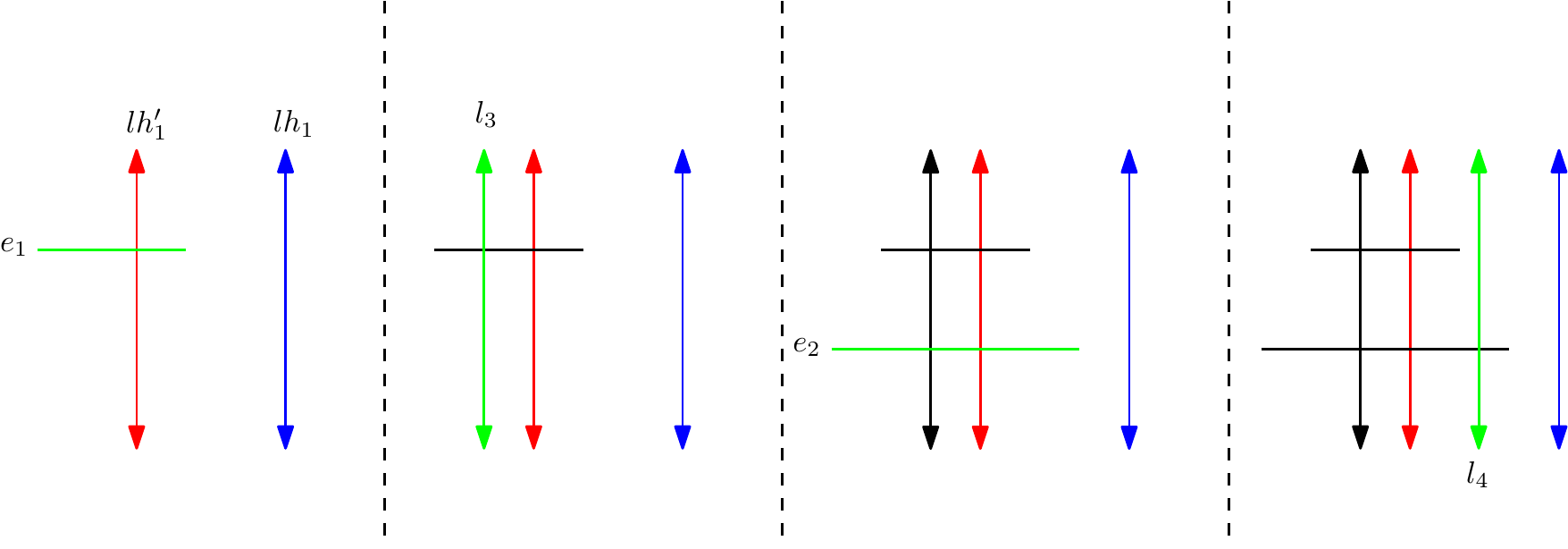}
  \end{center}
  \caption{The tracing sequence for the base case of the proof of Lemma~\ref{lem:left}.  The order items are visited in the tracing is $lh_1$, $lh_1'$, $e_1$, $l_3$, $e_2$, $l_4$. The original l-hitter $lh_1$ is always blue and the original l-hitter $lh_1'$ is always red.  Then for each step of the tracing, the new line or segment is green, with all older elements in black.}
  \label{fig:trace}
\end{figure}

An immediate result from this lemma is 
  \begin{align}
    E(lh_i, S') \leq E(lh_i', S').
    \label{}
  \end{align}

Given a solution $P$ and a line $l$, let $C(l, P)$ denote the number
of segments on the left side of $l$ that have been 3-hit in $P$.  Let
$N(l)$ be the total number of segments on the left of line $l$. The
following lemma shows that the segments that $S$ leaves to be double hit
are the segments that are easier to double-hit.

\begin{lemma}
  $C(lh_i, S) \geq C(lh_i, S')$, $i=1,2,..,k$.
  \label{lem:max hit}
\end{lemma}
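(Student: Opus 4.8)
The plan is to reduce the inequality to a comparison of two matchings living in the same ``left sub-instance.'' Fix $i\in\{1,\dots,k\}$ and let $L_i$ denote the number of input vertical lines strictly to the left of $lh_i$; this quantity is the same for both solutions, since the lines are part of the input. Let $G_L$ be the restriction of the lines-segments graph $G$ to the $L_i$ lines strictly left of $lh_i$ together with the segments lying entirely to the left of $lh_i$ (those counted by $N(lh_i)$). The whole argument will amount to showing that $C(lh_i,S)$ equals the maximum matching number $\mu(G_L)$ of $G_L$, while $C(lh_i,S')$ is merely the size of \emph{some} matching in $G_L$, so that $C(lh_i,S')\le\mu(G_L)=C(lh_i,S)$.

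First I would pin down $C(lh_i,S)$. Because $lh_i$ is (for $i\le k$) an l-hitter of $S$, Claim~\ref{claim:no-intersect} tells us that every $3$-hitter of $S$ lying to the left of $lh_i$ hits a segment that does not cross $lh_i$, hence a segment lying entirely to the left of $lh_i$. Conversely, any segment entirely to the left of $lh_i$ that is $3$-hit in $S$ is hit by a line crossing it, necessarily to the left of $lh_i$. Distinct $3$-hitters hit distinct segments, so $C(lh_i,S)$ is exactly the number of $S$-$3$-hitters to the left of $lh_i$, and since precisely the $i-1$ lines $lh_1,\dots,lh_{i-1}$ are the non-$3$-hitters there, $C(lh_i,S)=L_i-(i-1)$. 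Crucially, the same Claim shows the lines strictly left of $lh_i$ are used by $S$ \emph{only} to $3$-hit entirely-left segments; hence the $S$-$3$-hittings partition into a matching $M_L$ of $G_L$ (of size $C(lh_i,S)$) and a remaining matching that uses only lines strictly right of $lh_i$ and segments not entirely left of $lh_i$.

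Next I would argue that $M_L$ is a \emph{maximum} matching of $G_L$. Suppose not, and let $M_L^{+}$ be a larger matching of $G_L$. Since $M_L^{+}$ uses only lines strictly left of $lh_i$ and segments entirely left of $lh_i$, it is line- and segment-disjoint from the remaining part of the $S$-$3$-hitting identified above. Replacing $M_L$ by $M_L^{+}$ would then yield a set of $3$-hittings strictly larger than $S$'s, contradicting Lemma~\ref{lem:hrvl_3hit}, which states that Algorithm~\ref{algo:hrvl} selects a maximum-cardinality set of $3$-hitters. Hence $\mu(G_L)=|M_L|=C(lh_i,S)$. Finally, the segments entirely left of $lh_i$ that are $3$-hit in $S'$ are each hit by a distinct $S'$-$3$-hitter crossing them, again necessarily lying strictly left of $lh_i$; these pairs form a matching in $G_L$, so $C(lh_i,S')\le\mu(G_L)=C(lh_i,S)$, as desired.

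The main obstacle, and the step deserving the most care, is the claim that $M_L$ is maximum in $G_L$: it hinges on the clean geometric partition of $S$'s $3$-hittings across $lh_i$, which in turn rests entirely on Claim~\ref{claim:no-intersect}. One must verify that this non-crossing property holds for \emph{every} index $i\le k$ (not only for the $d$ indices that end up as actual l-hitters), which means extending the criticality argument of Claim~\ref{claim:no-intersect} to any non-$3$-hitter line encountered while sweeping left; handling the middle lines $lh_{d+1},\dots,lh_k$ is the delicate point. Combined with Lemma~\ref{lem:left}, which yields the companion inequality $E(lh_i,S')\le E(lh_i',S')$, this lemma supplies the missing half $E(lh_i,S)=N(lh_i)-C(lh_i,S)\le N(lh_i)-C(lh_i,S')=E(lh_i,S')$, and hence the target inequality~(\ref{eqn:l-hitter}).
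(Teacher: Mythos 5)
Your proof is correct and is essentially the paper's argument: the paper also uses Claim~\ref{claim:no-intersect} to split the 3-hittings cleanly across $lh_i$ and then swaps the portion of $S$ left of $lh_i$ for that of $S'$, contradicting the maximality of $S$'s 3-hitter set (Lemma~\ref{lem:hrvl_3hit}). Your maximum-matching formalization via $G_L$ is just a more explicit packaging of that same exchange, and your final worry is benign, since every $lh_i$ with $i\le k$ is a line designated as an l-hitter during a left sweep, so Claim~\ref{claim:no-intersect} applies to it directly.
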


\begin{proof}
We showed in Claim~\ref{claim:no-intersect} that if a segment is 3-hit to the left of an l-hitter $l$, then the segment ends before reaching line $l$.
If $C(lh_i, S)<C(lh_i, S')$, then we replace the part of $S$ that is on
the left side of $lh_i$ with the corresponding part of $S'$. This gives
us a solution that has more 3-hitters than $S$ has, contradicting 
the assumption that $S$ has the maximum set of 3-hitters.
\qed
\end{proof}

Therefore we obtain
\begin{eqnarray*}
  E(lh_i, S) &=& N(lh_i) - C(lh_i, S) \\
   & \leq & N(lh_i) - C(lh_i, S')=E(lh_i, S')\leq E(lh_i', S').
  \label{}
\end{eqnarray*}.

\section{Hitting Lines and Segments}
\label{sec:mix}
\subsection{Hardness}

\begin{theorem}
\label{th:rays}
Hitting set for horizontal unit segments and vertical lines is NP-complete.
\end{theorem}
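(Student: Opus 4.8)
The plan is to reduce from 3SAT, building on the structure already developed for the line-hitting hardness results. The key conceptual shift from hitting lines is that horizontal \emph{unit segments} can be positioned at specific locations, so intersections between horizontal and vertical objects occur only where we deliberately place them. I would let each vertical line play the role of a ``column'' at which segments can be hit, and use the unit-length restriction to force a rigid combinatorial structure: a single hit point can cover a vertical line together with all horizontal unit segments that pass through that point, but because the segments have bounded length, two segments placed far apart along the same horizontal level cannot be hit by one point. This locality is exactly the feature that lets a gadget behave like a boolean variable.

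\medskip
\noindent\textbf{Construction.}
First I would build a \emph{variable gadget} in which a group of horizontal unit segments and vertical lines admits exactly two minimum hitting configurations of equal cost, corresponding to the two truth assignments. The natural device is to arrange segments so that they can be hit economically in one of two ``phases'' (e.g., hitting points shifted left versus right, analogous to the odd/even vertex choice in the 3D loop gadget of the previous theorem). I would make the two configurations cost the same so that an optimal solution is forced to commit to one phase per variable. Next I would build a \emph{clause gadget}: a vertical line (or a small cluster of segments) positioned so that it can be hit ``for free'' by one of the hit points already placed in a variable gadget precisely when at least one literal in the clause is set true. The unit-length constraint is what makes the clause line reachable only from the variable gadgets encoding its three literals, and from no others; I would place the clause objects at coordinates chosen (as in the 3SLC proof, by a continuity/genericity argument) so that no \emph{unintended} coincidences of three collinear or cohit objects arise.

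\medskip
\noindent\textbf{Correctness.}
I would then prove the two directions. For the forward direction, given a satisfying assignment, I select in each variable gadget the hit points of the phase matching the variable's truth value; by construction these points already hit every clause line, so the total equals the target bound $B$. For the reverse direction, I argue that any hitting set of size at most $B$ must spend a forced number of points inside each variable gadget (a lower bound matching the gadget's optimum), leaving no slack, which forces each gadget into one of its two clean phases; then every clause line must be hit by a variable-gadget point ``for free,'' which can only happen if some literal of the clause is true, yielding a satisfying assignment. The bookkeeping is a counting argument of exactly the flavor used in the 3SLC proof: total points $=$ (forced variable cost) $+$ (clause cost), with equality pinning down the structure.

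\medskip
\noindent\textbf{Main obstacle.}
The hard part will be the gadget geometry: making the variable gadget have \emph{exactly} two optimal phases of equal cost using only unit-length horizontal segments and full vertical lines, while ensuring the clause line is hittable-for-free by a gadget point under one phase but not otherwise, and simultaneously guaranteeing that distinct gadgets do not interfere (no accidental shared hit points). Because vertical lines are unbounded but horizontal segments are unit-length, routing the ``wires'' that carry a variable's phase to all the clauses containing it is delicate: I cannot bend a single horizontal object, so I expect to need a chain of unit segments on a common horizontal level acting as a rigid ``track,'' with the unit-length restriction preventing one point from hitting two track segments, thereby propagating the phase choice. Verifying that this track faithfully transmits the truth value to each clause, and that Corollary~\ref{cor:unit_npc} (the all-unit-length corollary) follows by the same construction, is where most of the care will go. APX-hardness for the general horizontal/vertical-segment version, claimed separately, would require a gap-preserving variant (reduction from a bounded-occurrence or MAX-3SAT-style problem), which I would handle as an extension of this core construction.
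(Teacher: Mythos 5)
Your high-level plan coincides with the paper's: reduce from 3SAT, build variable gadgets with exactly two equal-cost optimal ``phases,'' represent each clause by a vertical line that gets hit for free exactly when some literal is true, and finish with a tight counting argument. But the one concrete mechanism you propose for propagating a variable's phase is backwards, and that mechanism is the crux of the construction. You write that the track should be a chain of unit segments ``with the unit-length restriction preventing one point from hitting two track segments, thereby propagating the phase choice.'' If no point can hit two track segments, then every track segment needs its own private hit point, that point's position is unconstrained by its neighbors, and nothing propagates. The paper does the opposite: each variable is a row of $2m+2$ \emph{collinear, overlapping} unit segments (recall that this problem, unlike GSS, explicitly allows overlap), so a point placed in an overlap region hits two segments at once. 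In any solution meeting the tight bound $N/2$, every point must cover two of the $N$ counted objects (two overlapping segments, or a start/end segment plus a black vertical line); the resulting loop structure (a variable row, a bottom row, and black vertical lines at the ends) admits exactly two such pairings --- the red and the green point sets --- and this alternation is what encodes and transmits the truth value. The role of unit length is only to prevent a point from covering more than two segments (distant segments of a row cannot be co-hit), not to forbid co-hitting adjacent ones; this contradicts your own earlier, correct remark about locality.

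A second gap is that your ``no interference between gadgets'' requirement is unachievable with full vertical lines and is also not how the paper argues. A clause line is unbounded, so it necessarily crosses segments of every row it passes through, not just the rows of its three literals. The paper does not avoid these crossings; it defuses them by the accounting: clause lines are excluded from the count $N$, so a point at an undesignated crossing (clause line plus one segment) covers only one counted object and hence cannot occur in a solution of size exactly $N/2$. Clause lines can be hit ``for free'' only at the designated 3-intersections, where the clause line passes through an overlap of two consecutive segments of a literal's row. Likewise, the black vertical lines are shared among all rows rather than belonging to per-variable gadgets, and the proof pairs false variables with auxiliary bottom rows via a matching argument to form the loops. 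Without the overlap mechanism and this counting, the construction as you describe it would fail; with them, it becomes the paper's proof.
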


\begin{proof}
The reduction is from 3SAT. Consider a 3SAT instance with $n$ variables and $m$ clauses. See Figure~\ref{fig:rays_npc}.

\begin{figure}[ht]
\centering
\begin{picture}(0,0)%
\includegraphics{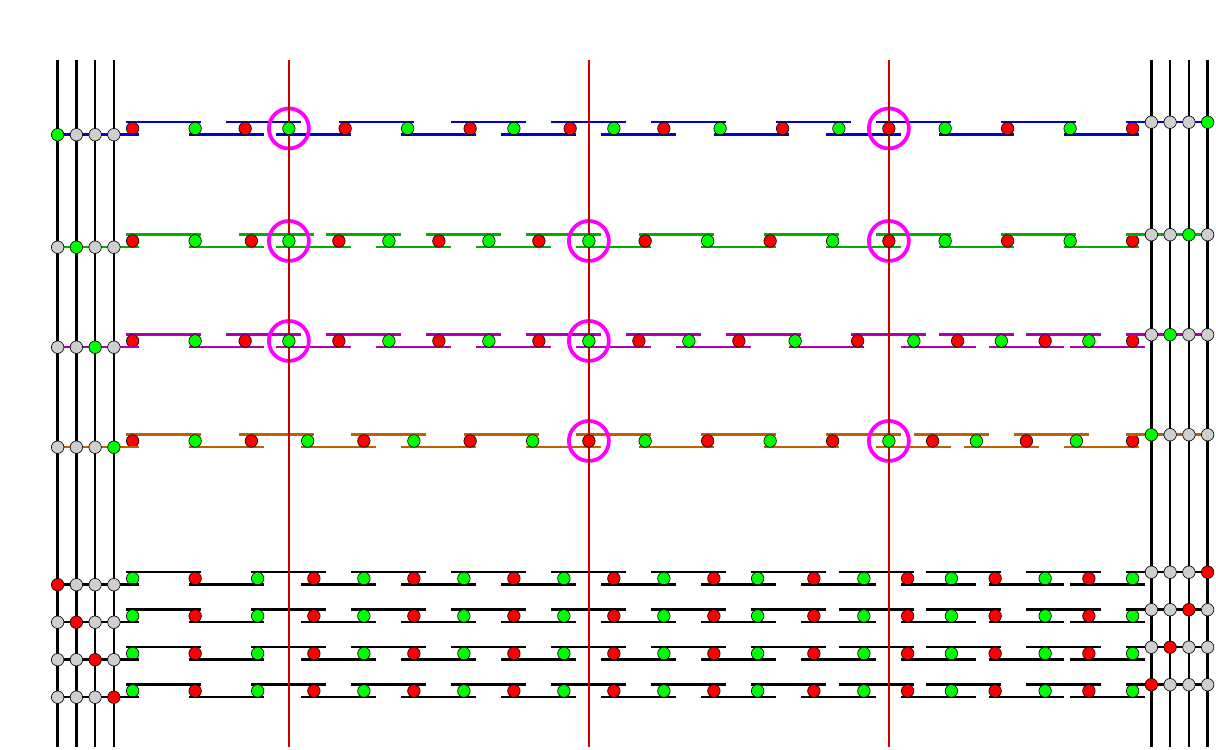}%
\end{picture}%
\setlength{\unitlength}{1579sp}%
\begingroup\makeatletter\ifx\SetFigFont\undefined%
\gdef\SetFigFont#1#2#3#4#5{%
  \reset@font\fontsize{#1}{#2pt}%
  \fontfamily{#3}\fontseries{#4}\fontshape{#5}%
  \selectfont}%
\fi\endgroup%
\begin{picture}(14573,8997)(136,-7894)
\put(151,-4336){\makebox(0,0)[b]{\smash{{\SetFigFont{10}{12.0}{\rmdefault}{\mddefault}{\updefault}{\color[rgb]{.75,.38,0}$x_4$}%
}}}}
\put(151,-1861){\makebox(0,0)[b]{\smash{{\SetFigFont{10}{12.0}{\rmdefault}{\mddefault}{\updefault}{\color[rgb]{0,.69,0}$x_2$}%
}}}}
\put(151,-586){\makebox(0,0)[b]{\smash{{\SetFigFont{10}{12.0}{\rmdefault}{\mddefault}{\updefault}{\color[rgb]{0,0,.82}$x_1$}%
}}}}
\put(3601,764){\makebox(0,0)[b]{\smash{{\SetFigFont{10}{12.0}{\rmdefault}{\mddefault}{\updefault}{\color[rgb]{.82,0,0}$c_1$}%
}}}}
\put(10801,764){\makebox(0,0)[b]{\smash{{\SetFigFont{10}{12.0}{\rmdefault}{\mddefault}{\updefault}{\color[rgb]{.82,0,0}$c_3$}%
}}}}
\put(7201,764){\makebox(0,0)[b]{\smash{{\SetFigFont{10}{12.0}{\rmdefault}{\mddefault}{\updefault}{\color[rgb]{.82,0,0}$c_2$}%
}}}}
\put(151,-3061){\makebox(0,0)[b]{\smash{{\SetFigFont{10}{12.0}{\rmdefault}{\mddefault}{\updefault}{\color[rgb]{.69,0,.69}$x_3$}%
}}}}
\end{picture}%
  \caption{A set of horizontal unit segments and vertical lines that represents
the {\sc 3SAT} instance
$I=(x_1\vee {x_2}\vee {x_3})
\wedge({x_2}\vee x_3\vee\overline{x_4})
\wedge(\overline{x_1}\vee \overline{x_2}\vee x_4)$. For better visibility, collinear segments
are slightly shifted vertically, with red and green points indicating overlapping segments.
In an optimal hitting set, the point covering a 
horizontal segment labeled with a variable name $x_i$ induces a truth value for the corresponding variable: 
selecting one of its gray points (e.g., in the indicated green manner) assigns a value of ``true''; selecting the red point at the right end of the segment, a value of ``false''.
Overall, truth assignments for each variable correspond to a set of green or 
red points, respectively.
(Note that there are several equivalent choices from the gray points, which all correspond to the same truth assignments.)
Literals occurring in clauses are indicated by magenta circles; these are the only places
where a point can hit three segments or lines at once.
}
\label{fig:rays_npc}
\end{figure}

Each variable is represented by a collinear connected set of $2m + 2$
horizontal unit segments: a start segment, a pair of segments for each clause,
and an end segment. Each clause is represented by a red
vertical line that intersects appropriate pairs of horizontal variable
segments (if that variable occurs in a clause) or just single segments
(in case a variable does not occur in a clause).  Setting appropriate
parities for the literals in a clause is achieved by appropriate
horizontal shifting of the segments, as shown in the figure.  This
results in a construction in which the only place where three of the
elements (segments or lines) can be hit involves a vertical line
representing a clause, corresponding to literals occurring in the
respective clauses. (These are indicated by magenta circles in the
figure.)  There are $n$ black vertical lines intersecting each of the variable start segments 
and $n$ black vertical lines intersecting each of the variable end segments.
Let $N_H = 4mn + 4n$ be the number of horizontal segments.  This includes
the $2m + 2$ per variable just described, and $2m + 2$ more per variable at the bottom
of the instance as shown in Figure~\ref{fig:rays_npc}.  Let $N = N_H + 2n$ be the number of
horizontal segments plus the black vertical lines.

We show that any feasible hitting set with exactly $N/2$ points
induces a truth assignment and vice versa.  The vertical black lines are parallel, so no point
can hit more than one of them. There is no point that
hits more than two of the horizontal segments at once.  There is also
no point on a black vertical line that hits more than one horizontal segment. Therefore,
stabbing all $N$ objects requires at least $N/2$ points, and any
solution consisting of exactly $N/2$ points must hit each object
(horizontal segment or vertical black line) exactly once and hit two objects.
We now argue that hitting the $N$ objects with exactly $N/2$ points induces a truth assignment.
For ease of exposition, call a set of colinear horizontal segments a row.  There are $2n$ rows: $n$ variables rows and $n$ bottom rows.
Consider the start segment of a top row.  That segment will
be hit in one of two ways.  If it shares a hit point with the next horizontal segment
(such as the point colored red on the segment next to $x_1$ in Figure~\ref{fig:rays_npc}), then the variable is set to false.  If it
shares a hit point with one of the black vertical lines (such as the point colored green next to $x_1$), then the variable is set to true.
Suppose there are $q$ variables set to false.  Then there are $q$ black vertical lines that were not hit with variable start segments.
They must be hit by bottom row start segments.  Arbitrarily match each false variable row one-to-one with these $q$ bottom rows.  Each pair
of variable row and matching bottom row 
corresponds to a loop where all selected points are red (in Figure~\ref{fig:rays_npc}).  This leaves $n-q$ true variables.  Since they
collectively hit $n-q$ of the left black vertical lines, then there are $n-q$ bottom start segments that share hit points with the next
segment and share no hit point with a black vertical line.  Similarly match the true variables with these $n-q$ bottom rows to form $n-q$ loops covered only by green points.
Thus, any solution of size $N/2$ hitting the variable components must select all red or all green
points from each variable's loop, corresponding to a truth assignment.
We get an overall feasible hitting set if and only if the points also
stab the vertical clause lines, corresponding to a satisfying truth
assignment.

Any satisfying assignment can hit all segments and lines with $N/2$ points by setting the truth variable loops as described above.  A satisfying
assignment will also hit every clause line.
  
\qed
\end{proof}

After appropriate vertical scaling, we can replace the vertical lines by vertical unit segments,
immediately giving the following corollary. 

\begin{corollary}
\label{cor:unit_npc}
Deciding if there exists a set of $k$ points in the plane that hit a
given set $S$ of unit-length axis-parallel segments is NP-complete.
\end{corollary}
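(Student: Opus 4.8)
The plan is to obtain the corollary directly from Theorem~\ref{th:rays} by a single scaling transformation, together with a routine argument for membership in NP.

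For membership in NP I would use the standard discretization for axis-parallel objects: there is always a hitting set of minimum size drawn from the grid of \emph{candidate points} whose $x$-coordinate is the $x$-coordinate of a vertical segment or an endpoint of a horizontal segment, and whose $y$-coordinate is the $y$-coordinate (height) of a horizontal segment or an endpoint of a vertical segment. Any point may be snapped to such a grid point without losing a hit, since hitting an axis-parallel segment pins one coordinate to a fixed value and constrains the other to an interval; sliding the remaining free coordinate to the nearest relevant value keeps all existing hits. There are only $O(|S|^2)$ such candidates, each of polynomial bit-size, so a hitting set of size $k$ is certified by naming $k$ of them and checking in polynomial time that each segment contains one.

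For NP-hardness I would reduce from the problem of Theorem~\ref{th:rays}, hitting horizontal unit segments and vertical lines, exploiting that the instance built there has \emph{bounded vertical extent}: it consists of $2n$ horizontal rows at distinct heights, spanning some total height $h$ that is polynomial in the input. Scale the entire construction vertically by the factor $1/(h+1)$. Vertical scaling leaves all horizontal lengths unchanged, so the horizontal segments remain unit-length, while it compresses the whole instance into a vertical strip of height strictly less than $1$; the scaled coordinates retain polynomial bit-complexity, so this is a polynomial-time transformation. Now replace each vertical line, at its unchanged $x$-coordinate, by a vertical \emph{unit} segment positioned to cover the entire strip. Because the strip already contains every horizontal segment, such a unit segment meets exactly the horizontal segments that the original vertical line met, namely those whose horizontal span contains that $x$-coordinate.

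The one point requiring care is that this replacement preserves the combinatorial intersection structure verbatim, so the equivalence of Theorem~\ref{th:rays} transfers unchanged: no line-segment incidence is created or destroyed (the unit segment and the former line hit precisely the same horizontal segments), two vertical segments sit at distinct $x$-coordinates and so remain disjoint exactly as the vertical lines were, and in particular no new triple point arises, so the only places where three objects can be hit at once are still the magenta literal crossings. Hence an instance is hittable with $k=N/2$ points after the transformation if and only if it was before, which by Theorem~\ref{th:rays} holds if and only if the underlying 3SAT formula is satisfiable. The only mild obstacle is this bookkeeping, confirming that the finite vertical extent of the gadget lets a single unit length span all relevant crossings; once boundedness is in hand the argument is immediate.
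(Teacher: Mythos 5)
Your proposal is correct and takes essentially the same approach as the paper: the paper's own (one-sentence) proof likewise derives the corollary from Theorem~\ref{th:rays} by an appropriate vertical scaling so that the construction fits in a strip of height less than one, after which each vertical line is replaced by a vertical unit segment spanning that strip. Your added details---the NP-membership argument via a polynomial grid of candidate points and the check that the replacement preserves the intersection structure verbatim---are sound elaborations of what the paper leaves implicit.
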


We now show APX-hardness for the all-segment case.

\begin{theorem}
  Computing a minimum hitting set of axis-parallel segments is APX-hard.
  \label{theorem:apx-hard}
\end{theorem}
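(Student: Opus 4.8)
The plan is to establish APX-hardness via an approximation-preserving reduction (an L-reduction) from a known APX-hard problem. The natural source is a bounded-occurrence, bounded-size variant of Vertex Cover or MAX-SAT, since the paper has already shown NP-hardness of hitting axis-parallel segments through a 3SAT reduction. I would reduce from Vertex Cover on cubic graphs (or more generally bounded-degree graphs), which is well known to be APX-hard; the bounded degree is what allows the linear relationship between the optimum of the source instance and that of the constructed hitting-set instance. The key requirement of an L-reduction is a pair of linear inequalities: that the optimal hitting-set value is bounded by a constant times the optimal vertex-cover value, and that any hitting set can be converted back into a vertex cover whose excess over the vertex-cover optimum is bounded by a constant times the hitting set's excess over its own optimum.

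First I would design a gadget that encodes each vertex of the cubic graph as a small cluster of horizontal and vertical segments, with the property that ``selecting the vertex'' corresponds to placing a single point that hits a distinguished segment, while ``not selecting'' forces extra points elsewhere. Each edge of the graph would be encoded by an intersection configuration linking the two incident vertex gadgets, arranged so that an edge is ``covered'' precisely when at least one of its two endpoint points is selected. The rigidity observed in the proof of Theorem~\ref{th:rays} --- that almost every point can hit at most two objects, and three only at the specially arranged crossing points --- is exactly the structural control I would exploit to force the hitting set to behave like a vertex cover. Because the source graph has bounded degree, each vertex and edge gadget uses a bounded number of segments, so the total size and the optimum of the hitting-set instance are both linear in the size and optimum of the graph.

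Next I would prove the two L-reduction inequalities. The forward direction is straightforward: given a vertex cover $W$, I place the ``selected'' point in each gadget for $v \in W$ and fill in the forced points for all other gadgets and edges, giving a hitting set whose size is $\alpha|W|$ plus a fixed additive term linear in the instance, establishing $OPT_{HS} \le c_1 \cdot OPT_{VC}$. The reverse direction is where the real work lies: given an arbitrary hitting set $S$, I must ``normalize'' it into a canonical form where within each gadget the points behave consistently (each gadget is in a clean ``selected'' or ``unselected'' state and every edge is properly handled), without increasing $|S|$. From a canonical hitting set I read off a vertex cover $W$ and argue $|W| - OPT_{VC} \le c_2 (|S| - OPT_{HS})$.

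I expect the main obstacle to be this normalization step in the reverse direction. A hitting set can place points in geometrically ``wasteful'' or ambiguous ways that do not obviously correspond to a vertex-selection decision, and I must show that any such configuration can be locally repaired --- rerouting points to gadget-canonical positions --- at no net cost, so that the repaired solution's excess is controlled by the original's excess. Getting this local exchange argument to be both cost-nonincreasing and to yield a well-defined vertex cover, while keeping every gadget's bounded size intact, is the delicate part; the bounded degree of the source graph is essential here to ensure each repair is local and the accumulated additive constants stay linear.
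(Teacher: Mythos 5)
Your proposal correctly identifies the right genre of argument --- an approximation-preserving reduction from a bounded-occurrence APX-hard problem, with a cost-nonincreasing ``normalization'' of an arbitrary hitting set into a canonical, gadget-respecting form --- and this is indeed the skeleton of the paper's proof. But the proposal has a genuine gap: everything that constitutes the actual proof is deferred. You never construct the vertex or edge gadgets, and you explicitly flag the normalization/local-repair step as an expected obstacle rather than carrying it out; yet that step, together with the gadget design that makes it possible, is the entire content of the theorem. The paper's solution is quite specific: it reduces from MAX-2SAT(3), represents each variable by a closed loop of $6$, $8$, or $10$ axis-parallel segments whose intersection graph is an \emph{even cycle} (so it has exactly two optimal hitting sets, ``red'' and ``green,'' encoding the two truth values at equal cost), represents each clause by a single vertical segment, and arranges that the only triple intersections are the designated literal points. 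Normalization then becomes a four-case analysis on the number $h_i \in \{0,1,2,3\}$ of 3-intersection hit points per loop, where ``disagreeing'' colors certify that the solution wastes a point inside the loop, and that saved point is re-spent on the clause segment that becomes unhit; the final accounting $alg \geq m - |H_c| \geq (1-33\varepsilon)\,opt$ works only because $opt \geq m/3$ and $k \leq 11m$ keep the hitting-set optimum and the MAX-2SAT objective within constant factors of each other.

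Beyond incompleteness, your specific choice of source problem --- Vertex Cover on cubic graphs --- runs into a geometric obstruction that the paper's choice is designed to avoid, and which your sketch does not mention. First, VC needs \emph{asymmetric} gadgets (the ``selected'' state must cost strictly more than ``unselected''), whereas the even-cycle loop is naturally symmetric; MAX-2SAT, whose two truth values are symmetric, fits the geometry. Second, and more seriously: the cubic instances on which VC is APX-hard are non-planar (expander-like), so any planar realization of ``edge configurations linking the two incident vertex gadgets'' forces edge gadgets to cross one another, creating points that hit two objects from unrelated gadgets. These crossings either let a hitting set cheat (breaking the reverse L-reduction inequality) or must be neutralized by crossing gadgets; but cubic expanders have $\Omega(n^2)$ crossing number, so crossing gadgets inflate $OPT_{HS}$ to $\Theta(n^2)$ against $OPT_{VC} = \Theta(n)$, destroying the linearity the L-reduction requires. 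The paper sidesteps this entirely: all clause objects are vertical segments (pairwise parallel, hence never crossing each other), all variable loops are pairwise disjoint, and the only interactions are the controlled clause--loop crossings. Without an idea of this kind, your plan cannot be completed as stated.
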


\begin{proof}
 We give a reduction from MAX-2SAT(3), maximum 2-satisfiability in
 which each variable appears in at most three clauses. MAX-2SAT(3) is
 known to be APX-hard~\cite{ausiello2012complexity}. In our reduction,
 a clause is represented by a vertical segment. A variable gadget is a
 ``loop'' consisting of at most 8 horizontal segments and exactly 2
 vertical segments at the far left/right, linking a chain of an odd
 number (3, 5, or 7, depending if the variable appears in 1, 2, or 3
 clauses) of collinear horizontal segments on the upper portion of the
 gadget to a single horizontal segment closing the loop along the
 bottom portion of the gadget. Refer to Figure~\ref{fig:max2sat3}.  In
 total, a variable loop consists of an even number (6, 8, or 10) of
 segments, whose intersection graph is an even cycle (no three of them
 intersect).  We place red and green points, each representing an edge
 of the cycle that is the intersection graph, alternating around the
 cycle. These green/red points occur at crossings with the left/right
 vertical segments of the loop, or at overlap points along the top
 chain of horizontal segments of the loop.  Membership of a variable
 $x_i$ in a clause $c_j$ is represented by having the clause segment
 pass through a green or red point (according to whether the variable
 or its negation appears in the clause) along the variable loop for
 $x_i$, creating a 3-intersection point at the crossing.

  \begin{figure}[htpb]
    \begin{center}
      \includegraphics[width=0.8\textwidth]{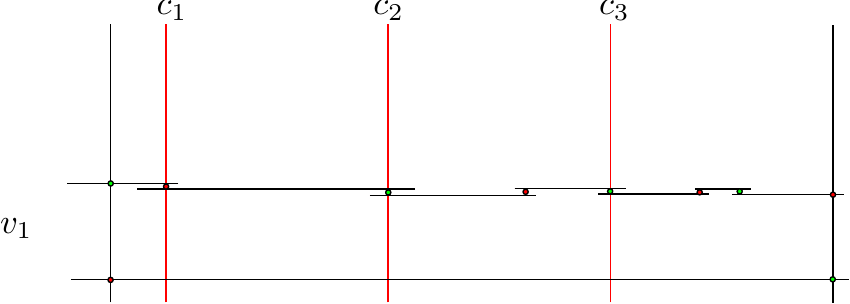}
    \end{center}
    \caption{In this example, clause $c_1$ includes the literal $\overline{v_1}$; clauses $c_2$ and $c_3$ each include the literal $v_1$. }
    \label{fig:max2sat3}
  \end{figure}

Let $m$ and $n$ be the numbers of clauses and variables, respectively,
in an instance of MAX-2SAT(3). Then, $\frac{n}{2} \leq m \leq \frac{3n}{2}$.
Let $k$ be the number of points in an optimal solution of the
corresponding hitting set problem.  If the hitting set does not
contain any 3-intersection, we know that $k \leq 5n + m\leq 11m$, since
all of the segments in each variable loop can be hit using at most 5 hit points.

Suppose ${\cal A}$ is an approximation algorithm for the minimum
hitting point problem on axis-parallel line segments, and that ${\cal
  A}$ guarantees an approximation factor of $1+\varepsilon$. For any
hitting set $H$ (of size $|H|\leq (1+\varepsilon)k$) produced by
${\cal A}$, we obtain a solution for the corresponding MAX-2SAT(3)
instance, as follows.

Consider the variable loop for $x_i$.  In $H$, the number, $h_i$, of
3-intersection hit points along the variable loop could be 0, 1, 2,
or~3.

If $h_i=0$, then we replace the hit points of $H$ along the loop with
an optimal set of hit points along the loop -- either the green or the
red points.  This sets the truth value of $x_i$ (green is ``true'',
red is ``false'').  Further, this exchange has not caused the number
of hit points to go up.

If $h_i=1$, with one 3-intersection (green or red) point $p$, we
replace the hit points of $H$ along the loop with the (optimal) set of
all green or all red hit points along the loop.  This sets the truth
value of $x_i$ (green is ``true'', red is ``false''), and this
exchange has not caused the number of hit points to go up.

If $h_i=2$, then the two 3-intersection points might ``agree'' (be the
same color) or ``disagree'' (be different colors).  If they agree, we
set the truth value of the variable accordingly, and use an optimal
set of hit points of the appropriate color along the loop.  If they
disagree, then we know that the number of hit points of $H$ used in
hitting the variable loop is suboptimal (by at least 1), since $H$
does not use all-red or all-green hit points.  Thus, we set the
variable either way (use an optimal hitting set of all-green or
all-red), and we have a leftover point of $H$, which we use to hit the
clause line that became unhit in the process of setting the hit point
set to be monochromatic.  There was no increase in the number of hit
points.

If $h_i=3$, then if the three points all agree (are of the same
color), we set the truth value of $x_i$ accordingly.  Otherwise, we
know that the set of points of $H$ used to hit segments in this
variable loop is suboptimal; we set the truth value of $x_i$ according
to the majority color among the three 3-intersections, and use the one
saved hit point to hit the clause that was previously hit by the
(minority color) point of $H$, but now is not.

In this way, we have now transformed $H$ into a set $H_v\cup H_c$,
with $|H_v\cup H_c|=|H|$, where $H_v$ is an optimal hitting set for
variable loops (using hit points of a single color around each loop),
plus a (disjoint) set $H_c$ of additional points to hit clause
segments.  Let $alg$ be the number of clauses satisfied by the
variable setting determined by $H_v$.  Then, we know that $alg\geq
m-|H_c|$.

Given an optimal truth assignment for the MAX-2SAT(3) instance,
achieving $opt$ satisfied clauses, one way to construct a hitting set for
all of the segments in the construction is the following: Optimally
place hitting points within each variable loop, according to the truth
assignment (and using exactly $|H_v|$ hit points), and then hit the
remaining $m-opt$ clause lines, not yet hit by 3-intersections within
variable loops, using $m-opt$ separate hit points.  Since $k$ is the
optimal number of hit points for the whole construction, we know that
$k\leq |H_v| + (m-opt)$.

Since we are assuming that ${\cal A}$ is a $(1+\varepsilon)$-approximation,
we have that 
$$|H|=|H_v|+|H_c| \leq (1+\varepsilon)k, $$
which implies that 
$$|H_c| \leq (1+\varepsilon)k - |H_v|.$$
Then, putting these together, and using the facts that 
$opt \geq \frac{m}{3}$ and that $k\leq 11m$, we have
$$alg \geq m-|H_c| \geq m- (1+\varepsilon)k + |H_v|$$
$$ \geq m- (1+\varepsilon)k + k-m + opt \geq (1-33\varepsilon)opt.$$
This implies that our hitting set problem is APX-hard
(a PTAS for the minimum hitting set of axis-parallel
segments would imply a PTAS for MAX-2SAT(3)).
\qed
\end{proof}

\subsection{Approximation}

We give a $5/3$-approximation for hitting a set $V$ of vertical lines and a set $H$ of  horizontal segments.
We start by looking at the lower bounds: $v=|V|$ is the number of
vertical lines. It is a lower bound.  Let $h$ be the lower bound on
hitting horizontal segments only.  We can compute $h$ and a corresponding solution exactly in polynomial time; it is the
minimum number of hit points for the horizontal segments (computed on
each horizontal line).  This is equivalent to hitting a collection of intervals with a minimum number of points and can be solved in polynomial time by a well-known ``folklore'' result, as mentioned in Section~\ref{sec:seg}.  At any stage of the algorithm, we let $h$ and $v$ be the current
values of these lower bounds for hitting the current (remaining unhit) sets $H$ and $V$.

In Stage 1, we place two kinds of points:

(a) We place hitting points on vertical lines that reduce $h$ (and $v$) by one.
These points are ``maximally productive'' since no single hitting point can do more than to reduce $h$ and $v$ each by one.
As vertical lines are hit, we remove them from $V$.
Similarly, as horizontal segments are hit, we remove them from~$H$.

(b) Look for pairs (if any) of points, on the same horizontal line and
on two vertical lines (from among the current set $V$), that
decrease $h$ by one.

Let $k_1$ and $k_2$ be the number of Type (a) and Type (b) points placed in this stage, respectively.
Therefore, for the remaining instance, the lower bound $h$ decreases by $k_1 + k_2/2$, and $v$ decreases by $k_1 + k_2$.

In Stage 2, we now have a set of vertical lines $V$ and horizontal segments $H$ such that no
single point at the intersection of a vertical line and a horizontal segment (or segments)
reduces $h$, and no pair of points on two distinct vertical lines 
reduces $h$.

\begin{lemma}
  \label{lem:stage3-lb}
For such sets $V$ and $H$ as in Stage 2, an optimal hitting set has
size at least $v+h$, where $v=|V|$ and $h$ is the minimum number of
points to hit $H$.
\end{lemma}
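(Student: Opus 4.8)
The plan is to prove $\mathrm{OPT}\ge v+h$ by exhibiting a \emph{certificate}: a collection of $v+h$ input objects (vertical lines and horizontal segments), no two of which can be hit by a single point, so that any hitting set must spend a distinct point on each. Two lower bounds are immediately available: $\mathrm{OPT}\ge v$, since the $v$ vertical lines are parallel and hence no point hits two of them, and $\mathrm{OPT}\ge h$, since $h$ equals the maximum number of pairwise disjoint horizontal segments (the interval-stabbing dual). The obstacle to simply adding these is that one point could serve both a vertical line and a horizontal segment. I would remove this interference by strengthening the second bound: I will produce $h$ pairwise disjoint horizontal segments \emph{none of which crosses any vertical line}. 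A point hitting such a segment has $x$-coordinate strictly inside a gap between consecutive vertical lines, hence lies on no vertical line; together with the $v$ vertical lines this yields $v+h$ objects, pairwise non-co-hittable, giving $\mathrm{OPT}\ge v+h$.

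Since $h=\sum_c h_c$ decomposes over the horizontal lines (heights) $c$, and a point at height $c$ affects only the segments there, it suffices to prove a per-height statement: at each height $c$ there are $h_c$ pairwise disjoint horizontal segments, each lying entirely within a single gap between consecutive vertical lines. The vertical lines split the height-$c$ line into gaps; a free segment (crossing no vertical line) lies within one gap, and free segments in distinct gaps are automatically disjoint, so the claim is that a maximum disjoint family drawn only from free segments attains the full size $h_c$. The Stage 2 hypotheses translate exactly here: condition (a) says deleting the segments through any one vertical line leaves the stabbing number at $h_c$, and condition (b) says the same for any two vertical lines.

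The main obstacle is precisely this per-height claim: the hypotheses constrain only the effect of one or two vertical lines, whereas a free maximum disjoint family must avoid \emph{all} of them at once, and a priori many vertical lines at one height could conspire. I would settle it by an exchange argument. Take a maximum disjoint family $D$ of height-$c$ segments, of size $h_c$, chosen to minimize the number of its members that cross a vertical line, and suppose some $Z\in D$ crosses one. By disjointness the neighbors of $Z$ in $D$ leave slack around $Z$; I would try to replace $Z$ by a free segment (or one crossing strictly fewer lines) inside that slack, contradicting minimality of the crossing count. If no such replacement exists, then the members of $D$ near $Z$ certify that a single vertical line, or a pair of adjacent lines spanned by $Z$, is \emph{essential} -- deleting the segments through it strictly lowers the stabbing number -- contradicting condition (a) or (b). The interval geometry (a crossing segment must span a whole gap, forcing $D$ to ``waste'' room a free segment could reuse) is what bridges from the single/pair hypotheses to all vertical lines at once; this is where the real work lies.

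Carrying this out at every height yields $\sum_c h_c = h$ pairwise disjoint, vertical-line-avoiding horizontal segments. Adjoining the $v$ vertical lines produces the desired certificate of $v+h$ objects no two of which share a common hitting point, and hence $\mathrm{OPT}\ge v+h$.
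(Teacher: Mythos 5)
Your certificate framing is sound: $h$ does equal the maximum number of pairwise disjoint horizontal segments, the problem does decompose by height, and the lemma is indeed equivalent to your per-height claim that some maximum disjoint family at each height consists entirely of ``free'' segments (segments crossing no vertical line). The gap is in the exchange argument you propose for that claim, which, as you yourself note, is where all the work lies. The fallback step --- ``if no such replacement exists, then \ldots a single vertical line, or a pair of adjacent lines spanned by $Z$, is essential, contradicting condition (a) or (b)'' --- is not a valid implication, and you give no argument for it. Concretely, take one vertical line at $x=10$ and height-$c$ segments $A''=[0,1]$, $A=[0,5]$, $C=[4,9.5]$, $Z=[9,11]$, $B=[12,13]$. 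Here $h_c=3$, and conditions (a) and (b) hold (the free family $\{A'',C,B\}$ has size 3), yet the maximum disjoint family $D=\{A,Z,B\}$ contains the crossing segment $Z$ and admits no local replacement: the only input segment that fits in the slack between $A$ and $B$ is $Z$ itself, since $C$ overlaps $A$. So ``no local swap for $Z$'' is perfectly consistent with (a) and (b) holding and with no line being essential; what it actually reveals is only that $D$ is not crossing-minimal, and certifying \emph{that} requires rearranging several members at once (replace $A$ by $A''$ to make room for $C$ to replace $Z$), i.e., a cascading, augmenting-path-style exchange that your sketch does not supply. The bridge from the one-point/two-point hypotheses to all vertical lines simultaneously is therefore still missing.

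For comparison, the paper bridges this with a global minimal-counterexample argument phrased in terms of stabbing numbers rather than disjoint families: suppose some set of points on vertical lines decreases $h$; take such a set $\{q_1,\dots,q_K\}$ of minimum cardinality (its points necessarily lie on one horizontal line, and $K\ge 3$ by the Stage-2 conditions); the segments missed by $q_2$ split into those entirely left and entirely right of $q_2$, and stabbing is additive over this split, so either $q_1$ is redundant (contradicting minimality of $K$) or $\{q_1,q_2\}$ is a productive pair (contradicting condition (b)). If you want to keep your duality/certificate route, the cleanest repair is to prove your per-height claim by this same minimality-plus-splitting induction rather than by a single-member exchange; the purely local swap with a fallback to conditions (a)/(b) does not suffice.
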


\begin{proof}
The hit points we place on $V$ (one per line) might conceivably decrease $h$.
We claim that this cannot happen.
Assume to the contrary that it happens.
Let $\{q_1, q_2,\ldots, q_K\}$ be a minimum-cardinality set such that each of them is on some line of $V$ from left to right and $h$ is decreased after placing the set. Since the set is minimum, the points in it should be on a horizontal line $L$. 

Since we have found all productive points and pairs of points in stage 1, $K$ should be at least $3$.
Consider the hit point $q_2$. The segments on $L$ that are
not hit by $q_2$ are either completely left or right of $q_2$;   
let $H_l$
and $H_r$ be the corresponding sets.  Points to the left of $q_2$ 
do not hit $H_r$, and points to the right of $q_2$ do not hit $H_l$.
If adding $q_1$ decreases $H$, that means $q_1$ and $q_2$ is a productive pair, which should be found in stage 1;
otherwise this means that the point $q_1$ is unnecessary, contradicting the minimality of $K$. 
\qed
\end{proof}

The above lemma implies that for Stage 2 it suffices to select one point to hit each unhit vertical line and to independently find an optimal solution for hitting only the unhit horizontal segments.  As mentioned above, the latter can be solved in polynomial time.

\begin{theorem}\label{thm:5/3-approx-vlhs}
There is a polynomial-time $5/3$-approximation algorithm for geometric hitting set
for a set of vertical lines and horizontal segments.
\end{theorem}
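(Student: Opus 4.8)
The plan is to compare the algorithm's cost against two elementary lower bounds, $OPT \ge v_0$ and $OPT \ge h_0$, reinforced by a third lower bound that records the savings Stage~1 was unable to realize. Write $v_0 = |V|$ and $h_0$ for the initial values of the two bounds (the number of vertical lines, and the minimum number of points needed to stab all horizontal segments). First I would do the bookkeeping for the cost. Stage~1 places $k_1 + k_2$ points, and, as noted, this leaves a residual instance whose lower bounds have dropped to $v' = v_0 - k_1 - k_2$ and $h' = h_0 - k_1 - \tfrac12 k_2$. Since this residual instance satisfies the hypotheses of Lemma~\ref{lem:stage3-lb}, it is finished optimally with exactly $v' + h'$ points (one per surviving vertical line, together with an independent optimal stabbing of the surviving horizontal segments). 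Hence the total cost is
$$ A = (k_1+k_2) + (v'+h') = v_0 + h_0 - k_1 - \tfrac12 k_2. $$

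The crux — and the step I expect to be the main obstacle — is a third lower bound on $OPT$ that ``remembers'' the Stage-1 reductions. The key observation is that the residual (Stage~2) instance is a sub-instance of the original one: its objects form a subset of $V \cup H$. Restricting any global optimum to those of its points that hit at least one surviving object yields a feasible hitting set for the residual instance, of cardinality at most $OPT$; by Lemma~\ref{lem:stage3-lb} the residual optimum is exactly $v' + h'$, so
$$ OPT \ge v' + h' = v_0 + h_0 - 2k_1 - \tfrac32 k_2. $$
Intuitively, any reduction that the greedy Stage~1 fails to find must instead be paid for by $OPT$ itself; making this precise through sub-instance monotonicity (global optimum dominates the clean residual optimum) is the heart of the argument, and the one place where the specific structure guaranteed by Lemma~\ref{lem:stage3-lb} is essential.

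Finally I would combine the three inequalities. Rearranging the third gives $2k_1 + \tfrac32 k_2 \ge v_0 + h_0 - OPT$, and since $k_1 \ge 0$ we have the elementary inequality $k_1 + \tfrac12 k_2 \ge \tfrac13\bigl(2k_1 + \tfrac32 k_2\bigr)$. Chaining these,
$$ k_1 + \tfrac12 k_2 \ge \tfrac13\bigl(v_0 + h_0 - OPT\bigr), $$
so that
$$ A = v_0 + h_0 - \bigl(k_1 + \tfrac12 k_2\bigr) \le \tfrac23(v_0 + h_0) + \tfrac13 OPT. $$
Substituting $v_0 + h_0 \le 2\,OPT$ (from $OPT \ge v_0$ and $OPT \ge h_0$) then yields $A \le \tfrac53 OPT$. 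For the running time I would note that detecting Type~(a) and Type~(b) points and maintaining the interval-stabbing lower bound are polynomial, and that the residual step is the polynomial interval-stabbing routine of Section~\ref{sec:seg}; hence the whole algorithm is polynomial. I would also sanity-check tightness on the instance $v_0 = h_0 = OPT$, $k_1 = 0$, $k_2 = \tfrac23 OPT$, for which all three inequalities become tight and $A = \tfrac53 OPT$, confirming that this analysis cannot be pushed below $5/3$.
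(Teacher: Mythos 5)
Your proposal is correct and takes essentially the same approach as the paper: the identical cost accounting $A = v_0 + h_0 - k_1 - \tfrac12 k_2$ and the same three lower bounds $v_0 \le OPT$, $h_0 \le OPT$, and $v' + h' \le OPT$ (the paper's inequality~\eqref{eq:stage3-bound}, which it likewise obtains from Lemma~\ref{lem:stage3-lb} together with the sub-instance monotonicity you make explicit). The only difference is presentational: the paper finishes with a two-case analysis on whether $k_1 + k_2 \le \tfrac23\, OPT$, whereas you chain the three bounds in a single algebraic combination; both yield the factor $5/3$.
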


\begin{proof}
Let $v$ be the total number of vertical lines in the instance and $h$ be the minimum number of points required to hit only the horizontal segments in the instance.  The total number of points selected by our algorithm is $k_1 + k_2$ from the first stage and $h - k_1 - k_2/2 + v - k_1 - k_2$ from the second stage.  By Lemma~\ref{lem:stage3-lb}, the number of points chosen in Stage 2 is a lower bound on the cost of an optimal solution:
\begin{equation}\label{eq:stage3-bound}
h - k_1 - k_2/2 + v - k_1 - k_2 \leq OPT.
\end{equation}
We also have $h \leq OPT$ and $v \leq OPT$.  There are two cases:
\begin{enumerate}
\item[(i)] $k_1 + k_2 \leq 2/3\cdot OPT$: In this case we select at most $2/3\cdot OPT$ points in Stage 1, and we use~\eqref{eq:stage3-bound} to bound the number of points selected in Stage 2.  We conclude that our algorithm selects at most $5/3\cdot OPT$ points.
\item[(ii)] $k_1 + k_2 > 2/3 \cdot OPT$: The total number of points selected by our algorithm is  $h - k_1 - k_2/2 + v \leq 2\cdot OPT - (k_1 + k_2/2)$.  Since $k_1 + k_2/2 \geq k_1/2 + k_2/2 > 1/3 \cdot OPT$, we obtain a 5/3-approximation in this case as well.
\end{enumerate}
\qed
\end{proof}

\begin{theorem}
There is a polynomial-time $5/3$-approximation algorithm for geometric hitting set
for a set of vertical (downward) rays and horizontal segments.
\end{theorem}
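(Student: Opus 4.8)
The plan is to mirror the two-stage algorithm and analysis used to prove Theorem~\ref{thm:5/3-approx-vlhs}. The essential observation is that a set $V$ of vertical downward rays and a set $H$ of horizontal segments interact combinatorially in the same way as vertical lines and horizontal segments: on each horizontal line the task of hitting the segments is an interval-stabbing problem, and the only structural difference from the line case is that a horizontal segment fails to cross a ray precisely when the segment lies strictly above the ray's apex. This merely deletes certain ray--segment incidences; it never creates a new incidence nor a new way to hit two objects with one point. Consequently the two lower bounds carry over verbatim: $v=|V|$ is a lower bound because the rays are parallel, so no point hits two of them, and $h$, the minimum number of points needed to hit $H$ alone (computed independently on each horizontal line), is a lower bound as well, computable in polynomial time.

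First I would run the identical Stage~1, placing Type~(a) points (each hitting one ray and reducing both $h$ and $v$ by one) and Type~(b) pairs (two points on two rays on a common horizontal line, jointly reducing $h$ by one), letting $k_1$ and $k_2$ count them; as before, $h$ drops by $k_1+k_2/2$ and $v$ drops by $k_1+k_2$. In Stage~2 I would place one point on each surviving ray and then solve the residual horizontal interval-stabbing problem optimally and independently. Given the analog of Lemma~\ref{lem:stage3-lb} for rays, the residual instance satisfies $OPT \ge v+h$, so Stage~2 is optimal for what it covers, and the two-case bound from the proof of Theorem~\ref{thm:5/3-approx-vlhs} (splitting on whether $k_1+k_2 \le \tfrac{2}{3}\, OPT$) then yields a $5/3$-approximation with no change in the arithmetic.

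The crux, and the step I expect to require the most care, is re-establishing Lemma~\ref{lem:stage3-lb} for downward rays: after Stage~1 has exhausted all productive single points and productive pairs, placing one point on each remaining ray cannot decrease $h$. I would argue by contradiction exactly as in the line case. Any minimal witness set $\{q_1,\dots,q_K\}$ of ray points that lowers $h$ must lie on a single horizontal line $L$, and each such $q_i$ lies on a ray whose apex is at height at least that of $L$ (otherwise the point could not sit on the ray at height $L$); thus every ray involved extends down through $L$ and behaves, along $L$, exactly like a vertical line. The left/right splitting of the segments of $L$ around $q_2$ into $H_l$ and $H_r$, and the resulting dichotomy that $q_1$ either forms a productive pair with $q_2$ (which Stage~1 would have found) or is redundant (contradicting minimality of $K$), then go through unchanged. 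The one thing to double-check is that the restricted vertical extent of the rays cannot manufacture a decrease of $h$ that was not already captured as a Type~(a) or Type~(b) move; since the witnessing points all sit at the single height $L$ on rays reaching $L$, their incidence pattern with the segments of $L$ is identical to that of full vertical lines, so no such new phenomenon arises and the lemma holds.
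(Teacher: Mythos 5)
Your proposal fails on inputs that contain \emph{collinear} rays, and the missing observation is precisely the one the paper's proof consists of. You assert that ``the rays are parallel, so no point hits two of them,'' but this is false for downward rays: two distinct downward rays on the same vertical line are nested, and any point on the one with the lower apex hits both. Concretely, take $k$ collinear downward rays and no horizontal segments: an optimal hitting set has size $1$, yet your claimed lower bound $v=|V|=k$ is wrong, and your Stage~2 (one point per surviving ray) outputs $k$ points, a ratio of $k$ rather than $5/3$. So as written, both the algorithm's accounting (Type~(a) points reducing $v$ by exactly one) and the analysis collapse. The paper's proof of this theorem is essentially a single sentence fixing exactly this: among any set of collinear downward rays, keep only the one with the lowest apex, since it is contained in all the others and hitting it hits them all; after this preprocessing, distinct rays really are pairwise disjoint, $v \le OPT$ holds, and the proof of Theorem~\ref{thm:5/3-approx-vlhs} goes through with ``line'' replaced by ``ray.''

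The rest of your argument is sound and, in fact, more detailed than the paper's. Your re-derivation of the analogue of Lemma~\ref{lem:stage3-lb} --- observing that any minimal witness set lies on a single horizontal line $L$, that a witness point on a downward ray at the height of $L$ forces the ray's apex to be at or above $L$, and that such rays therefore interact with the segments of $L$ exactly as vertical lines do --- is correct; the paper leaves all of this implicit. So the repair is a one-line preprocessing step, but because your proposal explicitly relies on the false disjointness claim and its algorithm provably misbehaves without deduplication of collinear rays, this counts as a genuine gap rather than a cosmetic omission.
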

\begin{proof}
The 2-stage approximation algorithm described above works for this case as well. The key observation 
is that among any set of collinear downward rays, we may remove all but the one with the lowest apex from the instance, and we obtain a proof for this case by replacing ``line'' with ``ray'' in the proof above.
\qed
\end{proof}

\section{Hitting Pairs of Segments}
\label{sec:pai}
We consider now the hitting set problem for inputs that are {\em unions} of two
segments, one horizontal and one vertical.  While we are motivated by pairs (and larger sets) of segments that 
form paths, our methods apply to general pairs of segments, which might
meet to form an ``L'' shape, a ``$+$'', or a ``T'' shape, or they may
be disjoint.
This hitting set problem is NP-hard, since it generalizes the
case of horizontal and vertical segments.

\begin{theorem}\label{thm:4-approx-pair-vlhs}
For objects that are unions of a horizontal and a vertical segment,
the hitting set problem has a polynomial-time 4-approximation.
\end{theorem}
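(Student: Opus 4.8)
The plan is to use LP-rounding, as the later "pairs" results in this paper do. Each input object $o_i$ is the union $H_i\cup V_i$ of a horizontal segment $H_i$ and a vertical segment $V_i$, and a point hits $o_i$ if it lies on $H_i$ or on $V_i$. First I would fix a finite ground set $Q$ of candidate hit points, namely the $O(n^2)$ points obtained by intersecting every horizontal line supporting some $H_i$ with every vertical line supporting some $V_i$, together with all segment endpoints. A standard snapping argument shows that some optimal integral hitting set is supported on $Q$: a hit point that is responsible for a vertical incidence has its $x$-coordinate pinned to a vertical line, one responsible for a horizontal incidence has its $y$-coordinate pinned, and a point responsible only for horizontal incidences can be slid along the common overlap of the segments it hits to a canonical endpoint. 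On the ground set $Q$ I would write the natural covering LP, with a variable $x_p\in[0,1]$ for each $p\in Q$, objective $\min\sum_p x_p$, and one constraint $\sum_{p\in H_i\cup V_i} x_p\geq 1$ per object. Let $x^*$ be an optimal fractional solution, so that $\sum_p x^*_p\leq OPT$.

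The rounding step splits the objects according to which of their two segments carries most of the fractional weight. For each object set $a_i=\sum_{p\in H_i}x^*_p$ and $b_i=\sum_{p\in V_i}x^*_p$; since $a_i+b_i\geq\sum_{p\in H_i\cup V_i}x^*_p\geq 1$, at least one of $a_i,b_i$ is at least $1/2$. Call the object \emph{horizontal} if $a_i\geq 1/2$ and \emph{vertical} otherwise. I would then solve two independent interval-hitting instances: one hitting the horizontal segments of all horizontal objects, the other hitting the vertical segments of all vertical objects. The union of the two resulting point sets is a feasible hitting set for the whole instance, since every object has at least one of its segments hit.

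To bound the cost I would exploit integrality of interval hitting (the folklore result of Section~\ref{sec:seg}): for segments of a single orientation the point--segment incidence matrix has the consecutive-ones property, so the hitting LP is integral, even when the segments lie on many parallel lines (the matrix is then block-diagonal). For the horizontal instance, the scaled vector $2x^*$ restricted to points lying on horizontal segments is \emph{feasible}, because each horizontal object now receives horizontal coverage $2a_i\geq 1$, and its cost is $2\sum_{p\in Q_H}x^*_p$, where $Q_H$ is the set of points of $Q$ on some horizontal segment. By integrality there is an integral horizontal hitting set of at most this cost, and symmetrically an integral vertical hitting set of cost at most $2\sum_{p\in Q_V}x^*_p$. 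Since $\sum_{p\in Q_H}x^*_p+\sum_{p\in Q_V}x^*_p=\sum_{p\in Q_H\cup Q_V}x^*_p+\sum_{p\in Q_H\cap Q_V}x^*_p\leq 2\sum_{p}x^*_p$, the combined cost is at most $4\sum_p x^*_p\leq 4\,OPT$.

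The step I expect to require the most care is this final accounting: the factor $4$ is the product of two separate factors of $2$ --- one from the fractional domination $\max(a_i,b_i)\geq 1/2$, which forces the scaling by $2$, and one from the fact that a single candidate point may simultaneously act as a horizontal hitter for one object and a vertical hitter for another, so its LP mass can be charged once in each of the two orientation subproblems (the overlap $Q_H\cap Q_V$). The other delicate point is making the discretization precise enough that both $OPT$ and the LP are supported on the common ground set $Q$, and that interval-hitting integrality genuinely applies to each restricted subinstance; the remaining arithmetic is routine and generalizes directly to the $(k\cdot r)$-approximation for objects consisting of $k$ segments drawn from $r$ orientations.
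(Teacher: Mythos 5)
Your proposal is correct and follows essentially the same route as the paper's proof: solve the natural covering LP over a polynomial set of candidate points, classify each object as horizontal or vertical according to which of its two segments carries fractional weight at least $1/2$, solve the two single-orientation subproblems exactly via integrality of interval hitting (consecutive ones / total unimodularity), and obtain the factor $4$ from the same two factors of $2$ (the scaling by $2$ and the double-charging of points that serve both orientations). The only cosmetic differences are that the paper caps the scaled solution via $y^*_p=\min\{2x^*_p,\,1\}$ (which also disposes of the upper-bound constraints $x_p\leq 1$ that your uncapped $2x^*$ could technically violate) and phrases the double-charging through an explicitly decoupled LP with variables $x_{p,h},x_{p,v}$, whereas you bound $\sum_{p\in Q_H}x^*_p+\sum_{p\in Q_V}x^*_p$ directly.
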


\begin{proof}
For ease of discussion, we call the union of two segments an ``L''.
We use a method similar to those used
in~\cite{carr2000approximation,gaur2000constant}. 

Briefly, we do the following. Solve the natural set-cover linear
programming (LP) relaxation.  Create two new problems: one that has
only the horizontal piece of some of the Ls and another that has only
the vertical pieces of the remaining Ls.  Place an L into the vertical
problem if the LP vertical segment has value at least $1/2$, and into
the horizontal problem otherwise. Solve the two new problems in
polynomial time using the combinatorial method for the 1D problem, or
solving the LPs, which are totally unimodular, and thus will return
integer solutions.  Take all the points selected by either new
problem.  We prove that these points are a $4$-approximation.

In more detail, suppose we have $l$ unions of segments as described
above, and let $P$ be the set of points serving as our potential
hitters.  We assume that $|P|$ is polynomial in $l$ by preprocessing
the instance, if necessary, so that we only consider points at
endpoints and crossings of segments.  For each such union $i$, we let
$S_i$ be the set of points covering the union, while $H_i$ and $V_i$
are the sets of points covering the horizontal segment and vertical
segment respectively.  We employ the standard set cover linear program
(LP) relaxation specialized to our problem:
\begin{alignat}{3}
\notag \text{min } \sum_{p \in P} x_p\\
\label{eq:LP-cover} \sum_{p \in H_i} x_p + \sum_{q \in V_i} x_q &\geq 1, \ \forall \ 1 \leq i \leq l\\
\notag 0 \leq x_p &\leq 1, \ \forall\ p \in P.
\end{alignat}

We use an optimal LP solution, $x^*$, to construct a new instance of
the problem in which each union contains either a vertical segment or
a horizontal segment, but not both.  This new instance is easier to
approximate but no longer provides a lower bound on the original
optimum value, $OPT$; however, we show that it provides a lower bound that
is within a constant factor of $OPT$.

For each union of segments $i$, we set 
\begin{equation}\label{eq:LP-filtering}
S'_i =
\begin{cases}
H_i,\ \text{if } \sum_{p \in H_i} x^*_p \geq 1/2\\ 
V_i,\ \text{otherwise}.
\end{cases}
\end{equation}
Now each $S'_i$ corresponds to either a horizontal or vertical
segment.  Let $H' = \{i \mid S'_i \text{ represents a horizontal
segment}\}$, and let $V' = \{1,\ldots, l\} \setminus H'$.  Our
algorithm is as follows:
\begin{enumerate}
\item Solve the LP, and let $x^*$ be an optimal solution.
\item Construct $S'_i$ for each union of segments $i$ as described above. 
\item Solve the hitting set problems for all the horizontal segments, $H'$, 
and all the vertical segments, $V'$, independently.  Return the union
of the points, $X$ selected by optimal solutions to each instance.
\end{enumerate}

This algorithm returns a feasible solution since it selects some point
in $S'_i \subseteq S_i$ for each union of segments.  The first two
steps run in polynomial time. Hitting segments of a single orientation
is solvable in polynomial time; in fact the corresponding set cover LP
relaxation in this case has the consecutive ones property and is
totally unimodular, hence the optimum LP value equals the optimum
integer solution value.

To see that it is a 4-approximation, let $y^*_p = \min\{2x_p^*,\ 1\}$
for all $p$.  By \eqref{eq:LP-cover} and \eqref{eq:LP-filtering} we
see that the fractional vector $y^*$ is feasible for the LP instance
defined by the segments corresponding to the $S'_i$.  Now we modify
the latter LP instance by taking each point $p$ and replacing the
variable $x_p$ with variables $x_{p,h}$ and $x_{p,v}$, where $x_{p,h}$
appears only in horizontal segment constraints where $x_p$ formerly
appeared, and $x_{p,v}$ appears only in such vertical segment
constraints.  The resulting LP decouples the horizontal and vertical
segments and captures precisely the problem from Step 3 of the
algorithm.  Since this LP is totally unimodular, we have that the
number of chosen points, $|X|$, is at most the cost of any feasible
fractional solution.  In particular we see that the fractional vector
$z^*$ with $z^*_{p,h} = z^*_{p,v} = y^*_p$ is feasible for the
decoupled LP, and so:
\begin{equation}\label{eq:decoupled-LP-bound}
|X| \leq \sum_{q} z^*_q = 2\sum_{p} y^*_p.
\end{equation} 
To obtain our desired result we note that $\sum_{p} y^*_p \leq
2\sum_{p}x^*_p$ by the definition of $y^*$, yielding $|X| \leq
4\sum_{p} x^*_p \leq 4\cdot OPT$ by \eqref{eq:decoupled-LP-bound}.
\qed
\end{proof}
The above idea naturally extends to a 4-approximation for the weighted
version of the problem.  For unions consisting of at most $k$ segments
drawn from $r$ orientations, the approach yields a $(k\cdot
r)$-approximation.

The LP-rounding technique in the proof above was introduced by Carr et
al.~\cite{carr2000approximation} to obtain a 2.1-approximation for the
weighted edge-dominating set problem.  A similar idea was introduced
independently by Gaur et al.~\cite{gaur2000constant} to obtain a
2-approximation for stabbing axis-aligned rectangles with horizontal
and vertical lines.  By using the approach above in conjunction with
our approximation algorithm for Theorem~\ref{thm:5/3-approx-vlhs}, we
obtain an improved approximation factor in the case that the vertical
segments are lines.  Before describing this result, we need a slightly
stronger version of Theorem~\ref{thm:5/3-approx-vlhs}:

\begin{lemma}\label{lem:5/3-LP-approx-vlhs}
There is a polynomial-time 5/3-approximation algorithm for hitting a
set of vertical lines and horizontal segments that always returns a
solution of cost within 5/3 that of an optimal solution to the natural
set cover LP relaxation.
\end{lemma}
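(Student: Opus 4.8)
The plan is to keep the two-stage algorithm of Theorem~\ref{thm:5/3-approx-vlhs} exactly as is and only sharpen its analysis, replacing the three places where the integer optimum $OPT$ is used as a lower bound by the optimum $OPT_{LP}$ of the natural set-cover LP relaxation. Recall that the analysis rests on (i) $v\le OPT$, (ii) $h\le OPT$, and (iii) the Stage~2 bound of Lemma~\ref{lem:stage3-lb}, namely $v+h\le OPT$ on the residual instance, as used in~\eqref{eq:stage3-bound}. If each of these survives with $OPT$ replaced by $OPT_{LP}$, then the identical two-case arithmetic ($k_1+k_2\le \tfrac23 OPT_{LP}$ versus $k_1+k_2>\tfrac23 OPT_{LP}$) yields a solution of cost at most $\tfrac53 OPT_{LP}$.

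Bounds (i) and (ii) are immediate. For (i), summing the covering constraints $\sum_{p\in\mathrm{col}_j}x_p^*\ge 1$ over the $v$ vertical lines and using that each candidate point lies on at most one (parallel) vertical line shows $v\le\sum_p x_p^*=OPT_{LP}$. For (ii), dropping the vertical-line constraints from the full LP yields a relaxation whose optimum is at most $OPT_{LP}$; this reduced program is an interval-stabbing (consecutive-ones, totally unimodular) LP, so its optimum equals its integral optimum $h$, giving $h\le OPT_{LP}$. (Since the upper bounds $x_p\le 1$ only increase $OPT_{LP}$, it suffices throughout to lower-bound the pure covering LP without them, and a bound against that weaker value implies the bound against the natural LP.)

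The substantive step is an LP analog of Lemma~\ref{lem:stage3-lb}: for any Stage~2 instance (one in which no single point and no productive pair of points on vertical lines decreases $h$), $OPT_{LP}\ge v+h$. I would prove this by exhibiting a feasible solution of value $v+h$ to the packing dual, which maximizes $\sum_j\alpha_j+\sum_i\beta_i$ subject to $\sum_{j:\,p\in\mathrm{col}_j}\alpha_j+\sum_{i:\,p\in\mathrm{seg}_i}\beta_i\le 1$ for every point $p$, with $\alpha,\beta\ge 0$. Set $\alpha_j=1$ on all $v$ vertical lines, and $\beta_i=1$ on a maximum-cardinality family $F$ of $h$ pairwise point-disjoint horizontal segments, chosen so that no member of $F$ crosses any vertical line; set all other duals to $0$. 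Feasibility then holds at every point: a point on a vertical line lies on no segment of $F$, so its constraint reads $1\le 1$, while a point off all vertical lines meets at most one segment of $F$, so its constraint reads (at most) $1\le 1$. This dual has value $v+h$, and since the residual instance has fewer constraints than the original, we obtain $v+h\le OPT_{LP}(\text{residual})\le OPT_{LP}$, exactly matching~\eqref{eq:stage3-bound}.

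The one real obstacle is producing the family $F$: a maximum disjoint family of horizontal segments of full size $h$ that avoids every vertical line. This is precisely where both Stage~2 hypotheses enter, and I expect it to be the heart of the argument. Working one horizontal line at a time, I would show that if every maximum disjoint family on some line were forced to contain an interval crossing a vertical line, then one could extract either a single crossing point that decreases that line's piercing number (hence $h$) or a productive pair of such points that does so together, either of which contradicts the fact that Stage~1 exhausted all $h$-decreasing points and pairs before Stage~2 began. This is a direct structural mirror of the contradiction argument in the proof of Lemma~\ref{lem:stage3-lb}, now repurposed to certify disjointness of the chosen family from the vertical lines rather than merely to count hit points.
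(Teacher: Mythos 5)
Your proposal is correct, and its skeleton---keep the two-stage algorithm of Theorem~\ref{thm:5/3-approx-vlhs} and upgrade the three bounds $v\le OPT$, $h\le OPT$, and the Stage-2 bound of Lemma~\ref{lem:stage3-lb} to bounds against the LP optimum---is exactly the paper's; your arguments for $v\le LP^*$ (summing the disjoint line constraints) and $h\le LP^*$ (dropping them and invoking total unimodularity of the interval LP) coincide with the paper's. The genuine difference is in how the Stage-2 inequality $v+h\le LP^*$ is certified. You build an explicit feasible solution to the packing dual: weight $1$ on each vertical line and on a maximum family $F$ of $h$ pairwise-disjoint segments chosen to avoid every vertical line, then apply weak duality. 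The paper argues primally: it removes every point lying on a vertical line together with all segments hit by such points, cites the proof of Lemma~\ref{lem:stage3-lb} to assert that this deletion does not lower the horizontal piercing number, and then splits $\sum_p x^*_p$ into the mass on points lying on vertical lines (at least $v$, from the line constraints) and the mass on the remaining points (at least $h$, by integrality of the residual interval LP). By interval duality these are the same fact in primal and dual clothing---``the segments untouched by the lines still need $h$ points'' is equivalent to ``there exist $h$ pairwise-disjoint segments avoiding the lines''---and both proofs defer the combinatorial heart to the same place: the contradiction argument inside Lemma~\ref{lem:stage3-lb}, which rules out productive sets of points on vertical lines of size $\ge 3$ (the one case Stage 1 does not already exclude). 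To make your existence claim for $F$ airtight, you would phrase it exactly that way: if on some horizontal line no full-size disjoint family avoids the vertical lines, then by interval duality the set of all crossing points of vertical lines with that horizontal line is productive, so a minimal productive subset exists, and Lemma~\ref{lem:stage3-lb}'s argument forces its size to be at most $2$, contradicting the Stage-2 hypothesis. What your route buys is a slightly more self-contained, combinatorially explicit certificate (no second appeal to total unimodularity for the final bound); what the paper's route buys is economy, reusing the TU machinery it already set up to prove $h\le LP^*$.
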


\begin{proof}
Given an instance of geometric hitting set over vertical lines and
horizontal segments, let $LP^*$ be the optimum value achieved by the
natural set cover LP relaxation.  We show that the algorithm used to
establish Theorem~\ref{thm:5/3-approx-vlhs} satisfies stronger
versions of the bounds used in the proof of
Theorem~\ref{thm:5/3-approx-vlhs}:
$$
h \leq LP^*,\, v \leq LP^*,\,\text{and } h - k_1 - k_2/2 + v - k_1 -
k_2 \leq LP^*.
$$
Since the vertical lines are disjoint, by summing the corresponding LP
constraints, we see that $\sum_p x_p \geq v$ for any feasible $x$.
Taking $x$ to be an optimal solution, $x^*$, we have that $LP^*
= \sum_p x^*_p \geq v$.  As noted before, the natural set cover LP
relaxation is totally unimodular in the case of hitting only
horizontal segments.  Thus, by dropping the constraints corresponding
to the lines from the LP, we conclude that $LP^* \geq h$.

For the final bound, we need to show that for the type of instance
obtained by our 5/3-approximation in Stage 2, $LP^*$ is equal to
$OPT'$, the optimum size of a hitting set.  Lemma~\ref{lem:stage3-lb}
shows that for such instances, $OPT' = v' + h'$, where $v'$ and $h'$
are the individual vertical and horizontal lower bounds for the
instance.

Consider a collection of collinear horizontal segments from a Stage-2
instance, and remove all points that lie on some vertical line along
with all the horizontal segments hit by such points.  The proof of
Lemma~\ref{lem:stage3-lb} shows that such a deletion does not increase
the optimal number of points required to hit such an instance.  Hence,
appealing to the integrality of such LP instances when dropping the
vertical line constraints, we have that $\sum_{p \in P' \setminus
P'_V} x_p \geq h'$, where $P'$ is the set of points of a Stage-2
instance, and $P'_V$ is the set of points that lie on some vertical
line.  Considering only the vertical line constraints, as above, gives
us $\sum_{p \in P'_V } x_p \geq v'$.  Together, these inequalities
yield the desired bound, $\sum_{p \in P'} x_p \geq h' + v'$.

We substitute these bounds in the proof of
Theorem~\ref{thm:5/3-approx-vlhs} to conclude that our algorithm
selects at most $5/3\cdot LP^*$ points instead of $5/3\cdot OPT$
points.
\qed
\end{proof}

Using similar methods and the above lemma, we also have the following:

\begin{theorem}
\label{thm:8}
For objects that are unions of a horizontal segment and a vertical
line, the hitting set problem has a polynomial-time
10/3-approximation.
\end{theorem}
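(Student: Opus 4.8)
The plan is to combine the LP-filtering scheme from the proof of Theorem~\ref{thm:4-approx-pair-vlhs} with the stronger, LP-relative guarantee of Lemma~\ref{lem:5/3-LP-approx-vlhs}, exploiting the fact that after filtering we are left with exactly a ``vertical lines and horizontal segments'' instance. First I would set up the natural set-cover LP relaxation~\eqref{eq:LP-cover} for the objects (each now the union of a horizontal segment and a vertical line), solve it to obtain an optimal fractional solution $x^*$ of value $LP^* \leq OPT$, and filter each object $i$ by the rule~\eqref{eq:LP-filtering}: if $\sum_{p \in H_i} x^*_p \geq 1/2$ keep only its horizontal segment, otherwise keep only its vertical line. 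This produces a filtered instance consisting of horizontal segments (from $H'$) together with vertical lines (from $V'$), which is precisely the input type handled by Lemma~\ref{lem:5/3-LP-approx-vlhs}.

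Next I would run the $5/3$-approximation of Lemma~\ref{lem:5/3-LP-approx-vlhs} on this filtered instance and return its output $X$; feasibility for the original problem is immediate, since each selected point hits the kept piece of its object, which is contained in the object. The analysis rests on the doubling vector $y^*_p = \min\{2x^*_p,\, 1\}$. I would verify that $y^*$ is feasible for the LP of the filtered instance: for an object kept as a horizontal segment, either some single $x^*_p \geq 1/2$ (so $y^*_p = 1$) or every $x^*_p < 1/2$ on that segment (so $\sum y^*_p = 2\sum x^*_p \geq 2\cdot\tfrac12 = 1$), and symmetrically for a kept vertical line, since in both cases the LP mass on the kept piece was at least $1/2$. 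Hence the filtered LP optimum satisfies $LP^*_{\mathrm{filt}} \leq \sum_p y^*_p$.

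Finally I would chain the bounds. Because Lemma~\ref{lem:5/3-LP-approx-vlhs} guarantees a solution within $5/3$ of the filtered LP optimum, and $\sum_p y^*_p \leq 2\sum_p x^*_p = 2\,LP^*$, we obtain
\begin{equation*}
|X| \;\leq\; \tfrac{5}{3}\, LP^*_{\mathrm{filt}} \;\leq\; \tfrac{5}{3}\sum_p y^*_p \;\leq\; \tfrac{5}{3}\cdot 2\,LP^* \;=\; \tfrac{10}{3}\,LP^* \;\leq\; \tfrac{10}{3}\,OPT,
\end{equation*}
as required. The crux of the argument---and the reason one cannot simply substitute Theorem~\ref{thm:5/3-approx-vlhs}---is that the \emph{integral} optimum of the filtered instance is not a lower bound on the original $OPT$, because filtering discards half of each object; the filtering loss can only be charged against the fractional solution. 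This is exactly why I would invoke the LP-relative guarantee of Lemma~\ref{lem:5/3-LP-approx-vlhs} rather than its plain $OPT$-relative counterpart, so that the factor $2$ from $y^*$ and the factor $5/3$ from rounding compose multiplicatively against the common yardstick $LP^*$.
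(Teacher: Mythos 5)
Your proposal is correct and matches the paper's own proof essentially step for step: the same LP filtering rule~\eqref{eq:LP-filtering}, the same doubled vector $y^*_p = \min\{2x^*_p,1\}$ shown feasible for the filtered instance, and the same invocation of the LP-relative guarantee of Lemma~\ref{lem:5/3-LP-approx-vlhs} to chain $|X| \leq \tfrac{5}{3}\sum_p y^*_p \leq \tfrac{10}{3}\,OPT$. Your closing remark about why the LP-relative (rather than $OPT$-relative) $5/3$ bound is essential is exactly the point the paper's Lemma~\ref{lem:5/3-LP-approx-vlhs} exists to address.
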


\begin{proof}
Our algorithm is essentially the same as the 4-approximation of Theorem~\ref{thm:4-approx-pair-vlhs}, with a different last step:
\begin{enumerate}
\item Solve the LP, and let $x^*$ be an optimal solution.
\item Construct $S'_i$ for each union $i$ of a horizontal segment and a vertical line. 
\item Now each $S'_i$ is either a horizontal segment or a vertical line, and we find a feasible solution $X$ for this instance using our 5/3-approximation. 
\end{enumerate}

We construct $y^*$ just as in the proof of our 4-approximation; however, now we observe that $y^*$ is feasible for the set cover LP relaxation for the instance defined by the $S'_i$.  This is just an instance of hitting horizontal segments and vertical lines, and so:
$$|X| \leq 5/3\cdot LP^* \leq 5/3\cdot \sum_p y^*_p.$$ 
Since $\sum_{p} y^*_p \leq 2\sum_{p}x^*_p$ as before, we have that $|X| \leq 10/3 \cdot \sum_p x^*_p \leq 10/3 \cdot OPT$ as desired.
\qed
\end{proof}

\section{Hitting Triangle-Free Sets of Segments}
\label{sec:triangle-free}
We consider now the problem in which the $n$ input segments $S$ are
allowed to cross or to share endpoints, but not to overlap (i.e., the
intersection of any two input segments is not a non-zero length
segment -- it is either empty or a single point).

Let $G=(V,E)$ denote the (planar) {\em arrangement graph}, $G(S)$, induced by the
segments $S$; thus, $G$ has vertex set $V$ equal to the set of all
endpoints or crossing points of $S$ and has edge set $E$ of $m=|E|$ edges joining each
pair of vertices that appear consecutively along a segment of~$S$.

We assume that $G$ is {\em triangle-free}, meaning that it has no
cycle of length 3 (i.e., its {\em girth} is at least 4).  It is well
known that a planar triangle-free graphs must have a vertex of degree
at most 3.  (For completeness, we provide the proof: In a
triangle-free planar graph (having $n$ nodes, $e$ edges, and $f$
faces), each face has at least 4 edges bounding it.  The sum of the
number of edges bounding each of the faces is simply $2e$, and in a
triangle-free graph must be at least $4f$; thus, $2e\geq 4f$.  By
Euler's formula ($f-e+n=1+c$, for $c\geq 1$ connected components), we
get $2e\geq 4(1+c+e-n)\geq 4(2+e-n)$, implying that $e\leq 2n-4$.  The
sum of the vertex degrees is exactly $2e$ and is thus at most $4n-8$;
thus, not all vertices have degree 4 or more -- there must be a vertex
of degree at most~3.)

In this section we give a linear-time 3-approximation algorithm for
computing a hitting set of points that hit all of the segments of $S$,
assuming that the arrangement graph $G(S)$ is triangle-free and given.  (If $G(S)$ is not
given, we can compute $G(S)$ from $S$ in time $O(m+ n\log n)$, using, e.g., the algorithm of Balaban~\cite{b-oafsi-95}.)  Our
approximation factor of 3 matches that obtained recently by Joshi and
Narayanaswamy~\cite{joshi2014approximation}; however, their algorithm
employs linear programming, while ours is a simple, combinatorial
linear-time ($O(m)$) algorithm.

Our algorithm is the following clipping/shortening process: 

\begin{description}
\item[(i)] 
Pick a vertex $v\in V$ of degree at most 3 (it will necessarily
be a segment endpoint); such a vertex must exist, by the triangle-free
property.

\item[(ii)] 
Remove the vertex $v$, and shrink the incident segments with
endpoint $v$ to the next adjacent vertex.
(In particular, if $v$ is a T-junction, where two of the edges
incident to $v$ lie on a common segment, then only the one segment
with endpoint at $v$ is shrunk, leaving the other two edges
connected.)

\item[(iii)]  
When shortening a segment $s$ results in segment $s$ becoming
a single point (vertex), $u$, establish a hitting point at $u$ and
remove all segments that pass through~$u$.
\end{description}

The following invariants hold at any stage of the process:

\begin{description}
\item[(1)] 
There is at most one remaining subsegment of an input segment
(i.e., the portion of an original segment $s$ that remains is
connected).

\item[(2)] 
All segments that have been removed are hit by the hitting points
that have been established.

\item[(3)] 
Any hitting set of the remaining segments, together with the
established hitting points already found, forms a hitting set for the
original set of input segments.

\item[(4)] 
The graph $G$ remains triangle-free during the process.
\end{description}

The invariants imply that the set of points computed by the algorithm
is a valid hitting set.  The following lemma establishes the
approximation factor:

\begin{lemma}
The number of hitting points established by the algorithm is at
most 3 times the number, $|H|$, of points in any hitting set $H$ for
$S$.
\end{lemma}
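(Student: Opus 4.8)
The plan is to take the validity of the output for granted from invariants (1)--(4) and to prove only the cardinality bound via a single clean geometric claim. Call a segment a \emph{collapsing segment} if step~(iii) removes it at the moment it has shrunk to a single point, and let $C$ be the set of all collapsing segments. By step~(iii) a hit point is established only when some segment collapses, and the collapse location is always ``fresh'': any point that already carries an established hit point has had all segments through it removed, so nothing can later collapse there. Hence each established point maps injectively to a segment of $C$ collapsing at it, and the number of established points is at most $|C|$. It therefore suffices to prove $|C|\le 3|H|$.

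The heart of the argument is the following: \emph{no point of the plane lies on more than three segments of $C$.} I would prove this by contradiction. Suppose four collapsing segments $s_1,s_2,s_3,s_4$ all pass through a common point $q$; since four segments meet there, $q$ is a vertex of the (current) arrangement of degree at least $4$. Recall that the algorithm removes a vertex only when its degree is at most $3$, and that when it establishes a hit point it deletes \emph{every} segment through that point. Now track how a segment can cease to contain $q$: either (a) it collapses exactly at $q$, or (b) its surviving endpoint is shrunk past $q$. Option (b) requires the endpoint to first reach the vertex $q$ and then $q$ itself to be removed, which can happen only once the degree of $q$ has dropped to at most $3$. But as long as all four of $s_1,\dots,s_4$ still contain $q$, the degree of $q$ is at least $4$, so (b) is impossible for any of them. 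Thus the first of the four to leave $q$ must do so via (a): it collapses at $q$, at which instant a hit point is established at $q$ and all segments still through $q$ -- in particular the other three -- are deleted while still full segments. Those three are then not collapsing segments, contradicting the assumption that all four lie in $C$.

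Granting the claim, the bound is immediate. Let $H$ be any hitting set for $S$; then $H$ hits every segment, in particular every segment of $C$, and by the claim each point of $H$ lies on at most three segments of $C$. Hence the $|C|$ collapsing segments can be covered by $H$ only if $|C|\le 3|H|$, and combining with the first paragraph gives that the number of established points is at most $|C|\le 3|H|$, as required.

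I expect the main obstacle to be the structural claim, and within it the ``escape'' analysis of option (b): one must argue that a segment cannot drop $q$ from its interior except through a removal of the degree-bounded vertex $q$. This is exactly where the T-junction smoothing of step~(ii) must be treated with care -- smoothing deletes $q$ as a graph vertex but does \emph{not} make a passing segment stop containing the point $q$ geometrically -- and where the degree-$\le 3$ deletion rule, together with invariant~(4) keeping such low-degree vertices available, does the real work. A minor point to dispatch is that several segments may collapse simultaneously at one location; this only lowers the number of established points and is harmless.
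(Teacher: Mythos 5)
Your proof is correct, but it takes a genuinely different route from the paper's. The paper argues dynamically, via a charging scheme: it places tokens on the points of $H$, replaces a token by at most $3$ clones (one per incident segment) whenever its vertex is clipped, lets the clones slide along with the shrinking endpoints, and observes that the tokens/clones remain a hitting set for the current arrangement, so every collapse point carries at least one of the at most $3|H|$ tokens/clones. You instead factor the bound through the set $C$ of collapsing segments: established points inject into $C$, and the static structural claim that no point of the plane lies on more than three members of $C$ (with their original extents) gives $|C|\le 3|H|$. Your proof of that claim holds up: while four members of $C$ all still contain $q$, the point $q$ is a vertex of degree at least $4$, so it can never be selected for clipping, and since a shrink step only moves an endpoint to the next adjacent vertex, no endpoint can jump past the vertex $q$; hence the first of the four to lose $q$ must collapse at $q$, whereupon the other three are deleted at positive length and so are not in $C$ --- and non-overlap of the input rules out two segments collapsing at $q$ in the same step, since they would have to share their last edge. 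Both arguments ultimately rest on the same two design features (the degree-at-most-$3$ clipping rule and the T-junction convention of step (ii)), but your decomposition buys a clean static invariant --- in fact the number of established points equals $|C|$ exactly --- at the cost of the first-departure and simultaneity case analysis, which the paper's sliding-token argument sidesteps by maintaining feasibility of $H$ throughout the whole process.
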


\begin{proof}
Place tokens on the vertices $H$ and consider running the
clipping/shortening process on $G$, with the following actions on the
tokens.

When there is a token on the vertex $v$ that is about to be clipped,
replace the token with at most 3 clones of it, one on each of the
segments that meet at $v$, allowing each clone to slide along
with the endpoint of a clipped segment $s$ as the segment is shrunk,
leaving the clone at a new vertex $u$, the new endpoint of
segment $s$. 
(There might also be a token at $u$ already; we allow two or more tokens/clones
to accumulate at a vertex.)
We never clone a clone; if a clone associated with a segment $s$ 
exists at a vertex $v$ that is
being clipped, it remains on segment $s$, and slides along it as it shrinks.  
Thus, associated with each point of $H$ there is either a single token
or up to 3 clones of the token (but not both).

This ensures that the tokens/clones continue to hit all segments, at
all stages of the clipping/shortening process.  
(Here, we are using the degree-3 property, which allows us to make sure that
two edges incident on $v$ that lie on the same segment $s$ are not cut apart at
$v$ in our process; thus, a point of $H$ that lies on $s$ continues to hit the 
shrunk version of segment $s$.  If we had split $s$ at $v$, with no point of $H$ at $v$,
then no clones are generated at $v$, and the point(s) of $H$ on segment $s$ may no longer
be a valid hitting set for the new arrangement after splitting $s$ at $v$.)
In particular, when a segment shrinks to a point $u$, there is at
least one token/clone present there.  Thus, the number of hitting
points established by our algorithm is at most $3|H|$, for any hitting
set $H$ of $S$.  Letting $H$ be an optimal hitting set, we get that
the number of hitting points produced by the algorithm is at most 3
times optimal.  \qed
\end{proof}

\begin{theorem} \label{thm:3-approx}
The algorithm yields a 3-approximation and runs in time
$O(m)$, where $m$ is the number of edges in the original (planar)
arrangement graph $G$.
\end{theorem}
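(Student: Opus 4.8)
The plan is to establish the two claims of Theorem~\ref{thm:3-approx} separately: the approximation factor has already been handled by the preceding lemma, so the real work is the running-time analysis, namely that the entire clipping/shortening process can be implemented in $O(m)$ time. First I would observe that the algorithm performs a sequence of vertex-removal steps, and since each step eliminates at least one vertex (the degree-$\le 3$ vertex $v$ that is clipped), the total number of steps is $O(|V|)=O(n)$, which is $O(m)$. The crux is therefore to show that each primitive operation can be supported in (amortized) constant time: finding a vertex of degree at most $3$, performing the shrink/removal of incident segments, detecting when a segment collapses to a single point, and establishing the hitting point together with the removal of all segments passing through that point.

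The key data-structural step I would carry out is to maintain, alongside the arrangement graph $G$, a bucket or doubly-linked list of all vertices of current degree at most $3$. By the triangle-free planar argument recalled in the text ($e\le 2n-4$, so the average degree is below $4$), this list is always nonempty while vertices remain, so picking $v$ in step~(i) is $O(1)$. When $v$ is clipped in step~(ii), only the constantly many ($\le 3$) edges incident to $v$ are touched, and the only degrees that change are those of $v$'s neighbors; each such neighbor can be moved into or out of the low-degree list in $O(1)$ time. I would emphasize that the total work charged to degree updates telescopes: since the sum of all degrees is $2e=O(m)$ and each edge is created once (at preprocessing) and destroyed once, the aggregate cost of all incidence manipulations over the whole run is $O(m)$.

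Next I would address step~(iii), the collapse of a segment to a point $u$ and the removal of all segments through $u$. The subtlety is that a single hitting point may lie on many segments, so a naive scan could appear superlinear; the resolution is a charging argument, namely that every segment is removed at most once over the entire execution, so the total cost of all such removals is bounded by the number of segment--vertex incidences, which is $O(m)$ in the arrangement graph. Maintaining for each vertex the (at most constant, by triangle-freeness, but in any case amortizable) list of edges incident to it lets the collapse step touch only objects it is about to delete, so each unit of work is paid for by a distinct edge deletion.

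The main obstacle I expect is the bookkeeping for invariant~(1) together with the T-junction rule in step~(ii): when $v$ is a T-junction we must shrink only the single segment whose endpoint is at $v$ while leaving the two collinear edges joined, so the implementation must know, for each edge incident to $v$, which original segment it belongs to and whether the two flanking edges are collinear continuations. I would handle this by tagging each edge with a segment identifier at preprocessing and, at each clip, merging the two surviving collinear edges into one in $O(1)$ via the linked-list representation; verifying that this merge preserves triangle-freeness (invariant~(4)) and does not create spurious adjacencies is the delicate point, but it follows because the merge only reconnects edges already lying on a common line through $v$. Assembling these pieces, every one of the $O(n)$ steps costs amortized $O(1)$ plus work charged to distinct edge deletions, giving the overall $O(m)$ bound and completing the proof.
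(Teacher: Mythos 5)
Your proposal is correct and takes essentially the same approach as the paper: the approximation factor is delegated to the preceding token/clone lemma, and the $O(m)$ bound comes from maintaining the planar arrangement graph in a standard structure (the paper simply cites a DCEL) so that degree queries and clipping operations are constant-time, with all remaining work amortized against edge deletions. The paper's proof is a single sentence declaring this ``immediate,'' and your low-degree bucket, segment-identifier tags, and charging arguments are precisely the details it leaves implicit (modulo two harmless slips: $|V|$ can be $\Theta(n^2)$, not $O(n)$, though it is $O(m)$ as you need; and triangle-freeness does not bound vertex degrees by a constant, though your amortization covers this anyway).
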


\begin{proof}
Immediate, since we only have to maintain the graph $G$ in a standard
planar network data structure (e.g., the Doubly Connected Edge List
(DCEL)~\cite{bcko}) that allows us to know vertex degrees and perform
elementary operations in constant time.  \qed
\end{proof}

\section{Conclusion}
\label{sec:conclusion}

We have given a variety of new hardness and approximation results for
geometric hitting sets involving lines, rays, and segments from a
small number of discrete orientations.  We have also given a
linear-time combinatorial algorithm that yields a 3-approximation for
hitting triangle-free sets of non-overlapping segments in the plane,
matching the approximation factor recently obtained
by~\cite{joshi2014approximation} using linear programming methods.

We note that our methods apply as well to yield the same results
(lower bounds, approximation bounds) for the more general setting in which
``segments,'' ``rays,'' and ``lines'' are given as subsets of families
of disjoint pseudoline curves, with each disjoint family playing the
role of an ``orientation'' of lines.  (The pseudoline property
requires that any two pseudoline curves that intersect (are not
``parallel'') do so in a single point of intersection, where they
cross.)

Natural open questions ask if any of these approximation bounds can be
improved.  Notably, we believe that the trivial 2-approximation for
hitting segments of two orientations can be improved.  Another
direction for future research is fixed parameter tractable (FPT)
algorithms; for some recent related work, see \cite{DBLP:conf/compgeom/AfshaniBDN16}.  Finally, we are interested in optimal coverage versions
of these problems in which, e.g., one desires a smallest cardinality
set of line segments, rays, or lines, from a small number of
orientations, in order to cover a given set of points.

\begin{acknowledgements}
This work is supported by the Laboratory Directed Research and
Development program at Sandia National Laboratories, a multi-program
laboratory managed and operated by Sandia Corporation, a wholly owned
subsidiary of Lockheed Martin Corporation, for the U.S. Department of
Energy's National Nuclear Security Administration under contract
DE-AC04-94AL85000.
J. Mitchell acknowledges support from the US-Israel Binational Science Foundation (grant 2010074) and the National
Science Foundation (CCF-1018388, CCF-1526406).
\end{acknowledgements}

\end{document}